\newif\iflong
\newtheorem{theorem}{Theorem}[section]
\newtheorem{corollary}[theorem]{Corollary}
\newtheorem{lemma}[theorem]{Lemma}
\newtheorem{proposition}[theorem]{Proposition}
\theoremstyle{definition}
\newtheorem{definition}{Definition}
\theoremstyle{remark}	
\newcommand{\poly}{\texttt{poly}}
\newcommand{\quotes}[1]{``#1''}
\DeclareMathOperator*{\E}{\mathbb{E}}
\DeclarePairedDelimiter\floor{\lfloor}{\rfloor}
\newcommand{\LabelCover}{\textsc{Label Cover}}
\def\al    {\alpha}
\def\be    {\beta}
\def\opt {{\sf OPT}}
\def\hopset {{\sc Generalized $\beta$-Hopset}}
\def\jt {{\sc Min Density $(i,j)$-Junction Tree}}
\def\ljt {{\sc Min Density Length-Bounded Junction Tree}}
\title{Approximation Algorithms for Optimal Hopsets}
\author{Michael Dinitz\thanks{Johns Hopkins University.  Supported in part by NSF award 2228995.} \and Ama Koranteng\footnotemark[1] \and Yasamin Nazari\thanks{Vrije Universiteit Amsterdam.  This publication is part of the project VI.Veni.232.038 of the research programme ``Dynamic graph algorithms: distances and clustering'' which is financed by the Dutch Research Council (NWO).}}
\date{}
\begin{document}

\maketitle

\begin{abstract}
  For a given graph $G$, a \emph{hopset} $H$ with hopbound $\beta$ and stretch $\alpha$ is a set of edges such that between every pair of vertices $u$ and $v$, there is a path with at most $\beta$ hops in $G \cup H$ that approximates the distance between $u$ and $v$ up to a multiplicative stretch of $\alpha$. Hopsets have found a wide range of applications for distance-based problems in various computational models since the 90s. More recently, there has been significant interest in understanding these fundamental objects from an existential and structural perspective. 
  But all of this work takes a worst-case (or existential) point of view: How many edges do we need to add to satisfy a given hopbound and stretch requirement for \emph{any input graph}?
  
  We initiate the study of the natural optimization variant of this problem: given a \emph{specific graph instance}, what is the minimum number of edges that satisfy the hopbound and stretch requirements? We give approximation algorithms for a generalized hopset problem which, when combined with known existential bounds, lead to different approximation guarantees for various regimes depending on hopbound, stretch, and directed vs. undirected inputs.
  We complement our upper bounds with a lower bound that implies Label Cover hardness for directed hopsets and shortcut sets with hopbound at least $3$. 
\end{abstract}

\section{Introduction}

A hopset $H$ with hopbound $\beta$ and stretch $\alpha$ for a given (directed or undirected) graph $G$ is a set of (possibly weighted) edges such that between every pair of vertices $u$ and $v$ in $G$ there is a path with at most $\beta$ hops in $G \cup H$ that approximates the distance between $u$ and $v$ up to multiplicative stretch $\alpha$ (our results hold for a more generalized version of the problem formally defined in Definition \ref{def:gen_hopset}). A related object is shortcut sets, which preserve \emph{reachability} via low-hop paths, rather than distances.
Hopsets were formally introduced 25 years ago by Cohen~\cite{cohen2000}, and they were used to compute approximate single-source shortest paths in undirected graphs in parallel settings.  More recently, they have been shown to be useful for a variety of different problems and settings, and have been studied extensively.  Examples of these applications include low parallel depth single-source shortest paths algorithms \cite{klein1997, cohen2000, miller2015, elkin2019RNC}, distributed shortest-paths computation \cite{elkin2019journal, elkin2017, censor2021}, and dynamic shortest-paths \cite{bernstein2011,henzinger2014, gutenberg2020, chechik2018, LN2022} with implications for fast static flow-based algorithms \cite{madry2010, bernstein2021deterministic, chen2022maximum} and dynamic clustering \cite{cruciani2024dynamic}, faster construction of related objects such as distributed or massively parallel distance sketches \cite{elkin2017, dinitz2019, DM24}, parallel reachability \cite{ullman1990high}, and work-efficient algorithms for reachability and directed shortest paths \cite{cao2020efficient, cao2020improved, cao2023exact,fineman2018nearly,jambulapati2019parallel}.

In addition to their wide range applications, hopsets and shortcut sets are also studied as fundamental objects in their own right. There has been a surge of recent work that, rather than focusing on running time, focuses on finding the best \emph{existential} (extremal) upper or lower bounds for hopsets and shortcut sets \cite{KP22,BW23, hesse2003directed, huang2021lower, kogan2023towards,  williams2024simpler}. 
Namely, the main goal is to find the smallest value $\gamma(n,\beta, \alpha)$ such that \emph{every} graph $G$ on $n$ nodes admits a hopset with hopbound $\beta$ and stretch $\alpha$.

Another line of work has explored the connections between hopsets and many other fundamental objects such as spanners \cite{BP2020, abboud2018}, emulators \cite{elkin2020survey, huang2019}, and distance preservers \cite{KP2022hope, hoppenworth2025, bodwin2023bridge}.

The two main existing lines of work in this direction are on finding efficient algorithms for constructing hopsets in various computational models for specific applications, or on existential bounds and structural properties.
Such results are very useful, since they give us a \emph{worst-case bound} on how many edges we must add to an arbitrary graph if we want a hopset of a certain quality and how fast can this be done for our particular application.  The focus in these results has been on improving these existential bounds and developing fast algorithms for a variety of different settings and parameters; see Section \ref{sec:related_work} for a discussion of known results of this form.

A complementary type of problem with a very different flavor is \emph{optimization}: \emph{given} a graph $G$, hopbound $\beta$, and stretch bound $\alpha$, can we design an algorithm to efficiently find the smallest hopset \emph{for $G$} (rather than in the worst case over all graphs)?  If the minimization problem is NP-hard, then can we \emph{approximate} the smallest hopset for $G$?  Note that the existential and optimization versions of these problems are complementary: good existential bounds guarantee that no matter the graph $G$, there will be a reasonably small hopset; conversely, good optimization results guarantee that we can find an approximately minimum hopset, even if the best hopset for $G$ is significantly smaller than for the worst-case graph.  An existential bound might convince someone to use a hopset for some application---since they know they will never need to add too many edges---but once they commit to using a hopset for that application, they might naturally want to find the best hopset for their particular input. Additionally, in many distributed and parallel settings, adding hopset edges can be seen as a preprocessing step (that can take considerable time), after which (approximate) distance queries can be performed in fewer distributed/parallel rounds (often corresponding to the hopbound). Thus in such cases, we may be willing to spend more preprocessing time in order to add the fewest number of edges.

This natural complementarity between the existential and optimization versions has led to significant study of \emph{both} versions for a number of related objects, most notably graph spanners (subgraphs that approximately preserve distances).  For spanners, while the vast majority of work has been on the existential questions (see, e.g., \cite{ADDJS93} for the fundamental tradeoff between stretch and size), there has also been significant work on the optimization versions (see, e.g., \cite{KP94,KP98,Kor01,EP07,DK11,BBMRY11,DKR16,DNZ20,GKL23,CD16,CDKL20,GLQ21,CDK12,CDR19}).  Similarly, there has also been work on optimization versions of other related objects such as reachability preservers\cite{abboud2024reachability}, 2-hop covers \cite{CHHZ2003}, and diameter reduction\cite{demaine2010minimizing}.

Despite both the recent importance of hopsets and the extensive study of optimization for related objects, the optimization version of hopsets has not yet been considered.  In this paper we initiate this line of research, introducing these optimization variants and proving both upper and lower bounds on their approximability. 

\subsection{Our Results and Techniques}
There are many variants of this problem, divided along three main axes: whether the graph is directed or undirected, what the desired hopbound $\beta$ is, and what the required stretch is (and in particular whether it is arbitrary or whether it is in some particularly nice regime like $1+\epsilon$ for small $\epsilon$ or $2k-1$ for integer $k$).  A summary of our results for these variants can be found in Table~\ref{tab:results}.

\begin{table}[h]
    \centering
    \begin{tabular}{|c|c|c|c|c|}
        \hline
        \textbf{Undirected/Directed} & \textbf{Hopbound $\beta$} & \textbf{Stretch} & \textbf{Approximation} & \textbf{Theorem} \\ \hline

        Directed & $\widetilde{O}(n^{2/5})$  & Individual & $\widetilde{O}(\beta^{1/3} \cdot n^{2/3 + \epsilon'})$ & \ref{thm:small_be_dir_gen} \\ \hline
        
        Directed & $\widetilde{\Omega}(n^{2/5})$ & Individual & $\widetilde{O}(n^{1+\epsilon'}/ \sqrt{\vphantom{\beta^K} \be} \, )$ & \ref{thm:big_be_dir_gen} \\ \hline

        Directed & $\geq 20\log{n}$ & $1+\epsilon$ & $\widetilde{O}(n^{3/4 + \epsilon'} \cdot \epsilon^{-\frac{1}{4}})$ & \ref{thm:dir_eps}  \\ \hline
        
        Directed & 2 & Individual & $O(\ln{n})$ & \ref{thm:2hop-main} \\ \hline

        Undirected & $\widetilde{O}(n^{\frac{1}{2} - \frac{1}{2\eta}})$ & $1+\epsilon$ & $\widetilde{O}(\sqrt{ \be} \cdot n^{\frac{1}{2} + \frac{1}{2\eta} + \epsilon'})$ & \ref{thm:undir_eps} \\ \hline

        Undirected & $\widetilde{O}(k^{-1/2} \cdot n^{\frac{1}{2} - \frac{1}{2k}})$ & $2k-1$ & $\widetilde{O}(\sqrt{k \be} \cdot n^{\frac{1}{2} + \frac{1}{2k} + \epsilon'})$ & \ref{thm:undir_gen_stretch}  \\ \hline

    \end{tabular}
    \caption{Comparison of hopset results, where $\epsilon' > 0$ is an arbitrarily small constant, $\epsilon \in (0,1)$, and $\eta > \be^{1/(\ln{\ln{\be}}-\frac{1}{2}\ln{\ln{\ln{\be}}})}$. For $\be \geq 3$, $\eta \geq 6$. ``Individual'' stretch means arbitrary distance bounds, possibly different for each individual demand pair. All results are for pairwise demands, with edge lengths being nonnegative integer and upper bounded by $\texttt{poly}(n)$. }
    \label{tab:results}
\end{table}

\begin{table}[h]
    \centering
    \begin{tabular}{|c|c|c|c|c|c|}
        \hline
        \textbf{Demands} & \textbf{Edge Lengths} & \textbf{Edge Costs?} & \textbf{Stretch} & \textbf{Approximation} & \textbf{Paper} \\ \hline
        Pairwise & Arbitrary & Y & 2   & $O(\ln n)$   & \cite{DK11}   \\ \hline
        All-Pairs & Arbitrary & N & Arbitrary & $\widetilde{O}(n^{1/2})$   & \cite{BBMRY11}   \\ \hline
        Pairwise & Integer, $\leq \texttt{poly}(n)$ &  Y  & Individual & $\widetilde{O}(n^{4/5+\epsilon})$  & \cite{GKL23}  \\ \hline
        Pairwise & Vector & Y  & Indiv. Vectors & $\widetilde{O}(|D|^{1/2+\epsilon} \cdot \tau^{m-1})$  & \cite{GKL24}  \\ \hline
    \end{tabular}
    \caption{Comparison of related results for directed spanners. The result of~\cite{GKL24} allows for \textit{edge resource consumption vectors} as opposed to edge lengths. $D$ is the set of demands, $\epsilon > 0$ is an arbitrarily small constant, $\tau$ is the largest resource budget, and $m$ is the size of the resource consumption vectors.  ``Individual'' stretch means arbitrary distance bounds, possibly different for each individual demand pair. All variants require nonnegative edge lengths.  }
    \label{tab:spanners}
\end{table}

\subsubsection{Upper Bounds}
While this list may appear a bit complex, it turns out that almost all of the results are generated by trading off several approximation algorithms that we design and analyze with the known existential bounds for the given regime.  In particular, we consider three main approximation algorithms: one based on rounding an LP relaxation, one based on randomly sampling stars, and one based on defining and analyzing an appropriate variant of junction trees \cite{GKL23, GKL24, CDKL20}.  Instead of just analyzing the approximation ratio of each algorithm as a function of $n$ (the traditional point of view), we analyze the approximation ratios as functions of $n$, $OPT$, $\beta$ and the \emph{local neighborhood size}~\cite{DK11,BBMRY11}.  Importantly, none of these algorithms have performance that depends on whether the graph is undirected or directed, or what the allowed stretch is.

To get the results in Table~\ref{tab:results}, we trade these three algorithms off not only with each other based on the described parameters, but also with the known existential bounds.  Unsurprisingly, there are different existential bounds known for each setting.  These different existential bounds are what lead to each of the different rows of Table~\ref{tab:results}.

We also note that there is a simple reduction from hopsets to the multicriteria spanner problem, introduced by~\cite{GKL24}. For multicriteria spanners, instead of edges being associated with just edge lengths---as with traditional spanners---edges in the multicriteria spanner problem each have a resource consumption vector, where each entry corresponds to the consumption of the respective resource on that edge. The simple reduction to multicriteria spanners implies a $\widetilde{O}(|D|^{1/2 + \epsilon} \cdot \beta)$-approximation for hopsets, where $D$ is the set of vertex demand pairs given in the input (see Table~\ref{tab:spanners}).  

\paragraph{Technical Overview.}  Technical components of our main approximation algorithms have appeared before, in particular, in the literature on approximation algorithms for spanners.  The LP rounding algorithm we use is a variant of the one introduced for spanners by~\cite{BBMRY11}, our star sampling is a variant of the arborescence sampling used by~\cite{DK11}, and our junction tree algorithm is a variant of the junction trees used by~\cite{CDKL20} and~\cite{GKL23}.  But using them for hopsets involves overcoming a number of technical difficulties.

First, and most notably, the performance bound on arborescence sampling used by~\cite{DK11,BBMRY11} fundamentally depends on the fact that for traditional graph spanners, the required connectivity implies that $\opt \geq n-1$, and hence polynomial size bounds of the form $n^{\gamma}$ can be interpreted as $n^{\gamma-1} \cdot \opt$, i.e., as $n^{\gamma -1 }$-approximations.  But for hopsets, since we are adding edges rather than removing them, this is not the case: the only bound we have is the trivial bound of $\opt \geq 1$ (or else we are already done).  Hence bounds that are sublinear for spanners (e.g., the $O(n^{1/2})$-approximation of~\cite{BBMRY11}) turn into \emph{superlinear} approximation ratios if translated to hopsets in the obvious way.  These are still nontrivial bounds, in the sense that the trivial bound for hopsets is an $O(n^2)$-approximation  ($\opt \geq 1$ and we can just add all pairwise edges to get a solution of size $O(n^2)$), but superlinear approximation ratios are not particularly satisfying. This is why we need different algorithms for various regimes of $\opt$.
By combining the star sampling and randomized LP rounding we get an approximation bound that is comparatively better for cases where $\beta$ is small and $\opt$ is large. On the other hand, our junction tree algorithm performs better when $\opt$ is smaller.  

Second, while the LP relaxation that we use for our rounding algorithm is a natural variant of the standard spanner LP relaxation (see~\cite{DK11,BBMRY11,DNZ20}), the hopset variant turns out to be significantly more difficult to solve.  Most notably, solving the equivalent LP relaxations for spanners (either the flow version of~\cite{DK11} or the fractional cut version of~\cite{DNZ20}) requires using the ellipsoid method with a separation oracle for a problem known as Restricted Shortest Path, in which we are given two distance metrics (think of one metric as being our original edge lengths and one metric as being the fractional values given by the LP to edges) and are asked to find the shortest path in the second metric subject to a distance constraint in the first metric.  Unfortunately, Restricted Shortest Path is NP-hard, but it turns out that one can use approximation algorithms for Restricted Shortest Path to solve the LP up to any desired accuracy (see~\cite{DK11}).  But for our hopset LP, the equivalent separation oracle has \emph{three} metrics (the original edge lengths, the number of hops, and the LP values), and we need to find the shortest path in the third metric subject to upper bounds in the first two metrics.  So we design a PTAS for this more complex problem in order to solve our LP.

Third, for similar reasons our junction tree argument is more complex. Junction trees were originally introduced for network design problems where the only constraints are on connectivity, not distances; see, e.g., \cite{CEGS11, FKN12}.  By reducing to a ``layered'' graph, Chlamt\'a\v{c} et al.~\cite{CDKL20} showed that junction tree-based schemes are also useful in distance-constrained settings, and this layering technique was pushed significantly further by~\cite{GKL23}.  When trying to use these ideas for hopsets, though, we have the same difficulty as when solving the LP: we have essentially \emph{two} distance constraints for each demand, corresponding to the number of hops and the allowed distance.  To overcome this, we give a ``two-stage'' layering reduction, where we first construct a layered graph that allows us to get rid of the hop requirement without changing the distance requirements.  Then we can use the further layered graph of~\cite{GKL23} to get rid of the distance requirement, transforming it into a pure connectivity problem.  In fact, we can use~\cite{GKL23} as a black-box for this second step.

The black-box nature of our use of~\cite{GKL23} has an interesting corollary: even without caring about junction trees, the layered graph reduction we design will immediately let us use the known results for pairwise spanners \cite{GKL23} (by considering them on the \textit{weighted transitive closure}) to get a non-trivial approximation guarantee for our full problem. However this reduction introduces additional factors in $\beta$, which we improve by white-boxing their approach, which will in turn give us a better trade-off in combination with the other algorithms.

\subsubsection{Lower Bounds}

To complement our upper bounds, we show that (at least for directed graphs) we cannot hope to get approximation ratios that are subpolynomial.  In particular, we give an approximation-preserving reduction from the famous Label Cover problem (or more precisely, its minimization variant Min-Rep~\cite{Kor01}) to the problem of computing the smallest hopset in a directed graph with hopbound at least $3$, for any stretch value.  This gives the following theorem.

\begin{theorem} \label{thm:LB-main}
    Assuming that $NP \not\subseteq DTIME(2^{polylog(n)})$, for any constant $\epsilon > 0$, and for any $\beta \geq 3$, there is no polynomial-time algorithm that can approximate directed {\hopset} (for any stretch value; see Definition~\ref{def:gen_hopset}) or the minimum shortcut set on directed graphs with approximation ratio better than $2^{\log^{1-\epsilon} n}$.
\end{theorem}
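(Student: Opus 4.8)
The plan is to give an approximation-preserving reduction from \MinRep{} (the minimization version of \LabelCover) to the problem of computing a minimum-size hopset (or shortcut set) on a directed graph with hopbound $\beta \geq 3$, for an arbitrary fixed stretch value. Recall that a \MinRep{} instance consists of a bipartite graph whose vertex set is partitioned into ``supernodes'' on each side, together with a set of supernode-pairs (superedges) that need to be ``covered'' by choosing a small set of vertices such that every superedge has a chosen vertex on each side witnessing an actual edge between them; the hardness of \MinRep{} under $NP \not\subseteq DTIME(2^{\mathrm{polylog}(n)})$ gives a $2^{\log^{1-\epsilon} n}$ inapproximability gap (soundness vs.\ completeness). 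I would first handle the case $\beta = 3$ and then observe that padding gives all larger $\beta$.

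The core of the construction is a directed gadget graph $G$ on which the cheapest way to create short (at most $3$-hop) paths between a designated set of demand pairs is exactly to ``select'' vertices as in \MinRep{}. Concretely, for each superedge I would create a source vertex $s$ and a sink vertex $t$ and a demand pair $(s,t)$; the only way to get from $s$ to $t$ within $3$ hops in $G \cup H$ (for any hopset $H$ respecting the stretch requirement, which I will enforce by choosing edge weights so that the unique short route has the right length) is to route through one vertex $a$ on the left side of the corresponding superedge and one vertex $b$ on the right side, where $(a,b)$ is a genuine \MinRep{} edge. Adding a hopset edge directly from $s$ to $t$ must be made prohibitively expensive or simply impossible within the allowed stretch by inflating $w(s,t)$ appropriately; likewise long ``detour'' edges are ruled out by the hop budget of $3$. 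The upshot is that a hopset of size $O(k)$ corresponds to a \MinRep{} solution of size $O(k)$ and vice versa, up to constant factors, so the inapproximability ratio transfers. For the shortcut-set version the argument is the same but cleaner, since there is no stretch constraint and we only need reachability within $3$ hops; the same gadget works verbatim with unit (or irrelevant) weights.

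For general $\beta \geq 3$ I would reduce from the $\beta = 3$ case by a simple ``stretching'' of the gadget: replace each of the three critical hops with a path of roughly $\beta/3$ mandatory edges already present in $G$, so that any $\beta$-hop $s$–$t$ path is still forced to pass through the selection vertices $a$ and $b$, while the extra hop budget is entirely consumed by the forced subpaths and cannot be used to shortcut around the gadget. One must check that no single added hopset edge can span more than a controlled portion of such a padded path without violating either the hop budget or (in the hopset case) the stretch requirement; choosing the padding lengths and the weight of the forbidden long edges carefully makes this work. I expect the main obstacle to be exactly this robustness argument: ruling out \emph{all} ``clever'' uses of hopset edges — e.g., adding a few edges that collectively help many demand pairs in ways not corresponding to a \MinRep{} selection — so that an arbitrary small hopset can be decoded back into a comparably small \MinRep{} solution. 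This is the soundness direction of the reduction, and getting the gadget's interconnection structure (which left/right vertices are reachable from which sources/to which sinks) sufficiently ``rigid'' that the only cost-effective hopset edges are the intended ``selection'' edges is where the technical care lies; the completeness direction (a good \MinRep{} solution yields a small hopset) is routine.
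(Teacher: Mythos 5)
Your overall plan---an approximation-preserving reduction from Min-Rep (Label Cover) with a gadget per superedge that forces any cheap low-hop route to pass through a pair of ``selection'' vertices, padded to handle general $\beta$, plus the observation that the shortcut-set hardness transfers to hopsets of any stretch---is the same strategy the paper uses. However, there are two concrete gaps that would break the reduction as you have sketched it.

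\textbf{Per-superedge sources/sinks destroy the approximation correspondence.} You create a fresh source $s$ and sink $t$ and demand $(s,t)$ \emph{for each superedge}. But then the hopset/shortcut edges needed to settle that demand (e.g.\ $s\to a$ and $b\to t$) are usable only for that one demand, so both the intended solution and the optimum scale with the \emph{number of superedges}, not with the size of the Min-Rep cover. The gap between YES and NO instances of Min-Rep evaporates. The paper avoids this by attaching a single pair of ``outer'' nodes $(a_i^1,a_i^2)$ to each \emph{group} $A_i$ (and similarly $(b_j^2,b_j^1)$ for $B_j$), and the demands are the pairs $(a_i^1,b_j^1)$ over superedges $(i,j)$. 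A single canonical edge $(a_i^1,a)$ is then shared across all superedges incident to group $i$ that $a$ helps cover, which is exactly what makes a REP-cover of size $\gamma$ yield a shortcut set of size $\gamma$.

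\textbf{The ``inflate $w(s,t)$'' idea cannot rule out direct shortcut edges, and soundness needs a real normalization step.} In the generalized hopset problem the weight of a candidate edge $(u,v)$ is forced to be $d_G(u,v)$ (the weighted transitive closure); you do not get to set it. And for shortcut sets there is no stretch constraint at all, so nothing prevents an adversary from adding $(s,t)$ directly. The correct way out is not to \emph{forbid} such edges but to show they are never much \emph{better} than the intended ones: the paper proves a canonicalization lemma (Lemma~\ref{lem:canonical}) stating that any non-canonical shortcut edge can help at most one pair $(a_i^1,b_j^1)$ and hence can be replaced by two canonical edges, incurring only a factor-$2$ loss. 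This lemma relies on the rigidity that each $a_i^1$ has a unique out-edge (to $a_i^2$) and each $b_j^1$ a unique in-edge (from $b_j^2$), and that the padding $V_e$ is placed only on the \emph{middle} Min-Rep edge rather than spread across all three hops as you propose. Spreading the padding over the $s\to a$ and $b\to t$ segments would create partial shortcut edges inside those segments that are harder to account for. The paper's arrangement (slack of exactly two hops, both of which are the inner group-to-group hops, with all remaining $\beta-2$ hops on $P_e$) keeps the canonicalization argument clean and also guarantees that every $s$-$t$ path in $G'$ has the same length, so the reduction is simultaneously valid for hopsets of \emph{any} stretch.

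In short: the high-level idea is right, but you need the per-group outer-vertex structure and an explicit canonicalization lemma for soundness, rather than per-superedge sources/sinks and a weight-inflation argument.
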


Our reduction is quite similar to known hardness reductions used for spanners~\cite{Kor01,EP07,DKR16,CD16}.  The main difficulty is that these previous reductions, since they are for spanners, only have to reason about \emph{subgraphs} of the graph created by the reduction.  That is, given an instance of Min-Rep, they create some graph $G$ and argue that if the Min-REP optimum is large then any subgraph of $G$ that is a spanner must be large.  But for hopsets, not only are the edges of $G$ ``free,'' we also need to argue about hopset edges that \emph{are not} part of $G$.  This requires some changes in the reductions and analysis. The full reduction and details can be found in \iflong Section \else Appendix \fi \ref{app:hardness}.

\subsection{Related Work}\label{sec:related_work}

In earlier applications, sparse $(1+\epsilon)$-approximate hopsets for undirected graphs were used to obtain low parallel depth single-source shortest paths algorithms \cite{klein1997, cohen2000, miller2015, elkin2019RNC}. Similarly, fast hopset constructions for undirected graphs were studied in many other settings such as distributed \cite{elkin2019journal, elkin2017, censor2021, elkin2017, dinitz2019, DM24} and dynamic settings \cite{bernstein2011,henzinger2014, gutenberg2020, chechik2018, LN2022}.

Another line of work focuses on fast computation of hopsets for directed graphs and shortcut sets that preserve pairwise reachability, rather than distances, while reducing the diameter. These have gained attention particularly due to their application in parallel reachability and directed shortest path computation \cite{ullman1990high, cao2020efficient, cao2020improved, cao2023exact, fineman2018nearly,jambulapati2019parallel}.

More recently, hopsets and shortcut sets have been studied from an extremal (or existential) point of view.
In particular, for $(1+\epsilon)$-hopsets, there are almost (up to $1/\epsilon$ factors) matching upper bounds \cite{elkin2019journal, huang2019} and lower bounds \cite{abboud2018} in \emph{undirected graphs}. One trade-off that is widely used in $(1+\epsilon)$-approximate single-source shortest paths applications is that for any graph, there exists a $(1+\epsilon)$-hopset of size $n^{1+o(1)}$ with hopbound $n^{o(1)}$.  On the other hand, \cite{abboud2018} showed that there are instances in which this trade-off is tight. For larger stretch, better size/hopbound trade-offs are known.

In directed graphs there are polynomial gaps between existential lower bounds and upper bounds, both for approximate hopsets and shortcut sets. A widely used folklore sampling approach implies an $\widetilde{O}(n)$ size for exact hopset and shortcut sets, with hopbound $O(\sqrt{n})$ \footnote{The upperbounds for directed hopsets and shortcut sets give a smooth trade-off between size and hopbound. For simplicity, we discuss the important linear-size regime.}.
A breakthrough result of \cite{KP22} went beyond this long-standing bound by constructing shortcut sets of size $\widetilde{O}(n)$ with hopbound $\widetilde{O}(n^{1/3})$. 
Later, \cite{BW23} showed that the same upper bound also holds for directed $(1+\epsilon)$-hopsets (up to log factors in weights and aspect ratio). More on these existential bounds can be found in Section \ref{sec:existential}, where we trade them off with our approximation algorithms\iflong \else, and in Appendix~\ref{app:tradeoffs} \fi. Another line of work focused on obtaining subsequently better existential lower bounds for directed graphs \cite{hesse2003directed, huang2021lower, kogan2023towards,  williams2024simpler}. The current best known lower bound for a linear size shortcut set is $\widetilde{\Omega}(n^{1/4})$ \cite{BH23folklore, williams2024simpler}. 

Another recent result \cite{BH23folklore} showed that for \textit{exact hopsets} both in directed and undirected \textit{weighted} graphs, the folklore sampling (see Section \ref{sec:existential}) is existentially optimal, implying that the separation between approximate hopsets for directed and undirected graphs, does not hold for exact hopsets.

On the approximation algorithms side, most relevant to hopsets are the weighted pairwise spanner algorithms \cite{GKL23} (see Table~\ref{tab:spanners}), 2-hop covers \cite{CHHZ2003}, and multicriteria spanners \cite{GKL24} (Table~\ref{tab:spanners}). In particular, weighted pairwise spanners (like hopsets and unlike $k$-spanners) do not have $n$ as a trivial lower bound on $\opt$, and hence some of the difficulties encountered when designing algorithms are similar. Additionally, when we do not have distance bounds---that is, when stretch is $\infty$ for all demand pairs---weighted pairwise spanners captures the hopset problem.

Approximation algorithms for 2-hop covers~\cite{CHHZ2003}, hub labelings~\cite{CHHZ2003,BGGN17}, and 2-spanners \cite{KP94,DK11} in particular are closely related to our hopbound $2$ regime (see \iflong Section~\ref{sec:2hop}\else Appendix~\ref{app:2hop}\fi).  In all of these problems, the requirement of covering using 2-paths leads to algorithms that are essentially solving some variant of Set Cover.  Our situation is similar, so we can use similar techniques, but our Set Cover variant is slightly different from these other problems, most notably because we do not ``pay'' for edges that already exist in the graph. 

Multicriteria spanners were recently introduced by \cite{GKL24}; they also rely on a modification of the junction tree framework to give an approximation algorithm. While their junction tree framework works for our setting, we present a different framework tailored to hopsets that yields better approximations for our setting than their framework gives.
Other related problems are optimizations for other variants of spanners, including traditional directed $k$-spanner (unit edge costs, all-pairs demands) \cite{DK11, BBMRY11} (Table~\ref{tab:spanners}), buy-at-bulk spanners~\cite{bulkGKL24}, and transitive closure spanners \cite{BGJRW08, CJMN25}. In addition to transitive closure spanners, \cite{CJMN25} shows hardness of bicriteria approximation for shortcut sets (where they allow the hopbound to be violated).

\section{Preliminaries}
Going forward, we will generally operate on what we call the \textit{weighted transitive closure} of $G$, defined as follows. Let $d_G(s,t)$ denote the distance in $G$ from $s$ to $t$.

\begin{definition}[Weighted Transitive Closure] \label{def:trans_closure}
    The weighted transitive closure of a graph $G$, denoted by $G_M = (V, E_M)$, is the weighted graph obtained by first taking the transitive closure of $G$, then assigning weight $d_G(u,v)$ to each edge $(u,v)$ in the transitive closure. We use $\widetilde{E} = E_M \setminus E$ to denote the set of edges in the weighted transitive closure that are not provided in the input $G$. 
\end{definition}

Formally, we consider the following problem:

\begin{definition}[\hopset]\label{def:gen_hopset}
    Given a directed graph $G = (V, E)$, edge weights (or ``lengths'') $\ell : E \rightarrow \{1,2,3,...,\texttt{poly}(n) \}$, a set of vertex pairs $\mathcal D \subseteq V \times V$, and a distance bound function $Dist: \mathcal{D} \rightarrow \mathbb{N}_{\geq 0}$, find the smallest set of edges $H \subseteq \widetilde{E}$ such that for each vertex pair $(s,t) \in \mathcal{D}$, it is the case that $d_G(s,t) \leq d^{(\beta)}_{H \cup G}(s,t) \leq Dist(s,t)$. In other words, there must be an $s-t$ path $P$ in $H \cup G$ such that $|P| \leq \beta$ and $\sum_{e \in P} \ell(e) \leq Dist(s,t)$.
\end{definition}

If $G$ is directed, we say that the problem of interest is \textit{directed} {\hopset}; otherwise it is \textit{undirected} {\hopset}. Additionally, if for all $(s,t) \in \mathcal{D}$ we have that $Dist(s,t) = k \cdot d_G(s,t)$ for some $k$, then this is the \textit{stretch-$k$} {\hopset} problem. Note that {\hopset} is a generalized version of the traditional hopset problem: all of our results hold for \textit{any} set of vertex demand pairs, and many of our results hold for arbitrary distance bound functions.

We will use ${\opt}$ to refer to the cost of the optimal solution to the {\hopset} problem instance. That is, {\opt} will refer to the number of edges in the optimal hopset. A path is an \textit{$i$-hop path} if there are exactly $i$ edges on the path. We also say that a demand $(s,t) \in \mathcal{D}$ is \textit{settled} or \textit{satisfied} by a graph $G$ (or an edge set $E$) if there is a path from $s$ to $t$ in $G$ (in $E$) with at most $\be$ hops and distance at most $Dist(s,t)$. Otherwise, demand $(s,t)$ is considered \textit{unsettled} or \textit{unsatisfied}.

Note that the weighted transitive closure of a graph can be found in polynomial-time. When working in the weighted transitive closure, we will generally assign costs to the edges in $G_M$: edges in $E$ will have cost $0$, while edges in $\widetilde{E}$ will have cost $1$. It is easy to see that {\hopset} on $G$ is equivalent to finding a min-cost subgraph of $G_M$ that settles all demand pairs. We will use $c(F)$ to denote the \textit{cost} of an edge set $F$; namely, $c(F) = |F \cap \widetilde{E}|$ is the number of edges in $F$ not provided in the input. 
Finally, we will assume each edge $(u,v) \in E$ in the input is a shortest path from $u$ to $v$ in $G$.

\section{LP Relaxation} \label{sec:lp_relaxation}

In this section, we state and solve a cut-covering linear program (LP) for {\hopset}. A few of our approximation algorithms will randomly round the solution to this LP as a subroutine. 

Let $\mathcal{P}_{s,t}$ be the set of paths from $s$ to $t$ that have at most $\beta$ hops and path length at most $Dist(s,t)$. We will refer to these paths as ``allowed'' or ``valid'' paths. The natural flow LP for our problem requires building enough capacity to send one unit of (non-interfering) flow along valid paths for each demand; this is the basic LP used in essentially all network design problems, and was introduced for spanners by~\cite{DK11}.  An equivalent LP, which is what we will use for {\hopset}, is obtained through the duality between flows and fractional cuts (of valid paths).  This version of the LP for spanners was studied by~\cite{DNZ20}, and strengthens the anti-spanner LP of~\cite{BBMRY11}. 

In more detail, for a graph $G = (V,E)$, we say that an \quotes{$s-t$ cut against valid paths} is a set of edges $F$ such that in the graph $G \setminus F$, there are no valid paths from $s$ to $t$. In the cut covering LP we will use, the constraints will ensure that any feasible solution must \quotes{cover} every cut against valid paths; that is, any feasible solution must buy an edge in each of these cuts. This leads to our LP relaxation, which we call~\ref{lp:hopset}. The LP has a variable $x_e$ for every edge $e \in E_M$. Note that because edges in $E$---edges from the input graph---do not contribute to the cost of the solution, we can assume without loss of generality that $x_e = 1$ for all $e \in E$.  
Let $\mathcal{Z}_{s,t} = \{ \bm{\mathrm{z}} \in [0,1]^{|E|} : \forall P \in \mathcal{P}_{s,t} \; , \; \sum_{e \in P} z_e \geq 1 \}$ be the set of vectors representing all fractional cuts against valid $s-t$ paths. For each demand, our LP requires any feasible integral solution to buy at least one edge from every cut against valid paths.

\begin{equation} \label{lp:hopset} \tag{LP(Hopset)}
\fbox{$
\begin{array}{lll}
\min & \displaystyle \sum_{e \in \widetilde{E} } x_e \\
\mathrm{s.t.} & \displaystyle \sum_{e \in E_M} z_e x_e \geq 1 \qquad \qquad & \forall (s,t) \in \mathcal{D}, \, \forall \bm{\mathrm{z}} \in \mathcal{Z}_{s,t} \\
& \displaystyle x_e \geq 0 & \forall e \in E_M \\
\end{array}
$
}
\end{equation}

The two main differences between the $k$-spanner cut-covering LP and ours are 1) for spanners, valid paths are those that satisfy the stretch constraint, whereas for hopsets, we want paths that satisfy both distance and hop constraints, and 2) rather than a subgraph of $G$, we are interested in a subgraph of the \textit{weighted transitive closure} of $G = (V,E)$ (see Definition~\ref{def:trans_closure}). 

As written, \ref{lp:hopset} has an infinite number of constraints. Consider the polytope $\mathcal{Z}_{s,t}$ for some demand $(s,t)$. Due to convexity, we only need to keep the constraints that correspond to the vectors $\bm{\mathrm{z}}$ that form the vertices of of $\mathcal{Z}_{s,t}$. Since there are only an exponential number of these constraints, we can assume the LP has exponential size. \iflong We now quickly prove that the LP is a valid LP relaxation of {\hopset}. \else It is easy to see that~\ref{lp:hopset} is a valid LP relaxation of {\hopset}; see Appendix~\ref{app:lp} for a proof.  \fi

\iflong
\begin{lemma}
    Let $H$ be a feasible solution to {\hopset}. For every edge $e \in E_M$, let $x_e = 1$ if $e \in H$ or $e \in E$. Otherwise let $x_e = 0$. Then, \bm{\mathrm{x}} is a feasible integral solution to~\ref{lp:hopset}.
\end{lemma}
\begin{proof}
    Clearly, $x_e \geq 0$ for all $e \in E_M$, so it is left to show that the cut covering constraints are satisfied. For some demand $(s,t) \in \mathcal{D}$, consider some edge cut against valid $s-t$ paths, $C$. There is a vector $\bm{\mathrm{z}} \in \mathcal{Z}_{s,t}$ that serves as the indicator vector for cut $C$---specifically, there is a vector $\bm{\mathrm{z}} \in \mathcal{Z}_{s,t}$ such that for every edge $e \in C$, we have $z_e = 1$, and $z_e = 0$ otherwise. 
    
    Since $H$ is a feasible solution to {\hopset}, there must be some edge $e \in C \cap H$ (otherwise, there is no valid path from $s$ to $t$ in $H \cup E$, making $H$ infeasible). This also means $x_e = 1$ and  $z_e = 1$. Hence we have that for $\bm{\mathrm{z}}$, the constraint $\sum_{e \in E_M} z_e x_e \geq 1$ is satisfied.
\end{proof}

\begin{lemma}
    Let $\bm{\mathrm{x}}$ be a feasible integral solution to~\ref{lp:hopset}, and let $H = \{ e \; | \; x_e = 1, e \notin E\}$. Then $H$ is a feasible solution to {\hopset}.
\end{lemma}
\begin{proof}
    Suppose for contradiction there is some demand, $(s,t) \in \mathcal{D}$, that is not satisfied by $H$.  Then there is some cut $C$ against valid $s-t$ paths such that $H \cap C = \emptyset$ (and $x_e = 0$ for all $e \in C$). Since $C$ is a cut against valid $s-t$ paths, there is an indicator vector $\bm{\mathrm{z}} \in \mathcal{Z}_{s,t}$ such that for every edge $e \in C$, we have $z_e = 1$ ($z_e = 0$ otherwise). Therefore we have that $\sum_{e \in E_M} z_e x_e = 0 < 1$, violating the constraint in \ref{lp:hopset}, and thus the assumption that $\bm{\mathrm{x}}$ is feasible. Hence $H$ must be feasible.
\end{proof}
\fi

\subsection{Solving the LP}

Some of our algorithms for {\hopset} involve some flavor of random LP rounding, so in this section we will (approximately) solve~\ref{lp:hopset}, our LP of interest. The LP has an exponential number of constraints, so we must solve it using the ellipsoid algorithm with a separation oracle. To do so, we start by defining another LP, \ref{lp:oracle1}, that captures the problem of finding a violated constraint of~\ref{lp:hopset}. To solve~\ref{lp:oracle1}, we find yet another (approximate) separation oracle. This second oracle boils down to solving a hopbounded version of the Restricted Shortest Path problem. We show that this problem admits an FPTAS, and that this ultimately translates to a $(1+\epsilon)$-approximation for~\ref{lp:hopset}. More formally, we will prove the following:

\begin{theorem} \label{thm:solve_LP}
    There is a $(1+\epsilon)$-approximation to the optimal solution of~\ref{lp:hopset} for any arbitrarily small constant $\epsilon > 0$.
\end{theorem}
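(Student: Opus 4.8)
The plan is to (approximately) solve \ref{lp:hopset} with the ellipsoid method, using a separation oracle that is itself built from two further, nested applications of ellipsoid, bottoming out at a doubly-budgeted shortest-path problem for which I would design an FPTAS. At the outer level, given a candidate $\bm{\mathrm{x}} \ge 0$ (with $x_e = 1$ on $E$), checking feasibility of \ref{lp:hopset} reduces, for each demand $(s,t)\in\mathcal D$, to deciding whether $\min_{\bm{\mathrm{z}}\in\mathcal Z_{s,t}}\sum_e z_e x_e \ge 1$; if the minimum is below $1$, the minimizing $\bm{\mathrm{z}}$ is exactly a violated constraint. That inner minimization is an exponential-size covering LP (minimize $\sum_e x_e z_e$ subject to $\sum_{e\in P} z_e \ge 1$ for every valid $s$--$t$ path $P$, with $0\le z_e\le 1$) — call it LP(Oracle), which by LP duality equals the maximum fractional flow routable along valid paths under capacities $x_e$, confirming that \ref{lp:hopset} is the fractional-cut dual of the natural flow relaxation. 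I would solve LP(Oracle) again by ellipsoid: for a candidate $\bm{\mathrm{z}}\in[0,1]^{E_M}$, a violated constraint is a valid $s$--$t$ path (at most $\be$ hops and $\ell$-length at most $Dist(s,t)$) of $\bm{\mathrm{z}}$-length strictly less than $1$, so the separation problem at this level is precisely: find the minimum-$\bm{\mathrm{z}}$-length $s$--$t$ path subject to two budget constraints, one on hop count ($\be$) and one on $\ell$-length ($Dist(s,t)$).

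For this doubly-budgeted shortest-path problem I would enforce the hop budget \emph{exactly} by the standard layering trick: replace $G_M$ with a DAG on $\be+1$ layers, placing a copy of each edge from layer $i$ to layer $i+1$ and zero-cost, zero-$\bm{\mathrm{z}}$ ``stay'' edges at $t$, so that $s$-to-$t$-by-layer-$\le\be$ paths in the DAG correspond to valid ($\le\be$-hop) paths in $G_M$ with both $\ell$-lengths and $\bm{\mathrm{z}}$-lengths preserved; this DAG has $O(\be n)$ vertices and $O(\be|E_M|)$ edges. On the DAG the problem is exactly the classical Restricted Shortest Path problem (minimize one edge metric subject to a single budget on a second metric), which admits a well-known FPTAS via scaling-and-rounding dynamic programming; since $\ell$ and the $Dist$ values are polynomially bounded and we may scale the $z_e$'s, this gives, for any constant $\epsilon_0>0$, a polynomial-time algorithm returning a valid $s$--$t$ path whose $\bm{\mathrm{z}}$-length is within a $(1+\epsilon_0)$ factor of optimal.

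It then remains to propagate the approximation up through the three ellipsoid levels. The bottom FPTAS yields an approximate separation oracle for LP(Oracle) that, crucially, still returns a genuine violated constraint whenever $\bm{\mathrm{z}}$ is infeasible by more than a $(1+\epsilon_0)$ factor; by the standard approximate-ellipsoid argument (as used for the spanner LP in \cite{DK11}) this produces a feasible-in-the-relaxed-sense solution to LP(Oracle) whose value is within $(1+\epsilon_0)$ of its optimum. That in turn is an approximate separation oracle for \ref{lp:hopset} that finds a violated cut constraint whenever $\bm{\mathrm{x}}$ violates \ref{lp:hopset} by more than a $(1+\epsilon_0)$ factor, so the outer ellipsoid returns $\bm{\mathrm{x}}^\star$ with objective at most $(1+O(\epsilon_0))\cdot\opt(\ref{lp:hopset})$ satisfying every cut constraint up to a $(1+O(\epsilon_0))$ factor; scaling $\bm{\mathrm{x}}^\star$ up by that factor restores exact feasibility at a further $(1+O(\epsilon_0))$ loss. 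Taking $\epsilon_0=\Theta(\epsilon)$ small enough gives the claimed $(1+\epsilon)$-approximation.

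The main obstacle is twofold: first, designing and verifying the FPTAS for the doubly-budgeted shortest path — the layering collapses the two budgets into one, but one must still check that the classical RSP FPTAS applies cleanly after scaling the (non-integral, non-polynomially-bounded) $\bm{\mathrm{z}}$-metric; and second, the careful bookkeeping showing that an \emph{approximate} oracle at each of the three nested levels still yields a $(1+\epsilon)$-approximate, (nearly-)feasible solution at the top, i.e., that the multiplicative errors compose benignly over a constant number of levels and that ``approximately feasible'' can be repaired by uniform scaling without harming the objective bound.
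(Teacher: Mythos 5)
Your proposal follows the paper's overall scheme (ellipsoid at two nested levels, bottoming out in a hop- and length-budgeted shortest-path FPTAS, then propagating and repairing the $(1+\epsilon_0)$ losses by uniform scaling), and the bookkeeping argument at the end is essentially verbatim what the paper does. The one place where you diverge is the FPTAS for the doubly-budgeted shortest path: the paper white-boxes the Lorenz--Raz dynamic program, adding a third index for hops to the table $D(v,i,j)$ and re-verifying the scaling lemmas with the extra dimension, whereas you instead enforce the hop budget structurally by a $(\beta+1)$-layered DAG (edge copies between consecutive layers, zero-$\ell$/zero-$\bm{\mathrm{z}}$ stay edges feeding into $t$) and then invoke the classical single-budget RSP FPTAS as a black box. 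Both give polynomial running time of the same order; your route has the advantage of needing no re-derivation of the RSP analysis, at the mild cost of blowing up the graph by a factor $\beta$ before running the FPTAS. One small point to make explicit in your layering: the ``stay'' edges must carry $\ell$-length $0$ (not just $\bm{\mathrm{z}}$-length $0$), so that letting a short path idle at $t$ does not eat into the $Dist(s,t)$ budget; you call them ``zero-cost, zero-$\bm{\mathrm{z}}$'' but the $\ell=0$ requirement should be stated. With that caveat, the argument is correct and complete.
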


\subsubsection{Oracle 1}
As noted, \ref{lp:hopset} has exponential size, so we find a separation oracle in order to run ellipsoid. To do so, we must determine if there is a fractional cut $\bm{\mathrm{z}}$ that is not covered by the LP solution $\bm{\mathrm{x}}$---that is, given $\bm{\mathrm{x}}$ (satisfying the non-negativity constraints), we want to find $\bm{\mathrm{z}} \in \mathcal{Z}_{s,t}$, for some demand $(s,t) \in \mathcal{D}$, such that $\sum_{e \in E_M} z_e x_e < 1$. We can find such a violated cut-covering constraint by solving the following LP for each demand $(s,t) \in \mathcal{D}$.

\begin{equation} \label{lp:oracle1} \tag{LP(Oracle 1)}
\fbox{$
\begin{array}{lll}
\min & \displaystyle \sum_{e \in E_M} z_e x_e \\
\mathrm{s.t.} & \displaystyle \sum_{e \in P} z_e \geq 1 \qquad \qquad & \forall P \in \mathcal{P}_{s,t} \\
& \displaystyle z_e \geq 0 & \forall e \in E_M
\end{array}
$
}
\end{equation}

\subsubsection{Oracle 2: Hopbounded Restricted Shortest Path Problem}
\ref{lp:oracle1} is also exponential in the number of constraints, so we use the ellipsoid algorithm again, with yet another separation oracle. Given an LP solution $\bm{\mathrm{z}}$, we must determine whether there is some valid $s-t$ path that is not fractionally cut (or, covered) by $\bm{\mathrm{z}}$. Specifically, we need to find a violated constraint of the form $\sum_{e \in P} z_e < 1$ for some path $P \in \mathcal{P}_{s,t}$. Observe that we now have three separate metrics: the $\bm{\mathrm{z}}$ metric (given by a solution to \ref{lp:oracle1}), our original distance metric from the input, and a hop metric (where each edge has hop ``length'' 1). We only care about valid $s-t$ paths; namely, paths with length at most $D(s,t)$ in our original distance metric and with length at most $\beta$ in our hop metric. Thus, our goal is to find the shortest path in the $\bm{\mathrm{z}}$ metric that respects these upper bounds in the distance and hop metrics. 

When there are only two metrics---that is, when the goal is to find the shortest path in one metric subject to an upper bound in the other---this is the Restricted Shortest Path problem, defined as follows.

\begin{definition} [Restricted Shortest Path Problem]
    Let $G = (V,E)$ be a graph such that each edge $e \in E$ is associated with a cost $z_e$ and a delay $\ell_e$. Let $T$ be a positive integer, and $s,t \in V$ be the source and target nodes, respectively. The Restricted Shortest Path problem is to find a path $P$ from $s$ to $t$ such that the delay along this path is at most $T$, and the cost of $P$ is minimal.
\end{definition}

The Restricted Shortest Path problem is well studied \cite{Has92, War87, Phi93, XZTT08, HK18}. The problem admits an FPTAS, meaning there exists a polynomial-time (1+$\epsilon$)-approximation for the problem \cite{LR01}. Since our problem of interest has a third metric (the hop metric), we refer to it as the \textit{Hopbounded} Restricted Shortest Path problem. We \iflong will \else \fi modify the Restricted Shortest Path FPTAS algorithm of \cite{LR01} to give an FPTAS for the Hopbounded Restricted Shortest Path problem, thus giving a $(1+\epsilon)$-approximate separation oracle for~\ref{lp:oracle1}. We formally define the Hopbounded Restricted Shortest Path problem with respect to~\ref{lp:oracle1}:

\begin{definition}[Hopbounded Restricted Shortest Path problem]
    Let $G_M = (V,E_M)$ be a graph such that each edge $e \in E$ is associated with a cost $z_e$ and a delay, or length, $\ell_e$. Let $T = D(s,t)$, and $s,t \in V$ be the source and target nodes, respectively. The Hopbounded Restricted Shortest Path problem is to find a path $P$ from $s$ to $t$ with at most $\beta$ hops, such that the length along this path is at most $T$, and the cost of $P$ is minimal.
\end{definition}

\subsubsection{Hopbounded Restricted Shortest Paths Algorithm}
At a high level, the Restricted Shortest Path algorithm works as follows. The algorithm runs multiple binary searches to find good upper and lower bounds on the cost of the optimal solution, then uses these bounds to scale and discretize (or ``bucket'') the costs of the edges. They then give a pseudo-polynomial time dynamic programming algorithm on the problem with bucketed edge costs, which they show is a $(1+\epsilon)$-approximation for the original problem. For runtime, the pseudo-polynomial time DP algorithm is polynomial in $U/L$ (and in $1/\epsilon$), where $U$ and $L$ are the upper and lower bounds, respectively, on the optimal solution. The bounds they find are such that $U/L = O(n)$, and so the algorithm is an FPTAS. 

To get this algorithm to work for the Hopbounded Restricted Shortest Path problem, we simply modify the DP algorithm to take our hop metric into account, adding a factor $\beta$ to the runtime (see Algorithm~\ref{alg:HRSP}\iflong\else in Appendix~\ref{app:lp_alg}\fi). \iflong \else The arguments of~\cite{LR01} generally also hold for our algorithm, so we have that Algorithm~\ref{alg:HRSP} is a $(1+\epsilon)$-approximation of the Hopbounded Restricted Shortest Path problem. \fi
\iflong
Like the Restricted Shortest Path DP algorithm, our DP algorithm takes in as input valid upper and lower bounds, $U$ and $L$, respectively. We also scale the edge costs in the same way, and refer to these scaled costs as \quotes{pseudo-costs.} Let $D(v, i, j)$ indicate the minimum length (in the original distance metric) of a path from vertices $s$ to $v$ with pseudo-cost at most $i$ and at most $j$ hops. Our algorithm differs from the Restricted Shortest Path algorithm in that we add an additional dimension for hops to the dynamic programming algorithm.

\begin{algorithm}[ht]
\DontPrintSemicolon

\textbf{Input:} Graph $G = (V, E)$, demand pair $s,t \in V$, \\
cost, distance, and hop metrics $\{z_e, \ell_e, h_e \}_{e \in E_M}$, \\
parameters $T, \beta, L, U,$ and $\epsilon $ \;~\\

Let $S \gets \frac{L \epsilon}{n+1}, \; \Tilde{U} \gets \floor*{U/S} + n + 1$ \\ \; 

\tcp{scale edge costs} 
\ForEach{edge $e \in E$}{
    Let $\Tilde{z}_e \gets \floor{z_e / S} + 1$} \;

\tcp{set base cases} 
\ForEach{index $i  = 0, 1, \dots, T$}{
    $D(s, i, 0) \gets 0$} 
\ForEach{index $j  = 0, 1, \dots, \beta$}{
    $D(s, 0, j) \gets 0$} 
    
\ForEach{vertex $v \in V$ such that $v \neq s$}{
    \ForEach{index $i = 0, 1, \dots, T$}{
        $D(v, i, 0) \gets \infty$}
    \ForEach{index $j = 0, 1, \dots, \beta$}{
        $D(v, 0, j) \gets \infty$} 
} \;

\tcp{build up DP table}
\ForEach{index $i = 1, 2, \dots, \Tilde{U}$}{
    \ForEach{index $j = 1, 2, \dots, \beta $}{
        \ForEach{vertex $v \in V$}{
            $D(v, i, j) \gets \min\{ \: D(v, i-1, j), \; D(v, i, j-1) \: \}$ \;
            \ForEach{edge $e \in \{ (u,v) \; | \; \Tilde{z}_{(u,v)} \leq i \}$ }{
                $D(v, i, j) \gets \min\{ \: D(v, i, j), \; \ell_e + D(v, i-\Tilde{z}_{e}, j-1) \: \}$ \;
            }
        \If{$D(t, i, j) \leq T$}{
            \textbf{Return} the corresponding path \;
        }
        }
    }
}
\textbf{Return} FAIL \;

\caption{\label{alg:HRSP} Hopbounded Restricted Shortest Path Algorithm } 
\end{algorithm}
Using arguments from \cite{LR01}, we will show that Algorithm~\ref{alg:HRSP} is a $(1+\epsilon)$-approximation of the Hopbounded Restricted Shortest Path problem.
\else
\fi
\iflong The following is directly from \cite{LR01} and also holds for our slightly modified algorithm. Let $z(P)$ denote the cost (in the $\bm{\mathrm{z}}$ metric) of a path $P$ in $G_M$, and let $z^*$ denote the cost of the optimal path for the original Hopbounded Restricted Shortest Path instance. 

\begin{lemma}[Lemma 3 of \cite{LR01}]
\label{lem:eps_approx}
    If $U \geq z^*$, then Algorithm~\ref{alg:HRSP} returns a feasible path, $P'$, that satisfies $z(P') \leq z^* + L\epsilon$
\end{lemma}

\cite{LR01} uses this Lemma, and others, to show that there is a $(1+\epsilon)$-approximation for the Restricted Shortest Path problem (Theorems 3 and 4 of \cite{LR01}). Their algorithm achieves a runtime of $O(m n/ \epsilon + m n \log \log \frac{U}{L})$. While their full argument would also give a similar runtime for our setting (with an additional factor $\beta$), we only include the arguments necessary to achieve a polynomial runtime. As a result, we get a worse runtime than what \cite{LR01} would imply. 

\begin{lemma} \label{lem:AS}
    Given valid lower and upper bounds $0 \leq L \leq z^* \leq U$, Algorithm~\ref{alg:HRSP} is a $(1+\epsilon)$-approximation for the Hopbounded Restricted Shortest Path problem that runs in time $O(\beta m n U / L \epsilon)$. 
\end{lemma}
\begin{proof}
The approximation ratio is directly implied by Lemma~\ref{lem:eps_approx}. 
    
Each edge in the algorithm is examined a constant number of times, so the overall time complexity is 
\begin{align*}
    O(\beta m \Tilde{U}) &= O\left( \beta m \frac{n}{\epsilon} \frac{U}{L} + nm \right) \\
    &= O\left( \beta m \frac{n}{\epsilon} \frac{U}{L}  \right) \tag{$U \geq L$ and $\epsilon \leq 1$}
\end{align*}
\end{proof}

Now we find upper and lower bounds $U$ and $L$ such that $U/L \leq n$, just as in \cite{LR01}, to give an overall runtime of $O(\beta m n^2 / \epsilon)$ for our problem. Let $\ell_1 < \ell_2 < \dots, \ell_p$ be the distinct lengths of all edges in $E_M$ (note that $p \leq m$). Let $E_i$ be the set of edges in $E_M$ with length at most $\ell_i$, and let $G_i = (V, E_i)$. We also set $E_0 = \emptyset$.
There must be some index $j \in [1,p]$ such that there exists a $T$-length-bounded, $\beta$-hopbounded path in $G_j$, but not in $G_{j-1}$. As observed in Claim 1 of \cite{LR01}, this means that $\ell_j \leq z^* \leq n \ell_j$, giving bounds $L = \ell_j$ and $U = n \ell_j$, with $U/L = n$. To find $\ell_j$, we can binary search in the range $\ell_1, \dots, \ell_p$, running a shortest hopbounded path algorithm at each step of the binary search to check if there exists a $T$-length-bounded, $\beta$-hopbounded $s-t$ path on the subgraph $G_i$ corresponding to that step. The binary search requires $O(\log m)$ steps, and the shortest hopbounded path algorithm takes $O(m \log n)$ time, so the runtime for finding these bounds is dominated by the runtime of the the DP algorithm. 

In summary, the Hopbounded Restricted Shortest Path algorithm is as follows: Run binary search on $\ell_1, \dots, \ell_p$ to find $\ell_j$ and get bounds $U$ and $L$. Then run the dynamic programming algorithm, Algorithm~\ref{alg:HRSP}, using bounds $U$ and $L$.
\else \fi

\begin{lemma} \label{lem:FPTAS_HRSP}
    The Hopbounded Restricted Shortest Path problem admits an FPTAS.
\end{lemma}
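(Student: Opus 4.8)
The plan is to assemble the FPTAS directly from the pieces already developed in the excerpt, namely Algorithm~\ref{alg:HRSP} together with Lemmas~\ref{lem:eps_approx} and~\ref{lem:AS}. First I would establish, for a fixed accuracy parameter $\epsilon > 0$, that running the dynamic program with valid bounds $L \le z^* \le U$ satisfying $U/L \le n$ yields a feasible $\beta$-hop, $T$-length path whose $\bm{\mathrm{z}}$-cost is at most $z^* + L\epsilon \le (1+\epsilon)z^*$ (here using $L \le z^*$), which is the approximation guarantee. The only substantive thing left is to exhibit such bounds $L,U$ computable in polynomial time, and then the runtime $O(\beta m n U/L\epsilon) = O(\beta m n^2/\epsilon)$ from Lemma~\ref{lem:AS} is polynomial in the input size and in $1/\epsilon$, which is exactly what an FPTAS requires.

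The key step is therefore the bound-finding procedure, which I would carry out exactly as in~\cite{LR01} adapted to the hopbounded setting. Let $\ell_1 < \ell_2 < \cdots < \ell_p$ (with $p \le m$) be the distinct edge lengths of $G_M$, let $E_i = \{e \in E_M : \ell_e \le \ell_i\}$, $G_i = (V,E_i)$, and $E_0 = \emptyset$. Since the whole instance is feasible (otherwise we report FAIL), there is a smallest index $j$ for which $G_j$ contains an $s$--$t$ path with at most $\beta$ hops and length at most $T$ but $G_{j-1}$ does not. For that index, every such path must use an edge of length exactly $\ell_j$, so $z^* \ge \ell_j$ (each edge contributes cost at least its own length? — no: here I must be careful, the claim is $\ell_j \le z^* \le n\ell_j$ only when costs coincide with lengths as in the classical statement; in our setting the appropriate claim is that the minimum $\bm{\mathrm z}$-cost path has $\bm{\mathrm z}$-cost sandwiched by a factor $n$, which follows because any feasible path in $G_j$ has at most $n-1$ edges each of $\bm{\mathrm z}$-value bounded appropriately, while feasibility fails below $\ell_j$). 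Concretely I would invoke Claim~1 of~\cite{LR01}, which gives $L = \ell_j \le z^* \le n\ell_j = U$, hence $U/L = n$. To locate $j$, binary search over $\ell_1,\dots,\ell_p$ in $O(\log m)$ steps, at each step running a polynomial-time $\beta$-hop-bounded shortest-path (by length) computation on $G_i$ to test whether a $\beta$-hop, $T$-length path exists. This dominates nothing: it costs $O(\log m \cdot m \log n \cdot \beta)$ or so, which is subsumed by the DP runtime.

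The main obstacle I anticipate is purely a matter of correctly transporting the correctness argument of~\cite{LR01} through the extra hop dimension: one must check that adding the third index $j$ (number of hops) to the DP table $D(v,i,j)$ does not break the scaling/rounding analysis behind Lemma~\ref{lem:eps_approx} — in particular that the rounding of each edge cost via $\tilde z_e = \lfloor z_e/S\rfloor + 1$ inflates the total pseudo-cost of an optimal path by at most $(n+1)$ units of $S$ because such a path has at most $n$ edges (still true with the hop bound, since $\beta \le n$ w.l.o.g.\ or because a shortest feasible path is simple), and that the termination test $D(t,i,j) \le T$ returns a genuinely feasible path. Since the hop bound only further restricts the set of paths the DP ranges over, it cannot increase $z^*$ relative to the unhopbounded relaxation and every path the DP returns respects all three constraints by construction; so the~\cite{LR01} analysis goes through essentially verbatim. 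Putting the two ingredients together — the $(1+\epsilon)$ guarantee of Lemma~\ref{lem:AS} and polynomial bounds $U/L \le n$ — gives a running time polynomial in $n$, $m$, $\beta$ and $1/\epsilon$, establishing the FPTAS and completing the proof of Lemma~\ref{lem:FPTAS_HRSP}.
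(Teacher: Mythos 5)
Your proposal follows essentially the same route as the paper: invoke Algorithm~\ref{alg:HRSP} with Lemmas~\ref{lem:eps_approx} and~\ref{lem:AS}, then supply bounds $L \le z^* \le U$ with $U/L \le n$ by a Lorenz--Raz-style binary search, so the total running time is polynomial in $n,m,\beta,1/\epsilon$.

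There is, however, one point you flagged yourself (\emph{``— no: here I must be careful''}) and then papered over, and it is worth pinning down cleanly because the paper's own exposition is also loose here. Claim~1 of~\cite{LR01} thresholds on the distinct values of the \emph{cost} metric being minimized, not on the distinct \emph{delay/length} values; thresholding on $\ell_e$, as you wrote, gives no information about $z^*$ at all. The correct adaptation is: sort the distinct edge costs $z_1 < z_2 < \cdots < z_p$, set $E_i = \{e \in E_M : z_e \le z_i\}$ and $G_i = (V,E_i)$, and binary-search for the smallest index $j$ such that $G_j$ contains an $s$--$t$ path with at most $\beta$ hops and length at most $T$ (each probe being a polynomial-time $\beta$-hop-bounded length-restricted reachability check). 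For that $j$: every feasible path in the full graph must use an edge of $z$-cost at least $z_j$, so $z^* \ge z_j$; conversely the feasible path in $G_j$ is w.l.o.g.\ simple, so it has at most $n-1$ edges each of cost at most $z_j$, giving $z^* \le (n-1)z_j$. Taking $L = z_j$ and $U = n z_j$ yields $U/L \le n$ as required, and the rest of your argument (the $(1+\epsilon)$ guarantee from Lemma~\ref{lem:eps_approx} since $L \le z^*$, and the $O(\beta m n^2/\epsilon)$ runtime from Lemma~\ref{lem:AS}) goes through unchanged.
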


\subsubsection{Proof of Theorem~\ref{thm:solve_LP}}
With an FPTAS for the Hopounded Restricted Shortest Path problem (by Lemma~\ref{lem:FPTAS_HRSP}), we have an approximate separation oracle for~\ref{lp:oracle1}. Using the ellipsoid algorithm with this oracle, we find a solution $\bm{\mathrm{z}}$ for~\ref{lp:oracle1}. While $\bm{\mathrm{z}}$ may not be feasible, it only violates each constraint by a factor of at most $(1+\epsilon)$. That is, $\bm{\mathrm{z}}$ satisfies $(1+\epsilon) \sum_{e \in P}  z_e \geq 1$ for all $P \in \mathcal{P}_{s,t}$. Thus if we scale $\bm{\mathrm{z}}$ by $(1+\epsilon)$, we get a feasible solution. Let $\bm{\mathrm{z'}}$ be this scaled solution, where $z_e' = (1+\epsilon) z_e$ for all $e \in E_M$. We then also have that $\bm{\mathrm{z'}}$ is a $(1+\epsilon)$-approximation for~\ref{lp:oracle1}. This is implied by the fact that for any feasible solution $\bm{\mathrm{z''}}$ of~\ref{lp:oracle1}, the value of the objective on $\bm{\mathrm{z}}$ is at most the value of the objective on $\bm{\mathrm{z''}}$ (the entire feasible region satisfies the $(1+\epsilon)$ \quotes{approximate} constraints, and therefore the feasible region is in the search space of the ellipsoid algorithm). That is, $\sum_{e \in E_M}  z_e x_e \leq \sum_{e \in E_M}  z''_e x_e$ for all feasible solutions $\bm{\mathrm{z''}}$.  Thus we have a $(1+\epsilon)$-approximation for~\ref{lp:oracle1}.

The purpose of solving~\ref{lp:oracle1} is to give a separation oracle for our starting LP, \ref{lp:hopset}. We therefore must show that the $(1+\epsilon)$-approximation we have for~\ref{lp:oracle1} translates to a $(1+\epsilon)$-approximation for the hopset LP. The approach is similar to how we approximate~\ref{lp:oracle1}. Running ellipsoid with the~\ref{lp:oracle1} approximation as a separation oracle, we get a solution $\bm{\mathrm{x}}$ that approximately (within a factor $(1+\epsilon)$) satisfies the constraints. We can scale $\bm{\mathrm{x}}$ by a factor $(1+\epsilon)$ to get a feasible solution $\bm{\mathrm{x'}}$. Additionally, we have $\sum_{e \in E_M} x_e \leq \sum_{e \in E_M} x''_e$ for all feasible solutions $\bm{\mathrm{x''}}$. We therefore have a $(1+\epsilon)$-approximation for the hopset LP relaxation.


\iflong
\section{Approximation for Hopbound 2} \label{sec:2hop}
When the hopbound is $2$, we can get improved bounds by using the hopset version of a spanner approximation algorithm.  Interestingly, while we will require the hopbound to be $2$, we can handle arbitrary distance bound functions (including exact hopsets, $(1+\epsilon)$-stretch hopsets, and larger stretch hopsets), in contrast with the spanner version which requires all edges to have unit length and requires the \emph{stretch} to be at most $2$.

We use a version of the LP rounding algorithm for stretch $2$ from~\cite{DK11}.  In particular, we first solve our LP relaxation~\eqref{lp:hopset} to get an optimal solution $\bm{\mathrm{x}}$, and then round as follows.  First, every vertex $v \in V$ chooses a threshold $T_v \in [0,1]$ uniformly at random.  We then return as a hopset the set of edges $E' = \{(u,v) : \min(T_u, T_v) \leq (c \ln n) \cdot x_{(u,v)} \}$, where $c$ is an appropriately chosen constant. 

\begin{lemma} \label{lem:2hop-cost}
    $\E[c(E')] \leq O(\ln n) \cdot \opt$
\end{lemma}
\begin{proof}
    Fix some $e = (u,v) \in \widetilde{E}$. 
    Clearly $\Pr[e \in E'] \leq (2c\ln n) x_e$, and hence $\E[c(E')] \leq \sum_{e \in \widetilde{E}} (2c\ln n) x_e \leq (2c \ln n) \cdot \opt$.
\end{proof}

\begin{lemma} \label{lem:2hop-correct}
$E'$ is a valid hopset with high probability.
\end{lemma}
\begin{proof}
    Fix some $(u,v) \in V \times V$.  If $x_{(u,v)} \geq 1/(c \ln n)$ then our algorithm will directly include the $(u,v)$ edge.  So without loss of generality, we may assume that $x_{(u,v)} \leq 1/2$.  As discussed, \eqref{lp:hopset} is equivalent to the flow LP, and hence $\bm{\mathrm{x}}$ supports flows $f : \mathcal P_{u,v} \rightarrow [0,1]$ such that $\sum_{P \in \mathcal P_{u,v}} f_P \geq 1$.  Since $\beta = 2$ and $x_{(u,v)} \leq 1/2$, this means that $\sum_{P \in \mathcal P_{u,v} \setminus \{(u,v)\}} f_P \geq 1/2$, i.e., at least $1/2$ flow is sent on paths of length $2$ in $\mathcal P_{u,v}$.  Let $W \subseteq V \setminus \{u,v\}$ be the set of nodes that are the middle node of such a path, and for each $w \in W$ let $P_w$ denote the path $u-w-v \in \mathcal P_{u,v}$.

    Fix $w \in W$.  Note that if $T_w \leq (c \ln n) f_{P_w}$ then our algorithm will include the path $P_w$, since the $\bm{\mathrm{x}}$ variables support the flow $f$.  Hence the probability that we fail to include every path $P \in \mathcal P_{u,v}$ is at most
    \begin{align*}
        \prod_{w \in W} (1-(c\ln n) f_{P_w}) \leq \exp\left(-\sum_{w \in W} (c \ln n) f_{P_w}\right) \leq \exp(-(c \ln n) (1/2)) = n^{-c/2}.
    \end{align*}
    
    We can now take a union bound over all $(u,v) \in V \times V$ to get that the probability that $E'$ is not a valid hopset is at most $n^{2-(c/2)}$.  Hence by changing $c$, we can make the failure probability $1/n^{c'}$ for arbitrarily large constant $c'$.  
\end{proof}

\begin{theorem} \label{thm:2hop-main}
    There is an $O(\ln n)$-approximation algorithm for {\hopset} when $\be = 2$.
\end{theorem}
\begin{proof}
    This is directly implied by Lemmas~\ref{lem:2hop-cost} and \ref{lem:2hop-correct}.
\end{proof}
\else
\fi

\section{Approximation Algorithms for General Hopbounds}
Continuing the connection to spanners, there is a reduction from {\hopset} to the directed pairwise weighted spanner problem (where ``weighted'' refers to edge costs). The most general version of this problem, studied by~\cite{GKL23}, allows for any demand set, any positive rational edge costs, integer edge weights in $\texttt{poly}(n)$, and arbitrary distance bound functions. The reduction starts with the transitive closure $G_M$, and builds a layered graph with $\beta + 1$ copies of each node in $G_M$, and $\be$ copies of each edge (see Section~\ref{sec:layered_reduction} for a more detailed description of the reduction). Since~\cite{GKL23} achieved an $\widetilde{O}(n^{4/5 + \epsilon})$-approximation for directed pairwise weighted spanners, this reduction immediately gives an $\widetilde{O}((\be n)^{4/5 + \epsilon})$-approximation for {\hopset}. 

In this section, we improve upon this result and get an $\widetilde{O}(n^{4/5 + \epsilon})$-approximation for {\hopset}, removing the dependence on $\be$. We will give approximation algorithms in terms of $n, \be, {\opt}$, and ``local neighborhood size.'' All of our algorithms are based on spanner algorithms, and we must modify them (and provide different analyses) to accommodate the hop constraint. We will then trade these algorithms off with known existential hopset results to get approximations (in terms of $n$ and $\be$) in many regimes, including the general setting (directed graphs, arbitrary stretch), and in more restricted settings, such as small and large hopbound, and specific stretch regimes. The approximation  we get in any given regime will be a function of how good the existential bounds are for that regime, so better existential results will result in better approximations.

Our first algorithm, the junction tree algorithm of Section~\ref{sec:junction_tree}, will perform best when ${\be}$ and ${\opt}$ are relatively small. Our second and third algorithms, the star sampling and randomized LP rounding algorithms of Section~\ref{sec:star_sampling_rounding}, will together give better approximations as ${\opt}$ gets larger. 
\iflong\else We also give an $O(\log n)$-approximation for hopbound $2$, which we defer to Appendix~\ref{app:2hop}.\fi

\subsection{Junction Tree Algorithm} \label{sec:junction_tree}

In this section, we prove the following theorem for directed {\hopset}.

\begin{theorem} \label{thm:junction_tree}
    There is a polynomial-time $\widetilde{O} (\beta n^\epsilon \cdot {\opt})$-approximation for directed {\hopset}.
\end{theorem}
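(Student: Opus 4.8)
The plan is to adapt the junction-tree / density-based approach (as used for distance-constrained network design in \cite{CDKL20,GKL23}) to {\hopset}, working throughout in the weighted transitive closure $G_M$ with the $0/1$ cost function $c$. Recall a junction tree rooted at a vertex $r$ is the union of an in-tree and an out-tree at $r$; it \emph{settles} a demand $(s,t)$ if it contains a valid (at most $\beta$ hops, length at most $Dist(s,t)$) $s$–$r$ path concatenated with a valid $r$–$t$ path — but here ``valid'' is subtle because concatenating two $\beta/2$-hop paths through $r$ must still respect the global hop bound $\beta$ and the length bound $Dist(s,t)$. To handle the two simultaneous budgets (hops and length) I would first apply the \textbf{two-stage layering reduction} promised in the introduction: build a layered graph with $\beta+1$ copies of each vertex of $G_M$ so that a path using layer $i$ to layer $i+1$ exactly corresponds to taking one hop; this converts the hop constraint into a reachability-across-layers constraint while leaving a single (length) distance constraint, which is exactly the setting that \cite{GKL23} handles. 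The standard argument then shows {\hopset} on $G$ reduces (approximation-preserving up to constants, with $\widetilde O(\beta n)$ vertices) to a distance-constrained pairwise spanner / junction-tree problem on this layered graph.

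The core of the proof is the usual density / greedy covering argument. Define the \emph{density} of a partial solution $F$ as $c(F)$ divided by the number of demands it newly settles. The key structural lemma is that there always exists a junction tree (in the layered graph, hence corresponding to a low-hop structure in $G_M$) whose density is at most $\widetilde O(\sqrt{{\opt}})$ — or, more precisely in the form needed here, that one can find in polynomial time a structure of density $\widetilde O(n^{\epsilon}\cdot{\opt}/D)$ relative to the number $D$ of currently unsatisfied demands. One shows this by a charging argument: fix an optimal solution $H^*$ with $c(H^*)={\opt}$; it settles every demand, so by averaging there is a root $r$ such that the demands routed through $r$ in $H^*$, together with the sub-forest of $H^*$ realizing those routes, has cost-to-demands ratio at most ${\opt}$ divided by (number of demands)/$n$, i.e.\ density $\widetilde O(n\cdot{\opt}/D)$; then using the LP relaxation~\eqref{lp:hopset} (which by Theorem~\ref{thm:solve_LP} we can solve to within $1+\epsilon$) and the flow it supports, one finds a \emph{low-density junction tree} by a min-density-junction-tree routine — this is where the $n^{\epsilon}$ slack enters, analogous to \cite{CDKL20}. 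Iterating this greedy step, the standard set-cover analysis gives a solution of cost $\widetilde O(\beta n^{\epsilon}\cdot{\opt}\cdot\ln(\text{\#demands})) = \widetilde O(\beta n^{\epsilon}\cdot{\opt})$, where the extra $\beta$ factor is the blow-up in vertex count from the layering (the layered graph has $\widetilde\Theta(\beta n)$ vertices, and the per-root averaging loses a factor equal to the number of vertices).

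The main obstacle I anticipate is the two-stage layering itself and proving it is genuinely approximation-preserving for {\hopset}: one must check that a valid junction tree in the layered graph maps back to an edge set in $G_M$ whose \emph{cost} (number of $\widetilde E$ edges) is not inflated — multiple layered copies of the same transitive-closure edge must be charged only once — and conversely that any feasible hopset lifts to a feasible layered solution of the same cost. A second, more technical obstacle is establishing the min-density junction tree guarantee with the hop constraint baked in: the height of the tree in the layered graph is now bounded by $\beta$, so one needs a height-$\beta$ variant of the density-based LP-rounding/greedy argument of \cite{CDKL20,GKL23}, and verifying that their analysis goes through with this extra depth restriction (and that the depth restriction does not cost more than the claimed $n^{\epsilon}$ factor) is the delicate part. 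Everything else — solving~\eqref{lp:hopset} via Theorem~\ref{thm:solve_LP}, the outer greedy set-cover loop, and bounding $\ln(\text{\#demands}) = O(\log n)$ inside the $\widetilde O$ — is routine.
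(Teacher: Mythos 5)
Your high-level skeleton matches the paper: work in the weighted transitive closure $G_M$, reduce to a $\beta$-layered graph to fold the hop constraint into the layer structure, and then run an iterative greedy-density scheme that repeatedly buys approximately-min-density junction trees (using the min-density distance-preserving junction tree subroutine of~\cite{GKL23} as a black box). But the critical density lemma in your proposal is wrong, and the arithmetic you draw from it does not give the claimed ratio.

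Your charging argument averages over \emph{roots}: since there are $n$ vertices (or $\widetilde\Theta(\beta n)$ after layering), there is a root $r$ through which at least $D/n$ demand-paths of $H^*$ pass, and the sub-forest of $H^*$ through $r$ has cost $\leq \opt$, so the density is $\widetilde O(n\cdot\opt/D)$. Even granting the additional $\beta$ loss from splitting those demands into the $O(\beta)$ hop-classes $(x,y)$ with $x+y\leq\beta$ (which you need in order to form a well-defined $(i,j)$-junction tree), this yields an existence bound of $O(\beta n\cdot\opt/D)$, and after the $n^{\epsilon}$ algorithmic slack the greedy gives total cost $\widetilde O(\beta n^{1+\epsilon}\cdot\opt)$, i.e.\ an approximation ratio of $\widetilde O(\beta n^{1+\epsilon})$. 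That is \emph{not} $\widetilde O(\beta n^\epsilon\cdot\opt)$: the two are incomparable, and in the only regime where the junction-tree algorithm is actually used (small $\opt$, since the star-sampling/LP algorithm takes over for large $\opt$), $\widetilde O(\beta n^{1+\epsilon})$ is polynomially worse. Also note your own internal inconsistency: you state the target density as $\widetilde O(n^\epsilon\cdot\opt/D)$, then derive $\widetilde O(n\cdot\opt/D)$ from the charging, yet write the final cost as $\widetilde O(\beta n^\epsilon\cdot\opt)$ — none of these three line up.

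The missing idea is to average over the $\leq\opt$ \emph{hopset edges} of the (cheapest) solution settling $D$, not over roots. Each of the $|D|$ settling paths must use at least one edge of $S\cap\widetilde E$ (else the demand would already be settled by $G$), and $|S\cap\widetilde E|\leq\opt$, so some edge $e=(u,v)$ lies on at least $|D|/\opt$ of them. Rooting the junction tree at $u$ and then splitting those $|D|/\opt$ demands into $O(\beta)$ hop-classes gives a class of size $\Omega(|D|/(\beta\cdot\opt))$, carried by a subtree of cost $\leq\opt$. This is density $O(\beta\cdot\opt^2/|D|)$, and it is this $\opt^2$ (not an $n$) that, after the $n^\epsilon$ algorithmic loss and the $O(\log n)$ from the set-cover iteration, gives total cost $\widetilde O(\beta n^\epsilon\cdot\opt^2)$, i.e.\ ratio $\widetilde O(\beta n^\epsilon\cdot\opt)$. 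Relatedly, you misattribute the $\beta$ factor: it comes from the hop-class split, not from the $\beta n$ vertex blow-up in the layered graph — the latter only costs $(\beta n)^\epsilon=O(n^{\epsilon'})$ and is absorbed into the $n^\epsilon$.
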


To prove the theorem, we give an algorithm similar to a subroutine of the directed pairwise weighted spanner algorithm of~\cite{GKL23}, where ``weighted'' refers to edge costs. Just as for hopsets, the directed pairwise weighted spanner problem does not have $n-1$ as a lower bound on the cost of the optimal solution. This allows their techniques to be useful in our setting. 

In the pairwise weighted spanner subroutine of~\cite{GKL23}, they define a variant of the junction tree (the ``distance-preserving junction tree''). Junction trees are rooted trees that satisfy demands, and \textit{good} junction trees satisfy many demands at low cost; that is, they have low ``density.'' Junction trees have been used in several spanner approximation algorithms (e.g.~\cite{GKL23, CDKL20, GKL24}). In~\cite{GKL23}, they give an algorithm that iteratively buys their version of low density junction trees until all demands are satisfied. Our algorithm will follow the same structure. The main technical work in this section is in showing that low-density \textit{hopbounded} junction trees exist in our setting, and that we can use the subroutine of~\cite{GKL23} to find these hopbounded junction trees, even though their subroutine does not have any hop guarantees.

We note that the junction tree framework developed by~\cite{GKL24} for multicriteria spanners also works for our setting. Their framework implies an $\widetilde{O}(|\mathcal{D}|^\epsilon \cdot \beta)$-approximation for finding the hopbounded minimum-density junction tree, and thus a $\widetilde{O} (\beta^2 \cdot |\mathcal{D}|^\epsilon \cdot {\opt})$-approximation for directed {\hopset}.
By tailoring our framework to hopsets, we achieve an $O(n^\epsilon)$-approximation for finding the hopbounded min-density junction tree (note the lack of dependence on $\beta$), implying our main result of this section (Theorem~\ref{thm:junction_tree}), a $\widetilde{O} (\beta \cdot n^\epsilon \cdot {\opt})$-approximation overall. By using our hopbounded junction tree framework, we lose a factor $\beta$ in our overall approximation compared to what is implied by~\cite{GKL24}.

We first define a hopbounded variant of the junction tree, which we call an $(i,j)$-distance-preserving hopbounded junction tree. We parameterize by $i,j$, where $i+j \leq \beta$, to ensure that both ``sides'' of the rooted tree---the in-arborescence and the out-arborescence that make up the tree---are hopbounded by $i$ and $j$, respectively, so that all paths in the tree have at most $\beta$ hops. 

\begin{definition}[$(i,j)$-Distance-Preserving Hopbounded Junction Tree] 
     An \textit{$(i,j)$-distance-preserving hopbounded junction tree}, where $i + j \leq \beta$, is a subgraph of $G_M$ that is a union of an in-arborescence and an out-arborescence, both rooted at some vertex $r \in V$, with the following properties: 1) every leaf of the in-arborescence has a path of at most $i$ hops to $r$, 2) for every leaf $w$ in the out-arborescence, there is a path of at most $j$ hops from $r$ to $w$, and 3) for some node $s$ in the in-arborescence and some node $t$ in the out-arborescence, there is an  $s-t$ path through $r$ with distance at most $Dist(s,t)$. The \textit{density} of an $(i,j)$-distance-preserving hopbounded junction tree $T$ is the ratio of the cost of $T$ to the number demands settled by $T$. 
\end{definition}

Going forward, we will refer to the $(i,j)$-distance-preserving hopbounded junction tree as simply an ``$(i,j)$-junction tree.'' Our algorithm will find and remove a low-density hopbounded junction tree from $G_M$, add its edges to the current solution, and repeat, until all demand pairs are settled. 
We will give an $O(n^\epsilon)$-approximation for finding these low-density junction trees. The algorithm will return a subgraph with total cost of $\widetilde{O} (\beta^2 n^\epsilon \cdot {{\opt}}^2 )$.

\subsubsection{Existence of Low-Density Junction Trees}
Let $D$ be the set of \textit{unsettled} vertex pairs at some iteration of the algorithm.
We first argue that a hopbounded junction tree with density $O (\beta^2 \cdot {\opt}^2 / \mbox{ } |D| )$ always exists (at any iteration), where ${\opt}$ is the cost of the optimal solution to the problem instance. 

\begin{lemma}
\label{lem:existence}
    For any set of unsettled demands $D$, there exists an $(i,j)$-junction tree with density $O(\beta \cdot {{\opt}}^2 / |D|) )$.
\end{lemma}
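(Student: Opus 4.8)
The plan is to take an optimal hopset solution $H^*$ (with cost $\opt$) and use it to exhibit a low-density $(i,j)$-junction tree for the current set $D$ of unsettled demands. Since every demand $(s,t)\in D$ is settled by $H^* \cup E$, there is a valid $s$-$t$ path $P_{s,t}$ in $G_M$ (using only edges of $E \cup H^*$, so of cost at most $\opt$) with at most $\beta$ hops and length at most $Dist(s,t)$. First I would build an auxiliary structure recording, for each such path and each ``split point'' along it, how the path decomposes into an in-portion and an out-portion. Concretely, for a path $P_{s,t} = (s = v_0, v_1, \dots, v_\ell = t)$ with $\ell \le \beta$, and for each index $0 \le k \le \ell$, we think of $v_k$ as a candidate root: the prefix $v_0 \cdots v_k$ is an in-path of $k \le \beta$ hops and the suffix $v_k \cdots v_\ell$ is an out-path of $\ell - k \le \beta$ hops, with $k + (\ell-k) = \ell \le \beta$.

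The core of the argument is an averaging/counting step. Consider the multiset of all (demand, split-index) pairs, i.e.\ $\sum_{(s,t) \in D} (\ell_{s,t}+1) \le (\beta+1)|D|$ such pairs. Each pair designates a root vertex $v_k \in V$ and contributes one ``half-path pair'' hanging off that root. By pigeonhole, some root vertex $r \in V$ is designated by at least $|D|/n$ of the demands (charging each demand to the root that it uses for \emph{some} fixed canonical split, say the middle one — this is cleaner than summing over all splits). Now form the subgraph $T_r$ that is the union of all the canonical half-paths through $r$ coming from those demands: the union of the in-prefixes is an in-arborescence rooted at $r$ (take a shortest-path arborescence inside this union to make it a tree) and the union of the out-suffixes is an out-arborescence; every in-leaf reaches $r$ within $i := \lceil \beta/2 \rceil$ hops and $r$ reaches every out-leaf within $j := \lfloor \beta/2 \rfloor$ hops, with $i + j \le \beta$; and for each of the $\ge |D|/n$ chosen demands $(s,t)$, the concatenation of $s$'s in-branch and $t$'s out-branch through $r$ is exactly $P_{s,t}$, which is valid, so $T_r$ settles that demand. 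The cost of $T_r$ is at most the number of distinct transitive-closure edges it uses, which is at most the total number of edges in all the half-paths, i.e.\ at most $\beta \cdot \opt$ if we are careful (each path uses $\le \beta$ edges and lies in $E \cup H^*$; edges in $E$ are free, so the cost is bounded by how many \emph{distinct} $\widetilde E$-edges appear, which is at most $|H^*| \cdot (\text{something})$). Here I need to be a little careful: naively the cost is bounded by $\beta \cdot |D|/n \cdot \opt$, which is too weak; the right bound comes from noting the half-paths live in $G \cup H^*$, so the number of distinct $\widetilde E$-edges used is at most... actually it can be as large as $\opt$ times the number of demands in the worst case, so to get the claimed $O(\beta \opt^2/|D|)$ density I expect the argument instead picks the root more cleverly or bounds cost by $\opt$ per ``layer'' — giving cost $O(\beta \cdot \opt)$ and hence density $O(\beta \opt / (|D|/n)) = O(\beta n \opt / |D|)$, and then a second averaging over which of the $\le \opt$ edges of $H^*$ the root is incident to replaces the factor $n$ by $\opt$.

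That second averaging is what I expect to be the main obstacle and the real content of the lemma. The idea: instead of pigeonholing over all $n$ vertices, observe that for each demand $(s,t) \in D$ the valid path $P_{s,t}$ must use at least one edge of $\widetilde E$ (otherwise it would already be satisfied, since $E \subseteq H \cup E$ settles nothing new — wait, it could be settled by $E$ alone; but if $(s,t)$ is unsettled then $G$ itself does not settle it, so $P_{s,t}$ indeed uses $\ge 1$ edge of $H^*$). Charge $(s,t)$ to one such edge $e = (a,b) \in H^* \cap P_{s,t}$. By pigeonhole over the $\le \opt$ edges of $H^*$, some edge $e^* = (a,b)$ is charged by at least $|D|/\opt$ demands. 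For all those demands, $P_{s,t}$ passes through $e^*$, so it splits into an in-path to $a$ ($\le \beta$ hops) and an out-path from $b$ ($\le \beta$ hops); using $a$ as the in-root and $b$ as the out-root (joined by the free... no, $e^*$ costs $1$) gives a junction-tree-like object — one extra wrinkle is that a standard junction tree has a single root, so I would root at $a$ and absorb edge $e^*$ and the out-arborescence from $b$ into the out-side of $a$, checking the hop budget still works ($i$ hops in $+$ $1$ for $e^*$ $+$ $j$ hops out $\le \beta$, which holds since $i+1+j \le \beta$ when the original path had $\le \beta$ hops). The total cost of this tree is at most the number of distinct $\widetilde E$-edges among all the charged half-paths, which is at most $\opt$ (they all lie in $H^*$!) — wait, that gives cost $\le \opt$, not $\opt^2$. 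Combined with $\ge |D|/\opt$ demands settled, density is $O(\beta \opt / (|D|/\opt)) = O(\beta \opt^2 / |D|)$, matching the statement. So the plan in one line: \emph{(i)} for each unsettled demand pick a valid $\le\beta$-hop path in $G\cup H^*$ using a witnessing $H^*$-edge; \emph{(ii)} pigeonhole to find an $H^*$-edge $e^*$ (equivalently a root) shared by $\ge |D|/\opt$ demands; \emph{(iii)} glue the corresponding half-paths into an $(i,j)$-junction tree of cost $\le O(\opt)$ (all non-free edges come from $H^*$), verify the hop split $i+j \le \beta$ and the distance constraints; \emph{(iv)} conclude density $O(\beta\opt^2/|D|)$. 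The fiddly parts are the exact hop bookkeeping at the glued edge and confirming that taking shortest-path arborescences inside the unions preserves both the hop bounds and the validity (distance $\le Dist(s,t)$) of each settled demand's path, which follows because we can always retain the original $P_{s,t}$ as a branch pair.
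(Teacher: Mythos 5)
Your approach is the same as the paper's at a high level: charge each unsettled demand's witness path to an $\widetilde E$-edge of the optimal solution, pigeonhole over the at most $\opt$ such edges to find one edge shared by $\ge |D|/\opt$ demands, and glue the resulting half-paths into a junction tree rooted at an endpoint of that edge. You also correctly identified, and correctly rejected, the naive pigeonhole over all $n$ vertices.

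There is, however, a genuine gap in step (iii), and it is precisely where the $\beta$ factor in the density bound actually comes from. After fixing the common edge $e^* = (u,v)$, the $\ge |D|/\opt$ witness paths through $e^*$ will in general have \emph{different} hop splits at $u$: one demand's path may use $5$ hops to reach $u$ and $3$ hops after, another may use $2$ and $6$. If you simply take the union of all these paths rooted at $u$, the in-arborescence has depth $i = \max$ in-hops and the out-arborescence has depth $j = \max$ out-hops, and there is no reason $i + j \le \beta$ should hold --- in the example above with $\beta = 8$, you would get $i = 5$, $j = 6$, and $i+j = 11 > \beta$. So the union is not an $(i,j)$-junction tree for any admissible $(i,j)$, and your argument as stated does not produce a feasible solution. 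Relatedly, your density arithmetic ``$O(\beta \opt / (|D|/\opt))$'' introduces a $\beta$ out of nowhere: cost $\le \opt$ over $\ge |D|/\opt$ demands gives $O(\opt^2/|D|)$, not $O(\beta\opt^2/|D|)$, which is a symptom of the missing step rather than a typo.

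The fix, which is what the paper does, is a \emph{second} averaging: partition the $\ge |D|/\opt$ demands through $e^*$ into the $\beta+1$ classes indexed by pairs $(x,y)$ with $x+y=\beta$, placing a demand into class $(x,y)$ if its path has at most $x$ hops to $u$ and at most $y$ hops from $u$. Each demand lies in at least one class, so some class $(i,j)$ contains $\Omega\big(|D|/(\beta\,\opt)\big)$ demands, and the union of the half-paths \emph{within that class} is now a genuine $(i,j)$-junction tree (all in-branches have at most $i$ hops, all out-branches at most $j$, and $i+j=\beta$) of cost at most $\opt$. This yields density $O(\beta\,\opt^2/|D|)$, and explains the $\beta$. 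The remaining details in your writeup (rooting at $u$ versus $a$, absorbing $e^*$ into the out-side) are fine and equivalent to the paper's phrasing.
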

\begin{proof}
    Let $H$ be an optimal solution subgraph to the graph $G_M$, and let $S$ be the cheapest subgraph of $G_M$ that settles all demands in $D$.
    We have that $c(S) \leq c(H) = {\opt}$.
    We now look at the set of paths in $S$ that settle the demands in $D$. Each of these $|D|$ paths must use some edge in $S \cap \widetilde{E}$; otherwise, the demand is already settled and cannot be in $D$. Due to averaging, there must be some edge $e \in S \cap \widetilde{E}$ that belongs to at least $|D| / |S \cap \widetilde{E}| \geq |D| / {\opt}$ of these paths. 
    
    Let $D_e \subseteq D$ be the set of demands settled by paths through $e = (u,v)$ in $S$. Let $P_{e}$ be these demand-settling paths (that is, $P_{e}$ is the set of paths that settle the demands in $D_e$ through $e$). 
    We now place the demands in $D_e$ into $O(\beta)$ classes as follows. Let $x,y$ be nonnegative integers such that $x+y = \beta$. We say that a demand $(s,t) \in D_e$ is in class $(x,y)$ if its corresponding path in $P_{e}$ has at most $x$ hops from $s$ to $u$ and at most $y$ hops from $u$ to $t$. Note that every demand in $D_e$ belongs to at least one class. Recall that $e$ belongs to at least $|D| \mbox{ } / \mbox{ } {\opt}$ demand-settling paths in $S$.
    Again due to averaging, there must be some class containing $\Omega\left( |D| \mbox{ } / \mbox{ } (\beta \cdot {\opt}) \right)$ demands. Let $(i,j)$ denote such a class, let $D_e^{(i,j)} \subseteq D_e$ denote the set of demands in this class, and let $P_e^{(i,j)}$ be their corresponding demand-settling paths. 

    Now consider the tree created by adding all paths in $P_{e}^{(i,j)}$, rooted at vertex $u$. This tree is an $(i, j)$-junction tree, as it is the union of an in-arborescence where each leaf has a path of at most $i$ hops to $u$, and an out-arborescence where $u$ has a path with at most $j$ hops to each leaf, where $i+j \leq \beta$. This tree has cost at most ${\opt}$ and settles at least $|D_e^{(i,j)}| = \Omega \left( |D| \mbox{ } / \mbox{ } (\beta \cdot {\opt}) \right)$ demands, and thus has density $O( \beta \cdot {{\opt}}^2 / \mbox{ } |D| )$.
\end{proof}

\subsubsection{Layered Graph Reduction} \label{sec:layered_reduction}
We want to show that we can \textit{find} a junction tree with low enough density at each iteration of the algorithm. To do so, we will use the junction-tree finding subroutine provided in~\cite{GKL23}. Their subroutine, however, finds non-hop-constrained junction trees. We therefore transform our input graph in order to use their subroutine to find $(i,j)$-junction trees. We build the following $\beta$-layered graph out of $G_M$. 

\paragraph{Layered Graph Construction.} To construct the layered graph $G_L = (V_L, E_L)$ with costs $c_L : E_L \rightarrow \{0,1\}$, weights $\ell_L : E_L \rightarrow \{1, 2, \dots, \texttt{poly}(n) \}$, demand set $\mathcal{D}_L$, and distance bounds $Dist_L : \mathcal{D}_L \rightarrow \mathbb{N}_{\geq 0}$, 
we first create $\beta + 1$ copies of each vertex $u \in V$ so that $u$ corresponds to vertices $u_0, u_1, \dots, u_{\beta}$ in $V_L$.
For each edge $(u,v) \in E_M$ we add edges $\{ (u_i, v_{i+1})$ : $i \in [0, \beta-1] \}$ to $E_L$. 
For each edge $e = (u_i, v_{i+1})$ of this type, we set $\ell_{L}(e) = \ell(u,v)$. We also set $c_{L}(e) = 1$ if $(u,v) \in \widetilde{E}$; otherwise we set $c_{L}(e) = 0$.  
For each vertex in $V_L$, we also add edges $\{ (u_i, u_{i+1}) : i \in [0, \beta-1]  \}$ to $E_L$, and set their costs and weights to $0$. 
Finally, we add a demand $(s_0, t_\beta)$ to $\mathcal{D}_L$ for demand each $(s,t) \in \mathcal{D}$.

By design, $(i,j)$-junction trees in $G_M$ correspond to $(i,j)$-junction trees in $G_L$ (and vice versa) with the same density. The proof of this is straightforward, \iflong and is given in the following pair of lemmas. \else and is deferred to Appendix~\ref{app:reduction}. We say that {\jt} is the optimization problem of finding the minimum density $(i,j)$-junction tree in an input graph, over all possible values of $i,j$. \fi

\iflong
\begin{lemma}
\label{cl:input_to_layered}
    For any $(i,j)$-junction tree in $G_M$, there exists an $(i,j)$-junction tree in $G_L$ with the same density. 
\end{lemma}
\begin{proof}
    Fix an $(i,j)$-junction tree $T$ in $G_M$. We build an $(i,j)$-junction tree $T_L$ in $G_L$ with the same density. Tree $T$ is the union of an in-arborescence, which we denote as $A^{in}$, and an out-arborescence, denoted by $A^{out}$, both of which are rooted at some node $r$. We add node $r_i \in V_L$ to $T_L$ as its root. 
    For each vertex $u \in A^{in}$, let $h_u$ be the number of hops (edges) on the $u-r$ path in $A^{in}$; likewise, for each vertex $w \in A^{out}$, let $h_w$ be the number of hops on the $r-w$ path in $A^{out}$. For each edge $(u,v) \in A^{in}$, we add edge $(u_{i-h_u}, v_{i-h_u+1})$ (and corresponding nodes) to $T_L$. Similarly, for each edge $(u,v) \in A^{out}$, we add edge $(u_{i+h_u}, v_{i+h_u+1})$ (and corresponding nodes) to $T_L$.
   
    Finally, we add ``dummy paths'' to $T_L$ to ensure that the corresponding demands in $\mathcal{D}_L$ are settled. These dummy paths will handle demand-settling paths in $T$ that have fewer that $\beta$ hops. Let $A^{in}_L$ and $A^{out}_L$ denote the in- and out-arborescences of $T_L$, respectively. For each vertex $v_x \in A^{in}_L$, we add edges $\{ (v_k, v_{k+1}) :  0 \leq k < x  \}$ (and corresponding nodes) to $T_L$ (if they don't already exist in $T_L$). Similarly for each vertex $v_x \in A^{out}_L$, we add edges $\{ (v_k, v_{k+1}) :  x \leq k < \beta  \}$ (and corresponding nodes) to $T_L$ (if they don't already exist in $T_L$). With these dummy paths, we ensure that if demand $(s,t)$ is settled by a path in $T$ with fewer than $\beta$ hops, then the corresponding demand $(s_0, t_\beta)$ is also settled in $T_L$.  

    Tree $T_L$ has the same cost as $T$: for every edge $(u,v) \in T$ we add an edge $(u_k, v_{k+1})$, for some integer $k$, to $T_L$. Recall that $(u,v)$ and $(u_k, v_{k+1})$ have the same cost, for any $0 \leq k < \beta$. All other edges in $T_L$ (i.e., all edges on dummy paths) are of the form $(u_k, u_{k+1})$, for some $k$, and have cost $0$. 
    
    The number of demands satisfied in both trees is also the same, which we show by mapping each demand $(s,t) \in \mathcal{D}$ settled by $T$ to a unique demand $(s_0, t_\beta) \in \mathcal{D}_L$ settled by $T_L$ (and vice versa). We now show that if $(s,t)$ is settled by $T$, then $(s_0, t_\beta)$ is settled by $T_L$. Let $P = (s, a, b, c \dots, t)$ be the path in $T$ that settles $(s,t)$. Then, path $P_L = (s_0, \dots, s_k, a_{k+1}, b_{k+2}, c_{k+3}, \dots, t_{m}, \dots , t_\beta )$ is in $T_L$, for some $k, m$. Note that subpaths $(s_0, \dots, s_k )$ and $(t_{m}, \dots , t_\beta )$ are dummy subpaths.
    Paths $P$ and $P_L$ have the same length---any edge $(u,v) \in P$ has the same length as its corresponding edge $(u_k, v_{k+1})$ (for some $k$) in $P_L$, and all other edges (i.e., edges from the dummy subpaths $(s_0, \dots, s_k)$ and $(t_m, \dots, t_\beta)$, if they exist) have length $0$. We therefore have that $d_{T_L}(s_0,t_\beta) = d_T(s,t) \leq  Dist(s,t) = Dist_L(s_0,t_\beta)$. Also, path $P_L$ has exactly $\beta$ hops. Thus, demand $(s_0, t_\beta)$ is settled by $T_L$. It is also not difficult to see that each demand $(s_0, t_\beta) \in \mathcal{D}_L$ settled by $T_L$ can be mapped to a unique demand $(s,t) \in \mathcal{D}$ settled by $T$, using similar arguments. Trees $T$ and $T_L$ therefore have the same cost and settle the same number of demands, and so have the same density. 
\end{proof}

\begin{lemma}
\label{cl:layered_to_input}
    For any $(i,j)$-junction tree in $G_L$, there exists an $(i,j)$-junction tree in $G_M$ with the same density.
\end{lemma}
\begin{proof}
     Given an $(i,j)$-junction tree $T_L$ of $G_L$, we can build an $(i,j)$-junction tree $T$ in $G_M$ with the same density. Note that edges only exist between adjacent layers in $G_L$ (and in $T_L$)---namely, all edges in $T_L$ are of the from $(u_k, v_{k+1})$ for some $k$. For each $k \in [0, \beta]$ and for each edge $(u_k, v_{k+1})$ in $T_L$ such that $u \neq v$, we add edge $(u, v) \in G_M$ (and corresponding nodes) to $T$. 
    
    Trees $T_L$ and $T$ have the same cost: For each edge $(u_k, v_{k+1}) \in T_L$ (for some $k$) with cost $1$, we add edge $(u,v)$ to $T$, which also has cost $1$. For all other edges in $T_L$ (all of which have no cost), we either add the corresponding edge to $T$, which also has no cost, or we add no edge. 
    
    Both trees also settle the same number of demands. Each demand $(s_0, t_\beta) \in \mathcal{D}_L$ settled by $T_L$ can be mapped to a unique demand $(s,t) \in \mathcal{D}$ settled by $T$. Let $P_L = (s_0, \dots, s_k, a_{k+1}, b_{k+2}, c_{k+3}, \dots, t_{m}, \dots , t_\beta )$ be the path that settles $(s_0, t_\beta)$ in $T_L$. Then, the path $P = (s, a, b, c, \dots, t)$ is in $T$. Paths $P$ and $P_L$ have the same length---any edge of the form $(u_k,v_{k+1}) \in P_L$ (for some $k$), where $u \neq v$, has the same length as its corresponding edge $(u, v) \in P$. All other edges in $P_L$ (i.e., edges from the dummy subpaths if they exist) have length $0$. We therefore have that $d_T(s,t) = d_{T_L}(s_0,t_\beta) \leq Dist_{L}(s_0,t_\beta) = Dist(s,t)$. Path $P$ also has at most $\beta$ hops, so $(s,t)$ is settled by $T$. It is also not difficult to see that each demand $(s, t) \in \mathcal{D}$ settled by $T$ can be mapped to a unique demand $(s_0,t_\beta) \in \mathcal{D}_L$ settled by $T_L$, using similar arguments. We've shown that $T_L$ and $T$ have the same cost and settle the same number of demands, and so they have the same density.
\end{proof}

Let $\Delta(T)$ denote the density of junction tree $T$. The above lemmas imply the following:

\begin{corollary}
\label{cor:equivalent}
    Let $T^*$ be the min-density $(i,j)$-junction tree (over all possible $i,j$) in $G_M$, and let $T_L^*$ be the min-density $(i,j)$-junction tree (over all possible $i,j$) in $G_L$. Then, $\Delta(T^*) = \Delta(T_L^*)$.
\end{corollary}
\begin{proof}
    By Lemma~\ref{cl:input_to_layered}, we have that $\Delta(T_L^*) \leq \Delta(T^*)$.  By Lemma~\ref{cl:layered_to_input}, $\Delta(T^*) \leq \Delta(T_L^*)$. Therefore, $\Delta(T^*) = \Delta(T_L^*)$.
\end{proof}

We now use this Corollary to reduce from finding min-density $(i,j)$-junction trees in $G_M$ to finding them in $G_L$. We say that {\jt} is the optimization problem of finding the minimum density $(i,j)$-junction tree in an input graph, over all possible values of $i,j$.

\begin{lemma}
\label{lem:reduction}
     If there is an $\alpha$-approximation algorithm for {\jt} on $G_L$, then there is also an $\alpha$-approximation algorithm for {\jt} on $G_M$. 
\end{lemma}
\begin{proof}
    Suppose we have an $\alpha$-approximation for {\jt} on graph $G_L$. Then, the following is an algorithm for {\jt} on $G_M$. First, run the $\alpha$-approximation algorithm on $G_L$, and let $T_L$ be the tree returned by the algorithm. Using the procedure described in Lemma~\ref{cl:layered_to_input}, we can build (in polynomial time) a valid $(i,j)$-junction tree $T$ of $G_M$ with the same density as $T_L$. The density of $T$ is as follows:
    \begin{align*}
        \Delta(T) = \Delta(T_L) \leq \alpha \cdot \Delta(T_L^*) = \alpha \cdot \Delta(T^*).
    \end{align*}
    The final equality is due to to Corollary~\ref{cor:equivalent}.
\end{proof}

\else
\begin{lemma}
\label{lem:reduction}
     If there is an $\alpha$-approximation algorithm for {\jt} on $G_L$, then there is also an $\alpha$-approximation algorithm for {\jt} on $G_M$. 
\end{lemma}
\fi

\subsubsection{Junction Tree-Finding Subroutine}
We now show that we can find low-density junction trees at each iteration of the algorithm. Although $(i,j)$-junction trees are hopbounded by definition,  we can now use the following length-bounded junction tree-finding subroutine of~\cite{GKL23} to find hopbounded junction trees, thanks to the reduction to the $\beta$-layered graph $G_L$.

\begin{lemma}[Lemma 16 of \cite{GKL23}]
\label{lem:og_JT_alg}
    For any constant $\epsilon > 0$, there is a polynomial-time approximation algorithm for finding the minimum density distance-preserving junction tree. That is, there is a polynomial time algorithm which, given a weighted directed $n$-vertex graph $G = (V,E)$ where each edge $e \in E$ has cost $c(e) \in \mathbb{R}_{\geq 0}$ and integral length $\ell(e) \in \{0,1, \dots, \poly(n)\}$, terminal pairs $\mathcal{D} \subseteq V \times V$, and distance bounds $Dist : \mathcal{D} \rightarrow \mathbb{N}$ for each terminal pair $(s,t) \in \mathcal{D}$, approximates the following problem to within an $O(n^\epsilon)$ factor:
    \begin{itemize}
        \item Find a non-empty set of edges $F \subseteq E$ minimizing the ratio:
    \end{itemize}
    \begin{align*}
        \min_{r \in V} \frac{\sum_{e \in F} c(e)}{|\{(s,t) \in \mathcal{D} : d_{F,r}(s,t) \leq Dist(s,t) \}|}
    \end{align*}
    where $d_{F,r}(s,t)$ is the length of the shortest path using edges in $F$ which connects $s$ to $t$ while going through $r$ (if such a path exists). We call this problem {\ljt}.
\end{lemma}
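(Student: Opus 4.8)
The cleanest plan is to invoke this verbatim: the statement is exactly Lemma~16 of~\cite{GKL23}, so in the paper we simply cite it. If one instead wanted to reprove it, the natural route is the one used in~\cite{GKL23} (building on~\cite{CDKL20,CEGS11}): first reduce the \emph{distance}-constrained density problem to a pure \emph{connectivity} density problem by layering, and then approximate the connectivity version by a recursive height-reduction argument.

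The first step is to layer away the distance constraint, in the spirit of the $\beta$-layered reduction of Section~\ref{sec:layered_reduction}. Build a graph $G'$ on vertex set $V \times \{0,1,\dots,D\}$ where $D = \max_{(s,t)\in\mathcal D} Dist(s,t) \le \poly(n)$; replace each edge $(u,v)$ of length $\ell(u,v)$ by the copies $(u_a, v_{a+\ell(u,v)})$ over all feasible $a$, each carrying the original cost $c(u,v)$, and add zero-cost \quotes{vertical} edges $(v_a, v_{a+1})$ so that a short path can be padded up to any later layer. Then a path from $s$ to $t$ through $r$ of length at most $Dist(s,t)$ in $G$ corresponds exactly, and with the same cost, to a path from $s_0$ to $t_{Dist(s,t)}$ through some copy $r_a$ of $r$ in $G'$, and conversely. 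Hence minimizing the density of a distance-preserving junction tree in $G$ reduces, preserving density and with only a polynomial blowup in $|V|$, to the length-free problem of minimizing $c(F)/|\{(s,t)\in\mathcal D : s_0 \text{ reaches } r_a \text{ reaches } t_{Dist(s,t)} \text{ in } F\}|$ over edge sets $F$ and roots $r_a$ in $G'$.

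For the connectivity density junction tree problem I would guess the root $r$ and run the standard recursion. A junction tree at $r$ is an in-arborescence into $r$ together with an out-arborescence out of $r$, and we are credited for each demand whose source lies in the first and whose sink lies in the second; by the usual height-reduction trick one may assume each arborescence has height $O(1/\epsilon)$, at the cost of a factor that is absorbed into the final bound. Then recurse on the height: for a height-$h$ instance one proves, via an averaging argument (or an LP relaxation together with its rounding), a dichotomy---either a single \quotes{star} consisting of the root plus one hop to a carefully chosen group of children, with each child's height-$(h-1)$ subtree handled by one level of recursion, already realizes density within the target factor, or else some child's height-$(h-1)$ subinstance has comparably good density and we recurse there. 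Unrolling the $O(1/\epsilon)$ recursion levels and charging a bounded factor at each level yields the claimed $O(n^\epsilon)$ approximation. The main obstacle is precisely this last analysis: making the \quotes{good star vs.\ good subtree} dichotomy quantitatively tight, and controlling the product of the per-level losses so that it stays $n^\epsilon$ rather than degrading through the recursion; the layering step itself is routine, needing only that lengths and distance bounds are $\poly(n)$ so that $G'$ stays polynomial-sized. For our purposes none of this needs to be redone---Lemma~\ref{lem:reduction} already carries us from $G_M$ to the $\beta$-layered graph $G_L$, on which Lemma~\ref{lem:og_JT_alg} applies as a black box.
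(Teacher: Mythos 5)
Correct: the paper treats this as an external result and simply cites Lemma~16 of~\cite{GKL23} without reproving it, which is exactly what you propose. Your sketch of how~\cite{GKL23} establishes it (layering away the distance bound, then recursive height-reduction for the connectivity density problem) is a reasonable summary of their argument, but it is not needed in this paper.
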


This gives an $O(n^\epsilon)$-approximation algorithm for finding the min-density $(i,j)$-junction tree on $G_M$. \iflong \else The proof of the following lemma can be found in Appendix~\ref{app:jt_alg}. \fi

\begin{lemma}
\label{lem:hop_JT_approx}
    There is an $O(n^\epsilon)$-approximation for {\jt} on $G_M$.
\end{lemma}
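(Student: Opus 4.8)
The plan is to combine the layered-graph reduction (Lemma~\ref{lem:reduction}) with the junction-tree-finding subroutine of~\cite{GKL23} (Lemma~\ref{lem:og_JT_alg}). By Lemma~\ref{lem:reduction}, it suffices to exhibit an $O(n^\epsilon)$-approximation for {\jt} on the layered graph $G_L$. The key observation is that the layered construction was designed precisely so that any low-density distance-preserving junction tree in $G_L$ (in the unrestricted, non-hopbounded sense of Lemma~\ref{lem:og_JT_alg}) automatically corresponds to an $(i,j)$-junction tree for some $i+j \leq \beta$, because $G_L$ only has edges between consecutive layers: a path from layer $0$ to layer $\beta$ has at most $\beta$ hops, and any in/out arborescence rooted at a vertex $r_k$ is confined to layers below/above $k$, giving the required hop split $i + j \leq \beta$ for free.

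First I would run the algorithm of Lemma~\ref{lem:og_JT_alg} on $G_L$ with the demand set $\mathcal{D}_L$ and distance bounds $Dist_L$; this returns an edge set $F \subseteq E_L$ together with a root $r \in V_L$ achieving ratio within $O(n^\epsilon)$ of the optimal distance-preserving junction tree density in $G_L$. Note $|V_L| = (\beta+1)n = \texttt{poly}(n)$, so $O(|V_L|^\epsilon) = O(n^{\epsilon'})$ for a slightly rescaled constant, absorbed into $\epsilon$. Since every edge of $G_L$ goes from some layer to the next, the root $r$ lies in some layer $k$, and $F$ (restricted to its connected-through-$r$ component, which is all that matters for the ratio) decomposes into an in-arborescence living in layers $\le k$ and an out-arborescence living in layers $\ge k$; setting $i = k$ and $j = \beta - k$ shows this is a valid $(i,j)$-junction tree in $G_L$, with density equal to the ratio output by the subroutine. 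Conversely, the optimal $(i,j)$-junction tree in $G_L$ (over all $i,j$) is in particular a feasible distance-preserving junction tree for the problem in Lemma~\ref{lem:og_JT_alg}, so the optimum there is at most $\Delta(T_L^*)$; combined with the $O(n^\epsilon)$ guarantee, the returned tree has density $O(n^\epsilon) \cdot \Delta(T_L^*)$. This gives an $O(n^\epsilon)$-approximation for {\jt} on $G_L$, and then Lemma~\ref{lem:reduction} transfers it to $G_M$.

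The main subtlety to get right is the correspondence between the ``connect $s$ to $t$ through $r$'' condition of Lemma~\ref{lem:og_JT_alg} and the hopbounded junction tree definition: I must verify that the subroutine's notion of feasibility (shortest path through $r$ within the distance bound) coincides, under the layered encoding, with settling the demand $(s_0, t_\beta)$ via a $\le \beta$-hop path of the right length, which is exactly what the dummy-path machinery in the construction (and Lemmas~\ref{cl:input_to_layered}--\ref{cl:layered_to_input}) was set up to ensure. I expect no real obstacle here beyond bookkeeping, since the layered graph was built so that hop counts are encoded as layer indices; the one thing to state carefully is that minimizing over the root $r \in V_L$ in Lemma~\ref{lem:og_JT_alg} simultaneously optimizes over the split $(i,j)$, so a single invocation of the subroutine handles all values of $i,j$ at once.
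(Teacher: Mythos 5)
Your proposal is correct and follows essentially the same approach as the paper's proof: reduce to $G_L$ via Lemma~\ref{lem:reduction}, run the subroutine of Lemma~\ref{lem:og_JT_alg} on $G_L$, observe that the layered structure forces the returned tree to be an $(i,j)$-junction tree with $i+j \le \beta$ and that the optimal $(i,j)$-junction tree is feasible for the unconstrained subroutine, and absorb the $\beta^\epsilon$ factor from $|V_L| = (\beta+1)n$ into a rescaled $\epsilon'$. The extra bookkeeping you flag (that minimizing over the root $r \in V_L$ simultaneously ranges over all splits $(i,j)$) is a fair point worth stating, but the argument matches the paper's.
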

\iflong 
\begin{proof}
    By Lemma~\ref{lem:reduction}, we can prove the lemma by giving an $O(n^\epsilon)$-approximation for {\jt} on $G_L$. The algorithm is as follows: Simply run the algorithm of Lemma~\ref{lem:og_JT_alg} on $G_L$.

    We now show that the tree $T_L$ returned by this algorithm is an $(i,j)$-junction tree of $G_L$, where $i+j = \beta$, and that the density of $T_L$ is at most a factor $O(n^\epsilon)$ of the density of the optimal tree.
    Tree $T_L$ has some root $r_i$.
    Fix a demand $(s_0,t_\beta) \in \mathcal{D}_L$ that has length at most $Dist_L(s,t)$ in $T_L$. Due to the structure of $G_L$, the path from $s_0$ to $t_\be$ in $T_L$ must have $i$ hops from $s_0$ to $r_i$ and $\beta - i$ hops from $r_i$ to $t_\beta$. Thus $T_L$ is an $(i,j)$-junction tree. To see that this is an $O(n^\epsilon)$-approximation, first observe that the optimal density $(i,j)$-junction tree is a feasible solution to {\ljt} on $G_L$. As for the approximation ratio, the algorithm gives a $O(|V_L|^\epsilon) = O(\beta^\epsilon |V|^\epsilon) = O(n^{\epsilon'})$ approximation, where $n = |V|$ and $\epsilon' > 0$ is an arbitrarily small constant.
\end{proof}
\else
\fi

\begin{lemma}
\label{lem:hop_JT_alg}
    There is a polynomial-time algorithm that finds an $(i,j)$-junction tree with density $O(\beta n^\epsilon \cdot {{\opt}}^2 / |D|) )$, where $D$ is the set of unsettled demands in $G$.
\end{lemma}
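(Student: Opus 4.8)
The plan is to obtain Lemma~\ref{lem:hop_JT_alg} as an immediate consequence of the existence result (Lemma~\ref{lem:existence}) and the approximation algorithm for {\jt} on $G_M$ (Lemma~\ref{lem:hop_JT_approx}). First I would run the $O(n^\epsilon)$-approximation algorithm of Lemma~\ref{lem:hop_JT_approx} on the weighted transitive closure $G_M$ with the current set $D$ of unsettled demands. By definition of {\jt}, this returns in polynomial time an $(i,j)$-junction tree $T$ --- for some pair $(i,j)$ with $i+j \le \beta$ --- whose density is within an $O(n^\epsilon)$ factor of the density of the minimum-density $(i,j)$-junction tree in $G_M$, taken over \emph{all} valid splits $(i,j)$.

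Next I would bound that optimum density from above using Lemma~\ref{lem:existence}: since there exists \emph{some} $(i,j)$-junction tree of density $O(\beta \cdot {\opt}^2 / |D|)$, the minimum density over all $(i,j)$-junction trees in $G_M$ is $O(\beta \cdot {\opt}^2 / |D|)$. Chaining the two estimates, the tree $T$ output by the algorithm satisfies
\[
\Delta(T) \le O(n^\epsilon) \cdot O(\beta \cdot {\opt}^2 / |D|) = O(\beta n^\epsilon \cdot {\opt}^2 / |D|),
\]
which is exactly the claimed density, and $T$ is produced in polynomial time since the algorithm of Lemma~\ref{lem:hop_JT_approx} builds the polynomial-size layered graph $G_L$, invokes the subroutine of~\cite{GKL23} (Lemma~\ref{lem:og_JT_alg}) on it, and translates the resulting tree back to $G_M$ via the density-preserving map of Lemma~\ref{cl:layered_to_input}.

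The one point that requires mild care is that the $O(n^\epsilon)$ guarantee is against the optimum over all splits $(i,j)$ rather than against a fixed split; this is precisely why Lemma~\ref{lem:existence} is phrased as an existence claim that already ranges over the classes $(i,j)$, and why running Lemma~\ref{lem:og_JT_alg} directly on $G_L$ suffices --- the layered structure of $G_L$ forces any root-respecting path from $s_0$ to $t_\beta$ to consist of exactly $\beta$ edges, automatically split as $i$ hops up to the root and $\beta-i$ down, so no explicit enumeration over $(i,j)$ is needed. Consequently there is no genuine obstacle here beyond tracking the constants hidden inside the two $O(\cdot)$ bounds; the substantive work has already been carried out in Lemmas~\ref{lem:existence} and~\ref{lem:hop_JT_approx}.
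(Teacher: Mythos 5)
Your proof is correct and follows exactly the paper's approach: invoke Lemma~\ref{lem:existence} to upper-bound the minimum density by $O(\beta \cdot \opt^2 / |D|)$, then apply the $O(n^\epsilon)$-approximation of Lemma~\ref{lem:hop_JT_approx} to chain to the stated bound. Your extra remarks about the layered graph forcing the $(i,j)$ split are accurate but redundant, since that reasoning is already contained inside Lemma~\ref{lem:hop_JT_approx}.
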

\begin{proof}
    By Lemma~\ref{lem:existence}, there exists an $(i,j)$-junction tree with density $O(\beta \cdot {{\opt}}^2 / |D|) )$. We can run the $O(n^\epsilon)$-approximation algorithm (Lemma~\ref{lem:hop_JT_approx}) on $G_M$, which outputs an $(i,j)$-junction tree with density $O(\beta n^\epsilon \cdot {{\opt}}^2 / |D|) )$.
\end{proof}

\subsubsection{Proof of Theorem \ref{thm:junction_tree}}

By iteratively buying these low-density $(i,j)$-junction trees, we get an  $O(\beta n^\epsilon \cdot {{\opt}} )$-approximation for {\hopset}.

\begin{proof}
    The algorithm for {\hopset} builds and returns subgraph $H$, which is initialized as empty. Let $D$ be initialized as the set of unsettled demands in the input.
    The algorithm first finds an $(i,j)$-junction tree $T$ with density $O(\beta n^\epsilon \cdot {{\opt}}^2 / |D|) )$, as described in Lemma~\ref{lem:hop_JT_alg}. It then removes $T$ from $G_M$, adds $T$ to $H$, and removes all settled demands from $D$. This process repeats until all demands are settled. 
     Now we show that the algorithm gives an $O(\beta n^\epsilon \cdot {{\opt}}^2 )$-approximation. Suppose the algorithm runs for $\ell$ total iterations. For iteration $k$ of the algorithm, let $T_k$ be the $(i,j)$-junction tree found at that iteration, let $c(T_k)$ be its cost, and let $s_k$ be the number of demands settled by $T_k$. Let $D_k$ be the set of unsettled demands at the start of iteration $k$. The cost of $H$ is the following:
     \begin{align*}
         c(H) = \sum_{k=1}^{\ell} c(T_k) 
         = \sum_{k=1}^{\ell} O\left(   \frac{\beta n^\epsilon \cdot {{\opt}}^2 }{|D_k|}    \cdot s_k \right)  
         &= O \left( \beta n^\epsilon \cdot {{\opt}}^2     \sum_{k=1}^{\ell} \frac{s_k}{|D_k|} \right)  \\
         &= O \left( \beta n^\epsilon \cdot {{\opt}}^2   \cdot \log n \right) \tag{$|D|$\textsuperscript{th} Harmonic number} \\
         &= \widetilde{O}  \left( \beta n^\epsilon \cdot {{\opt}} \right) \cdot {\opt}.
     \end{align*}
    Note that $\sum_{k=1}^{\ell} \frac{s_k}{|D_k|} = O(H_{|D|}) = O(\log n)$, where $H_{|D|}$ is the $|D|$\textsuperscript{th} Harmonic number.
\end{proof}

\subsection{Star Sampling with Randomized LP Rounding Algorithm} \label{sec:star_sampling_rounding}

In this section we prove the following theorem.

\begin{theorem} \label{thm:bbmry_alg}
    There is a randomized algorithm for directed {\hopset} with expected approximation ratio $O( n \ln{n} \, \big/ \sqrt{\opt})$.
\end{theorem}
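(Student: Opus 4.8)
The plan is to combine two ideas: (1) a star-sampling step that settles all demands whose optimal path uses a "high-degree" hub in the optimal hopset $H^*$, and (2) LP rounding (using the $(1+\epsilon)$-approximate solution to~\ref{lp:hopset} from Theorem~\ref{thm:solve_LP}) that settles the remaining demands, whose optimal paths all route through "low-degree" vertices. Following the arborescence-sampling paradigm of~\cite{DK11,BBMRY11} adapted to the weighted transitive closure $G_M$, I would fix a threshold $\tau$ (to be optimized, roughly $\tau = \sqrt{\opt}$). First, sample a random subset $R \subseteq V$ by including each vertex independently with probability $p = \Theta(\tfrac{n \ln n}{\tau})$ (or equivalently sample $\Theta(\tfrac{n\ln n}{\tau})$ vertices). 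For each sampled vertex $r$, add to the solution a shortest-path in-tree and a shortest-path out-tree rooted at $r$ in $G_M$, restricted to be $\beta$-hopbounded; each such "star" (really a pair of arborescences) costs at most $n-1$ edges, so the total cost of this phase is $O(np) = O(n^2 \ln n / \tau)$ edges. The key correctness claim is that for any demand $(s,t)$ whose optimal $\beta$-hop, $Dist(s,t)$-length path $P^*$ in $G^* = G \cup H^*$ passes through some vertex $v$ that is incident to at least $\tau$ edges of $H^*$, with high probability $R$ hits the neighborhood of $v$ in $H^*$, and then the in/out arborescences rooted at that hit vertex give a valid $\beta$-hop path for $(s,t)$ (here I need that the in-tree uses at most $i$ hops and out-tree at most $\beta - i$ hops where $i$ is the position of $v$ on $P^*$ — this is exactly why the arborescences must be hop-constrained, and I should take, for each $r$ and each split $i + j \le \beta$, the shortest $i$-hop in-tree and $j$-hop out-tree, a polynomial number of trees).

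The second phase handles demands $(s,t)$ whose every valid optimal path avoids all high-degree vertices, i.e. routes only through vertices incident to fewer than $\tau$ edges of $H^*$. The point is that the restriction of $H^*$ to such demands — or rather, the natural LP solution induced by $H^*$ restricted to low-degree vertices — has the property that it is "locally sparse": the number of hopset edges relevant to settling these demands near any vertex is small. Concretely, I would argue that the optimal LP value for the sub-instance on these unsettled demands is at most $\opt$, solve~\ref{lp:hopset} on this sub-instance to get a fractional solution $\bm{\mathrm{x}}$ of value $(1+\epsilon)\opt$, and then round it: independently include each edge $e \in \widetilde E$ with probability $\min(1, c \tau \cdot x_e \cdot \ln n / \text{something})$. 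The cost bound is $O(\tau \cdot \opt \cdot \text{polylog})$ in expectation. For correctness, fix a remaining demand $(s,t)$: its LP constraints guarantee every cut against valid paths is fractionally covered, and because the relevant optimal path only touches low-degree vertices, a standard cut-based union-bound argument (as in~\cite{BBMRY11}, using that the number of "dangerous" cuts is controlled by the local neighborhood size $< \tau$) shows that scaling the rounding probabilities by $O(\tau \log n)$ makes every such demand settled with high probability.

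Balancing the two phases: phase one costs $O(n^2 \ln n / \tau)$ and phase two costs $\widetilde O(\tau \cdot \opt)$; setting $\tau = \Theta(n / \sqrt{\opt})$ — wait, balancing $n^2 \ln n / \tau \approx \tau \opt$ gives $\tau = n \sqrt{\ln n / \opt}$, hence total cost $\widetilde O(n \sqrt{\opt})$, i.e. an approximation ratio of $\widetilde O(n \sqrt{\opt}) / \opt = \widetilde O(n / \sqrt{\opt})$ as claimed. I would then combine the cost bounds by Markov / linearity of expectation and a union bound over the $O(n^2)$ demands for correctness, taking constants large enough to drive the failure probability below $1/n$, and absorb the failure event into the expected cost (on failure, just add all $O(n^2)$ edges of $G_M$, contributing negligibly to the expectation).

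\textbf{Main obstacle.} The delicate part is phase two: making precise the claim that demands surviving phase one have optimal paths confined to low-local-neighborhood vertices, and that this lets the LP-rounding union bound go through with only an $O(\tau \cdot \mathrm{polylog})$ blowup rather than a trivial $O(n)$ blowup. This requires the right notion of "local neighborhood size" in $G_M$ (the number of hopset edges incident to a vertex, which is exactly what the threshold $\tau$ controls), careful handling of the fact that we work in the transitive closure where input edges are free, and a correct accounting of how the hop constraint interacts with the cut structure $\mathcal{Z}_{s,t}$. The first phase (star sampling) is comparatively routine once one insists on hop-bounded arborescences and enumerates the $O(\beta)$ hop-splits.
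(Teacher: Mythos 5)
Your high-level plan — a sampling phase plus an LP-rounding phase, balanced at a threshold around $\sqrt{\opt}$ — has the right shape, and the final cost balance does give the claimed ratio. But the quantity you threshold on, and the correctness argument built on it, do not match the paper's and I believe they do not close.

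You define ``thick'' via the \emph{unknown optimal hopset} $H^*$: a demand $(s,t)$ is thick if its optimal path $P^*$ in $G\cup H^*$ passes through a vertex $v$ of $H^*$-degree at least $\tau$, and you argue that sampling $\Theta(n\ln n/\tau)$ vertices hits the $H^*$-neighborhood of $v$ w.h.p., ``and then the in/out arborescences rooted at that hit vertex give a valid $\beta$-hop path for $(s,t)$.'' That last step is a non sequitur. Hitting some $u\in N_{H^*}(v)$ says nothing about whether $u$ lies on any valid $s$-$t$ path; $u$ may be far from $P^*$ in $G$, and $d_G(s,u)+d_G(u,t)$ may vastly exceed $Dist(s,t)$, so the stars at $u$ do not yield a valid path. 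The paper thresholds instead on the \emph{local neighborhood size} $|V^{s,t}|$ — the number of vertices lying on some valid ($\beta$-hop, $Dist(s,t)$-length) $s$-$t$ path — declaring $(s,t)$ thick when $|V^{s,t}|\geq n/b$ with $b=\sqrt{\opt}$. This is computable without $H^*$, and if a sampled vertex $v$ falls in $V^{s,t}$, the two star edges $(s,v),(v,t)$ of $G_M$ give a $2$-hop detour of length $d_G(s,v)+d_G(v,t)\leq Dist(s,t)$ because a valid path through $v$ exists; $H^*$ never appears. The same parameter drives the thin phase: the union bound there is over \emph{minimal anti-hopsets}, and the $(n/b)^{n/b}$ count uses $|V^{s,t}|\leq n/b$ directly. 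You rightly identify this as the delicate part, but $H^*$-degree does not control the number of minimal anti-hopsets, so the union bound would not go through with your parameter.

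A secondary problem is the hop-split machinery. You propose keeping a shortest $i$-hop in-tree and $j$-hop out-tree for each split $i+j\leq\beta$, worried that the detour through a sampled vertex might blow the hop budget. Working in $G_M$ makes this moot: every pair of mutually reachable vertices has a single edge of weight $d_G$, so in- and out-trees degenerate to depth-$1$ stars and the detour is always exactly $2$ hops, automatically within budget for $\beta\geq 2$. Taken at face value, your $O(\beta)$ trees per root would multiply the sampling phase's cost by $\beta$ and ruin the balance. The structural fact you should lean on — and do not quite articulate — is precisely that the transitive closure collapses all the hop-bounded tree bookkeeping to plain stars.
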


The pair of algorithms we give closely follow the $\widetilde{O}(n^{2/3})$- and $\widetilde{O}(\sqrt{n})$-approximations for the unweighted $k$-spanner problem, given by~\cite{DK11} and~\cite{BBMRY11}, respectively. The $k$-spanner algorithm is a trade-off between two algorithms: an arborescence sampling algorithm for settling a class of edges (or demands) that they call ``thick,'' and a randomized LP rounding algorithm for settling ``thin'' edges. For hopsets we will settle these thick demands by sampling directed stars instead of arborescences, and for thin demands we will use a similar LP rounding approach. Although our hopset algorithms are similar to the $k$-spanner algorithms, we get a different approximation ($\widetilde{O}( n \big/ \sqrt{\opt})$ for hopsets versus $\widetilde{O}(\sqrt{n})$ for spanners). This is because \cite{DK11, BBMRY11} take advantage of the fact that for spanners, ${\opt} \geq n-1$. This is not the case for hopsets so we get a different approximation out of the algorithms, in terms of {\opt}. This approach is also similar to that of~\cite{CDKL20} for pairwise distance preservers, where again, $\Omega(n)$ is not a lower bound for {\opt}.

We note that our $O( n \ln{n} \, \big/ \sqrt{\opt})$-approximation is achieved by trading off the two aforementioned algorithms (star-sampling and randomized-rounding). To later achieve the optimal trade-off with other algorithms, one should a priori treat each of these two algorithms as separate, with their own individual approximation ratios. It is however equivalent to trade these two algorithms off first and treat them as one combined algorithm, which we do going forward. This is because these are our only algorithms that will depend on the ``local neighborhood size'' parameter.

To define thick and thin demands, we must first define subgraphs $G^{s,t}$ for all demands $(s,t)$, as in \cite{DK11, BBMRY11}:

\begin{definition}
    For a demand $(s,t) \in \mathcal{D}$, let $G^{s,t} = (V^{s,t}, E^{s,t})$ be the subgraph of $G_M$ induced by the vertices on paths in $\mathcal{P}_{s,t}$. We call $|V^{s,t}|$ the \textit{local neighborhood size}.
\end{definition}

\begin{definition}[Thick and Thin Demands] 
    Let $b$ be a parameter in $[1,n]$. If $|V^{s,t}| \geq n/b$ then the corresponding demand $(s,t)$ is thick, otherwise it is thin. We shall always assume that $b = \sqrt{{\opt}}$.
\end{definition}

Let $\mathcal{D}_{thick}$ and $\mathcal{D}_{thin}$ be the set of all thick and thin demands, respectively. We will run two algorithms to build two edge sets, $E'$ and $E''$, such that all thick demands are settled by $E'$ and all thin demands are settled by $E''$. The set $E'$ will have cost $O(bn\ln{n})$ in expectation, while $E''$ will have cost $O((n/b)\ln{n} \cdot {\opt})$ in expectation. The optimal trade-off of these algorithms has $b = \sqrt{\opt}$, so each edge set will have cost $O(n\ln{n} \cdot \sqrt{{\opt} })$ in expectation.

\subsubsection{Star-Sampling Algorithm for Thick Demands} \label{sec:thick}
We describe the random sampling subroutine for constructing the edge set $E'$, which will settle all thick demands (Algorithm~\ref{alg:star_sample}). 

\begin{algorithm}[h]
\DontPrintSemicolon

\textbf{Input:} 
Graph $G_M = (V, E_M)$ \\

Let $E' \gets \emptyset, \; S \gets \emptyset$ \tcp*{Set $S$ is only used for the analysis} 

\ForEach{index $i = 1, 2, \dots, b\ln{n}$}{
    $v \gets$ a uniformly random element from $V$ \\
    $T^{in}_v \gets$ inward star of $G_M$ rooted at $v$ \\
    $T^{out}_v \gets$ outward star of $G_M$ rooted at $v$ \\
    $E' \gets E' \cup T^{in}_v \cup T^{out}_v , \; S \gets S \cup \{v\}$ 
} 
\ForEach{unsettled demand $(s,t) \in \mathcal{D}_{thick}$}{
    $E' \gets E' \cup (s,t)$
} 
\textbf{Return} $E'$ \;

\caption{\label{alg:star_sample} Star-Sampling Algorithm}
\end{algorithm}

This algorithm is nearly identical to that of~\cite{DK11}. The only difference is that, since we operate on the weighted transitive closure of $G$, we build directed in- and out-stars as opposed to the shortest path in- and out-arborescences used for the spanner setting. 

We now show that $E'$ has the desired cost in expectation. While~\cite{DK11} proves this for spanners, it is easy to see that a near identical argument also holds for hopsets in $G_M$. We restate the proof \iflong \else in Appendix~\ref{app:thick_proof} \fi for completeness.

\begin{lemma}[\hspace{1sp}\cite{DK11}] \label{lem:thick}
    Algorithm~\ref{alg:star_sample}, in polynomial time, computes an edge set $E'$ that settles all thick demands and has expected cost $O(bn\ln{n} )$. If $b = \sqrt{{\opt}}$, then the expected size is $O(n\ln{n} \cdot \sqrt{{\opt}})$.
\end{lemma}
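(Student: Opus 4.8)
The plan is to restate the arborescence-sampling analysis of~\cite{DK11}, specialized to the transitive closure $G_M$ and to the hop constraint. Two claims need to be checked: that every thick demand ends up settled, and that $\E[c(E')] = O(bn\ln n)$.

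I would dispose of correctness first, since it is essentially immediate: Step~2 of Algorithm~\ref{alg:star_sample} adds the edge $(s,t)$ to $E'$ for every thick demand that the random stars failed to settle. Because $G_M$ is the weighted transitive closure, $(s,t) \in E_M$ with length $d_G(s,t)$, and feasibility of the instance forces $d_G(s,t) \le Dist(s,t)$; so $(s,t)$ is a valid $1$-hop path and the demand is settled. Hence after Step~2 no thick demand is unsettled.

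For the cost I would split $c(E')$ into the star edges and the repair edges. The stars contribute at most $2(n-1)$ edges per iteration over $b\ln n$ iterations, i.e.\ cost $O(bn\ln n)$. For the repair edges, fix a thick demand $(s,t)$, so $|V^{s,t}| \ge n/b$. The key sub-claim is that if some sampled vertex $v$ lies in $V^{s,t}$ then $(s,t)$ is already settled: such a $v$ lies on some $P \in \mathcal P_{s,t}$, and the walk $s \to v \to t$ formed by the edges $(s,v) \in T^{in}_v$ and $(v,t) \in T^{out}_v$ has length $d_G(s,v) + d_G(v,t) \le \mathrm{len}(P) \le Dist(s,t)$ (each transitive-closure edge is no longer than the corresponding sub-path of $P$) and uses $2 \le \beta$ hops, so it is a valid path present in $E'$. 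Consequently $(s,t)$ is left for Step~2 only if all $b\ln n$ independent uniform samples avoid $V^{s,t}$, which has probability at most $(1 - 1/b)^{b\ln n} \le e^{-\ln n} = 1/n$. By linearity of expectation the expected number of thick demands handled in Step~2 is at most $|\mathcal D_{thick}|/n \le n^2/n = n$, so the repair edges contribute expected cost at most $n$. Summing, $\E[c(E')] \le O(bn\ln n) + n = O(bn\ln n)$, and with $b = \sqrt{\opt}$ this is $O(n\ln n \cdot \sqrt{\opt})$. Polynomial running time is clear: each of the $b\ln n \le n\ln n$ iterations builds two stars in $O(n)$ time, and the repair loop touches at most $n^2$ demands.

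The only step that is not a verbatim copy of~\cite{DK11} — and hence the main thing to get right — is the sub-claim that hitting $V^{s,t}$ settles $(s,t)$: the spanner version samples shortest-path arborescences and argues about stretch along tree paths, whereas here we exploit that working in $G_M$ lets the two transitive-closure edges $(s,v)$ and $(v,t)$ compose into a $2$-hop path whose length is controlled by the length of the valid path through $v$. This is where $\beta \ge 2$ (the regime of interest) and the assumption that each input edge is a shortest path are implicitly used, so I would state those conventions explicitly before invoking the sub-claim.
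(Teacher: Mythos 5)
Your proposal is correct and follows essentially the same argument as the paper: bound the first loop by $2(n-1)b\ln n$ edges, observe that sampling any $v \in V^{s,t}$ yields the valid $2$-hop path $s \to v \to t$ via the in- and out-stars, and use $(1-1/b)^{b\ln n} \le 1/n$ to bound the expected number of repair edges by $n$. Your added remarks (explicitly invoking feasibility to justify that the repair edge $(s,t)$ is itself valid, and flagging the reliance on $\beta \ge 2$ and on input edges being shortest paths) are harmless clarifications rather than a different route.
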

\iflong
\begin{proof}  
    After the execution of the first for loop in Algorithm~\ref{alg:star_sample}, $|E'| \leq 2(n-1)b \ln{n}$.

    If some vertex $v$ from a set $V^{s,t}$ appears in the set $S$ of vertices selected by Algorithm~\ref{alg:star_sample}, then $T^{in}_v$ and $T^{out}_v$ contain shortest, $1$-hop paths from $s$ to $v$ and from $v$ to $t$, respectively. Thus, both paths are contained in $E'$. Since $v \in V^{s,t}$, the sum of lengths of these two paths is at most $Dist(s,t)$. Therefore, if $S \cap V^{s,t} \neq \emptyset$, then the demand $(s,t)$ is settled. For a thick demand $(s,t)$, the set $S \cap V^{s,t}$ is empty with probability at most $(1-1/b)^{b\ln{n}} \leq e^{-\ln{n}} = 1/n$. Thus, the expected number of unsettled thick demands added to $E'$ in the final for loop of Algorithm~\ref{alg:star_sample} is at most $|\mathcal{D}|/n \leq n$.

    The final for loop ensures that $E'$, returned by the algorithm, settles all thick demands. Computing in- and out-stars and determining whether a demand is thick can be done in polynomial time.
\end{proof}
\else
\fi

\subsubsection{Randomized LP Rounding Algorithm for Thin Demands} 
We now give the algorithm for finding a set $E''$ to settle thin demands. \cite{BBMRY11} introduces the notion ``anti-spanners,'' which is crucial for the algorithm and analysis for settling thin demands. In particular, they formulate an anti-spanner covering LP that captures the problem of settling all thin demands. They then solve the LP (with high probability) by constructing a separation oracle that utilizes randomized rounding. We will also use randomized LP rounding, though instead of rounding the solution to an ``anti-hopset'' covering LP, we will round based on~\ref{lp:hopset}.  Our LP is stronger than the ``anti-hopset'' covering LP, since our LP is for \textit{fractional} cuts against valid paths, while the anti-hopset covering LP is only for integer cuts.

Going forward, we will assume without loss of generality that we know the value of the optimal solution---${\opt}$ is in $\{0, 1, \dots, n^2 \}$, so we can just try each of these values for ${\opt}$ and return the smallest hopset found over all tries. We can therefore replace the objective function of~\ref{lp:hopset} with the following:
\begin{align*}
    \sum_{e \in \widetilde{E}} x_e \leq {\opt} \tag{4}
\end{align*}

We use this modified version of~\ref{lp:hopset} for the randomized rounding algorithm. Given a fractional solution $\bm{\mathrm{x}}^*$ to~\ref{lp:hopset}, our algorithm will return an edge set $E''$ that, with high probability, will cost at most $2{\opt} \cdot 2(n/b)  \ln{n}$ and satisfy all thin demands (see Algorithm~\ref{alg:random_rounding}). We say that the algorithm \textit{fails} if $c(E'') > 2{\opt} \cdot 2(n/b)  \ln{n}$ or if $E''$ does not satisfy all thin demands. The algorithm will fail with low probability.

\begin{algorithm}[h]
\DontPrintSemicolon

\textbf{Input:} Graph $G_M = (V, E_M)$, \ref{lp:hopset} fractional solution $\bm{\mathrm{x}}^*$ \\

Let $E'' \gets \emptyset$ \; \;

\tcp{sample edges into $E''$}
\ForEach{edge $e \in E_M $}{
    Let $p_e \gets \min(1, 2(n/b) \ln{n} \cdot x^*_e)$ \;
    Add $e$ to $E''$ with probability $p_e$  } \;

\If{$E''$ settles all thin demands}{
    \textbf{Return} $E''$} 
    
\Else{ 
    \textbf{Return} $E_M \setminus E$  }

\caption{\label{alg:random_rounding} Randomized LP Rounding Algorithm }
\end{algorithm}

To show that with high probability, Algorithm~\ref{alg:random_rounding} does not fail, we start by defining ``anti-hopsets,'' the analogous of anti-spanners in the hopset setting. For a given demand $(s,t)$, an anti-hopset is a set of edges such that removing them from $G_M$ results in no valid paths from $s$ to $t$ in what remains.

\begin{definition}[Anti-Hopsets]
    An edge set $C \subseteq E$ is an anti-hopset for demand $(s,t) \in \mathcal{D}$ if there is no $\beta$-hopbounded path of length at most $Dist(s,t)$ in $G_M \setminus C$. If no proper subset of an anti-hopset $C$ is an anti-hopset, we say that $C$ is a minimal.
\end{definition}

Thus, a set of edges is a valid hopset if and only if it is a hitting set for the collection of (minimal) anti-hopsets---that is, to be a valid hopset, an edge set must include at least one edge from every anti-hopset.
We now show that the probability is exponentially small that the algorithm fails. The argument is very similar to that given by~\cite{BBMRY11} for spanners; we state it \iflong here \else in Appendix~\ref{app:thin_proof} \fi for completeness.

\begin{lemma}[Theorem 2.2 of \cite{BBMRY11}] \label{lem:thin_fail}
    The probability that Algorithm~\ref{alg:random_rounding} fails is exponentially small in $n$.
\end{lemma}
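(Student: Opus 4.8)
The plan is to follow the template of Theorem~2.2 of~\cite{BBMRY11}, splitting the failure event of Algorithm~\ref{alg:random_rounding} into a \emph{cost failure} (the event $c(E'') > 2{\opt}\cdot 2(n/b)\ln n$) and a \emph{correctness failure} (the event that $E''$ leaves some thin demand unsettled), bounding each by a quantity exponentially small in $n$, and concluding with a union bound.

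For the cost failure, note that $c(E'') = \sum_{e \in \widetilde{E}} \mathbf{1}[e \in E'']$ is a sum of independent $\{0,1\}$ variables, so by linearity and the modified objective constraint $\sum_{e \in \widetilde{E}} x^*_e \le {\opt}$ on the (approximately optimal, via Theorem~\ref{thm:solve_LP}) solution $\bm{\mathrm{x}}^*$, its mean is $\mu = \sum_{e \in \widetilde{E}} p_e \le M := 2(n/b)\ln n \cdot {\opt}$ (the $(1+\epsilon)$ loss from solving the LP is absorbed into constants). Since $b = \sqrt{{\opt}}$ we have $M = 2n\sqrt{{\opt}}\ln n \ge 2n\ln n$, so $M$ is large \emph{regardless} of how small ${\opt}$ is — this is exactly why the $\ln n$ blow-up in the rounding is needed. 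A one-line Chernoff/Markov estimate, $\Pr[c(E'') \ge 2M] \le e^{-2M}\,\E[e^{c(E'')}] \le e^{-(3-e)M}$, then bounds the cost-failure probability by $e^{-\Omega(n\ln n)}$, which is (super-polynomially, hence) exponentially small.

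For the correctness failure I would argue via \emph{anti-hopsets}, mirroring the anti-spanner argument of~\cite{BBMRY11}. Since $x^*_e = 1$ for every $e \in E$ we have $p_e = 1$, so $E \subseteq E''$ always; hence $E''$ fails to settle a thin demand $(s,t)$ if and only if $C \cap E'' = \emptyset$ for some minimal anti-hopset $C$ of $(s,t)$. The one place our stronger LP genuinely helps: the indicator vector $\bm{\mathrm{z}}^C$ of any anti-hopset $C$ for $(s,t)$ lies in $\mathcal{Z}_{s,t}$ (every valid $s-t$ path meets $C$), so the constraint of~\ref{lp:hopset} for $(s,t)$ and $\bm{\mathrm{z}}^C$ immediately gives $\sum_{e \in C} x^*_e \ge 1$ — whereas~\cite{BBMRY11} must route this through their weaker integer anti-spanner covering LP. Therefore $\Pr[C \cap E'' = \emptyset] = \prod_{e \in C}(1-p_e) \le \exp(-\sum_{e \in C} p_e) \le \exp(-2(n/b)\ln n \sum_{e \in C} x^*_e) \le n^{-2n/b}$ (and this is $0$ if some $p_e = 1$). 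A union bound over all minimal anti-hopsets of all thin demands then finishes, provided their total number is at most $n^{(2-\Omega(1))\cdot n/b}$.

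The main obstacle is this last counting step, which I would import from~\cite{BBMRY11}: a minimal anti-hopset for $(s,t)$ can only use edges lying on some valid $s-t$ path (any other edge could be deleted without un-killing a path, contradicting minimality), so it is contained in the induced subgraph $G^{s,t}$, which for a thin demand has fewer than $n/b$ vertices; a minimality/witness-path argument then bounds both the size of such an anti-hopset and the number of them per pair, yielding $n^{O(n/b)}$ in total, with the exponent constant controllable (or one simply takes the rounding blow-up to be a large enough constant times $(n/b)\ln n$ so that the $n^{-\Theta(n/b)}$ miss probability dominates the count). The only hopset-specific wrinkle is that a ``valid path'' must now satisfy \emph{two} bounds — at most $\beta$ hops \emph{and} length at most $Dist(s,t)$ — rather than a single stretch bound; but this changes nothing combinatorially, since valid paths, anti-hopsets, and the entire count still live inside $G^{s,t}$ and only $|V^{s,t}| < n/b$ is used. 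Combining the two exponentially small bounds gives the lemma.
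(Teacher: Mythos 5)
Your proof follows the paper's approach essentially verbatim: the same two-part failure decomposition, the same concentration bound on cost (you spell out the MGF/Markov step where the paper just cites Chernoff), and the same anti-hopset union bound combining the~\ref{lp:hopset} constraint for the indicator vector of $C$ with the $(n/b)^{n/b}$ count of minimal anti-hopsets. The parts you flag as imports from~\cite{BBMRY11} --- the arborescence-based count of minimal anti-hopsets per thin demand and the observation that $\opt$ can be assumed $\widetilde{o}(n^2)$ so the $|\mathcal{D}_{thin}|$ factor does not dominate --- are exactly what the paper reproduces as Propositions~2.1 and~2.2 and the closing remark, so nothing is missing.
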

\iflong
\begin{proof}
    There are two different events that can cause the algorithm to fail:   
    \begin{enumerate}
        \item The cost of the sampled set $E''$ is too high---that is, $c(E'') > 2{\opt} \cdot 2(n/b)  \ln{n}$. The expected cost of $E''$ is at most $2(n/b) \ln{n} \, \cdot \sum_{e \in E_M \setminus E} x_e \leq {\opt} \cdot 2(n/b)  \ln{n}$. By the Chernoff bound (recall that $b = \sqrt{{\opt}}$), we have that $\Pr[c(E'') > 2{\opt} \cdot 2(n/b)  \ln{n} ] \leq e^{-c \cdot {\opt} \cdot (n/b) \ln{n}} = e^{-c n\ln{n} \cdot \sqrt{{\opt}}} = e^{-\Omega(n \ln(n))}$. Thus, the probability that the algorithm fails because $c(E'') > 2{\opt} \cdot 2(n/b)  \ln{n}$ is exponentially small.
        \item $E''$ does not settle all thin demands. We prove that the probability that $E''$ does not settle all thin demands (that is, that $E''$ does not intersect all minimal anti-hopsets for thin demands) is exponentially small in the following Lemma.
    \end{enumerate}
    
    \begin{lemma}[Lemma 2.3 of~\cite{BBMRY11}] \label{lem:thin_not_settle}
        The probability that there exists a demand $(s,t)$ and a minimal anti-hopset $C$ for it such that $C \subset E^{s,t} \setminus E''$ is at most $|\mathcal{D}_{thin}| \cdot \left( 1 / b n \right)^{n/b}$. In particular, if $b = \sqrt{{\opt}}$, then the probability is at most $|\mathcal{D}_{thin}| \cdot (1 \, / \, n\sqrt{{\opt}})^{n/\sqrt{\opt}}$.
    \end{lemma}
    \begin{proof}
        First, we bound the total number of minimal anti-hopsets for thin demands. 

        \begin{proposition}[Proposition 2.1 of \cite{BBMRY11}] \label{prop:thin1}
            If $(s,t)$ is a thin demand, then there are at most $(n/b)^{n/b}$ minimal anti-hopsets for $(s,t)$.
        \end{proposition}
        \begin{proof}
            Fix a thin demand $(s,t)$ and consider an arbitrary minimal anti-hopset $C$ for $(s,t)$. For the rest of this argument, for any two paths of the same length between the same pair of vertices, we consider one path to be shorter than the other if that path has fewer hops. More specifically, for any vertex pair $x, y \in V$ and any two $x-y$ paths $P, P'$ that have the same length, we say that $P$ is shorter than $P'$ if $P$ has fewer hops (edges).
            
            Let $A_C$ be the outward shortest path tree (arborescence) rooted at $s$ in the graph $(V^{s,t}, E^{s,t} \setminus C)$. Our tie-breaking of same-length paths with different hops ensures that $A_C$ includes shortest paths with the lowest number of hops (recall that every edge $(u,v) \in E_M$ is a shortest path from $u$ to $v$). 
            Denote by $f^{(\beta)}_{A_C}(u)$ the $\beta$-hopbounded distance from $s$ to $u$ in the tree $A_C$. If there is no $\beta$-hopbounded directed path from $s$ to $u$ in $A_C$, let $f^{(\beta)}_{A_C}(u) = \infty$. 
            
            We show that $C = \left\{ (u,v) \in E^{s,t} : f^{(\beta)}_{A_C}(u) + \ell(u,v) <  f^{(\beta)}_{A_C}(v) \right\}$, and thus, $A_C$ uniquely determines $C$ for a given thin demand $(s,t)$. If $(u,v) \in C$, then, since $C$ is a \textit{minimal} anti-hopset, there exists a $\beta$-hopbounded path from $s$ to $t$ of length at most $Dist(s,t)$ in the graph $(G_M \setminus C) \cup \{(u,v)\}$. This path must lie in $(G^{s,t} \setminus C) \cup \{(u,v)\}$ and must contain the edge $(u,v)$. Thus, the $\beta$-hopbounded distance from $s$ to $t$ in the graph $(G^{s,t} \setminus C) \cup \{(u,v)\}$ is at most $Dist(s,t)$ and is strictly less than $f^{(\beta)}_{A_C}(t)$. Hence, $A_C$ is not the shortest path tree in the graph $(G^{s,t} \setminus C) \cup \{(u,v)\}$. Therefore, $f^{(\beta)}_{A_C}(u) + \ell(u,v) <  f^{(\beta)}_{A_C}(v)$. If $(u,v) \in E^{s,t}$ satisfies the condition $f^{(\beta)}_{A_C}(u) + \ell(u,v) <  f^{(\beta)}_{A_C}(v)$, then $(u,v) \notin E^{s,t} \setminus C$ (otherwise, $A_C$ would not be the shortest path tree), hence $(u,v) \in C$.

            We now count the number of outward trees rooted at $s$ in $G^{s,t} \setminus C$. For every vertex $u \in V^{s,t}$, we may choose the parent vertex in at most $|V^{s,t}|$ possible ways (if a vertex is isolated we assume that it is its own parent), thus the total number of trees is at most $|V^{s,t}|^{|V^{s,t}|} \leq (n / b)^{n/b}$
        \end{proof}

        \begin{proposition}[Proposition 2.2 of \cite{BBMRY11}] \label{prop:thin2}
            For a demand $(s,t)$ and a minimal anti-hopset $C$ for $(s,t)$, the probability that $E'' \cap C = \emptyset$ is at most $e^{-2(n/b)\ln{n}}$.
        \end{proposition}
        \begin{proof}
            Suppose there is an anti-hopset edge $e \in C$ such that $x_e^* \geq (2(n/b)\ln{n})^{-1}$. In this case, $e$ is in $E''$ with probability $1$, and we are done. Otherwise, the probability that $e$ is in $E''$ is exactly $2(n/b)\ln{n} \cdot x_e$. In this case, the probability that $E''$ does not include $e$ is 
            \begin{align*}
                \prod_{e \in C} \left( 1 - 2 (n/b) \ln{n} \right) < \exp{\left(-\sum_{x \in C} 2 (n/b) \ln{n} \cdot x^*_e \right)} \leq e^{-2 (n/b) \ln{n}}.
            \end{align*}
            The first inequality holds from the fact that $1-x < \exp{(-x)}$ for $x > 0$. For the  last inequality, observe that every anti-hopset is an integer cut against valid paths. Thus, each anti-hopset $C$ corresponds to an~\ref{lp:hopset} constraint of the form $\sum_{e \in E_M} z_e x_e \geq 1$, where $\bm{\mathrm{z}} \in \mathcal{Z}_{s,t}$ is the indicator vector for cut $C$.
        \end{proof}
        
    To finish the proof of Lemma~\ref{lem:thin_not_settle}, we use Propositions~\ref{prop:thin1} and~\ref{prop:thin2} to take a union bound over all minimal anti-hopsets for all thin demands. Let $\mathcal{S}_{s,t}$ be the set of all minimal anti-hopsets for a thin demand $(s,t)$, and let $\mathcal{S}$ be the collection of all minimal anti-hopsets for all thin demands. The probability that $E''$ does not intersect all minimal anti-hopsets in $\mathcal{S}$ is the following:
    \begin{align*}
        \Pr[E'' \text{ is not a hitting set for } \mathcal{S}] &\leq  \sum_{(s,t) \in \mathcal{D}_{thin}} \sum_{C \in \mathcal{S}_{s,t}} e^{-2 (n/b) \ln{n}} \\
        &\leq  |\mathcal{D}_{thin}| \cdot \left( \frac{n}{b} \right)^{n/b} \cdot e^{-2 (n/b) \ln{n}} \\
        &= |\mathcal{D}_{thin}| \cdot \left( \frac{1}{b n} \right)^{n/b} \\
        &= |\mathcal{D}_{thin}| \cdot \left( \frac{1}{n \sqrt{{\opt}}} \right)^{\frac{n}{\sqrt{{\opt}}}} \tag{$b = \sqrt{{\opt}}$}.
    \end{align*}
    If ${\opt} = \widetilde{\Theta}(n^{2})$, then we can achieve an $\widetilde{O}(1)$-approximation by just returning $E_M \setminus E''$. Thus, we can assume without loss of generality that ${\opt} = \widetilde{o}(n^{2})$.  Given this, we have that the probability that $E''$ does not cover all minimal anti-hopsets for a thin demand $(s,t)$ is exponentially small.
    \end{proof}

\end{proof}
\else
\fi

\subsubsection{Proof of Theorem~\ref{thm:bbmry_alg}}
\begin{proof}
    All thick demands can be satisfied by running Algorithm~\ref{alg:star_sample} to build $E'$, which has expected cost $O(n \ln{n} \cdot \sqrt{{\opt}})$ (by Lemma~\ref{lem:thick}) and runs in polynomial time. The thin demands can be satisfied by running Algorithm~\ref{alg:random_rounding}, which runs in polynomial time. Algorithm~\ref{alg:random_rounding} fails with exponentially small probability (in which case we return all possible hopset edges, $\widetilde{E}$), and thus the expected cost of $E''$ is at most ${\opt} \cdot  2(n/b) \ln{n} + o(1) = O(n \ln{n} \cdot \sqrt{{\opt}} )$ (Lemma~\ref{lem:thin_fail}). Thus the overall approximation ratio is $O( n \ln{n} / \sqrt{{\opt}})$. 
\end{proof}

\subsection{Trade-Offs with Existential Bounds}\label{sec:existential}

There are a number of constructive existential results for hopsets that we trade off with our junction tree-based algorithm from Section~\ref{sec:junction_tree} (an $\widetilde{O}(\be n^\epsilon \cdot {\opt})$-approximation) and our star-sampling/randomized-rounding algorithm from Section~\ref{sec:star_sampling_rounding} (an $\widetilde{O}(n / \sqrt{\opt})$-approximation) to give approximations in several regimes. 
Our junction tree algorithm gives much better approximations than all other existential results when $\opt$ and $\beta$ are relatively small, so it will be used in the trade off for all regimes. The star-sampling/randomized-rounding algorithm gives improved approximations over the junction tree algorithm as ${\opt}$ gets larger. 

\iflong  
We first state a folklore existential bound that we will use throughout. For exact hopsets in both directed and undirected graphs, the following bound is the best-known (and is known to be tight \cite{BH23folklore}). Exact hopsets satisfy any distance bound function, so we can trade off the hopset produced by the folklore existential bound with other algorithms in any regime. For more restricted regimes---specifically for limited stretch, hopbound, and undirected graphs---we will trade off with stronger existential bounds to get improved approximations.

\begin{lemma}
    Given any weighted directed graph $G$ and a parameter $\beta >0$, there is an exact hopset of $G$ with hopbound $\beta$ and size $\widetilde{O}(n^2/\beta^2)$. This hopset can be constructed in polynomial time via random sampling.
\end{lemma}

This implies the following straightforward characterization based on $\opt$, which will allow us to trade the folklore construction off with our other approximation algorithms that also depend on $\opt$.

\begin{corollary} \label{cor:existential_folklore}
    There is a randomized polynomial-time $\widetilde{O}(n^2/ (\beta^2 \cdot \opt))$-approximation for directed {\hopset}.
\end{corollary}

As $\be$ gets larger, this folklore construction gives improved approximations over the junction tree and star-sampling/randomized-rounding algorithms, especially when $\opt$ is also relatively large.

\subsubsection{Directed Hopsets with Arbitrary Distance Bounds}
We trade off our junction-tree and star-sampling/randomized-rounding algorithms with the Corollary~\ref{cor:existential_folklore} folklore approximation to achieve an overall $\widetilde{O}(n^{4/5 + \epsilon})$-approximation for directed {\hopset}. 

\begin{theorem} \label{thm:main_result}
    There is a randomized  polynomial-time $\widetilde{O}(n^{4/5 + \epsilon})$-approximation for directed {\hopset}.
\end{theorem}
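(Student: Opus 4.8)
The plan is to derive this result purely by combining the three approximation algorithms already in hand and taking, for each input, whichever one performs best. Concretely, the algorithm I would run is: invoke the junction tree algorithm of Theorem~\ref{thm:junction_tree} (an $\widetilde{O}(\beta n^{\epsilon}\cdot\opt)$-approximation), the star-sampling with randomized LP rounding algorithm of Theorem~\ref{thm:bbmry_alg} (an $\widetilde{O}(n/\sqrt{\opt})$-approximation), and the folklore construction of Corollary~\ref{cor:existential_folklore} (an $\widetilde{O}(n^{2}/(\beta^{2}\opt))$-approximation), then return the cheapest valid hopset among the three outputs. Each subroutine runs in randomized polynomial time and (with high probability) returns a feasible hopset, and the dependence of Theorem~\ref{thm:bbmry_alg} on $\opt$ is already handled internally by trying all $O(n^{2})$ candidate values of $\opt$; so the combined procedure is again a randomized polynomial-time algorithm producing a feasible hopset, and all that remains is to bound its approximation ratio.

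For the ratio, I would first dispose of the trivial case $\opt=0$ (the empty hopset is feasible and is found), so assume $\opt\ge 1$. Writing $R_{1}=\widetilde{O}(\beta n^{\epsilon}\opt)$, $R_{2}=\widetilde{O}(n/\sqrt{\opt})$, $R_{3}=\widetilde{O}(n^{2}/(\beta^{2}\opt))$, the ratio achieved is at most $\min\{R_{1},R_{2},R_{3}\}$, which for any weights $w_{1},w_{2},w_{3}\ge 0$ summing to $1$ is at most the weighted geometric mean $R_{1}^{w_{1}}R_{2}^{w_{2}}R_{3}^{w_{3}}$. The key step is to pick the weights so that the adversarial, unknown quantities $\beta$ and $\opt$ cancel \emph{simultaneously}: taking $(w_{1},w_{2},w_{3})=(\tfrac{2}{5},\tfrac{2}{5},\tfrac{1}{5})$, the exponent of $\beta$ is $\tfrac{2}{5}-2\cdot\tfrac{1}{5}=0$, the exponent of $\opt$ is $\tfrac{2}{5}-\tfrac{1}{2}\cdot\tfrac{2}{5}-\tfrac{1}{5}=0$, and the exponent of $n$ is $\tfrac{2}{5}\epsilon+\tfrac{2}{5}+2\cdot\tfrac{1}{5}=\tfrac{4}{5}+\tfrac{2\epsilon}{5}$. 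Hence $\min\{R_{1},R_{2},R_{3}\}=\widetilde{O}(n^{4/5+2\epsilon/5})$, and since $\epsilon$ is an arbitrarily small constant this is $\widetilde{O}(n^{4/5+\epsilon'})$ after renaming.

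I do not expect a deep obstacle here---the three algorithms did the real work, and the remaining task is just to check that their regimes of strength tile the whole $(\beta,\opt)$ parameter space. The one point requiring care is that no \emph{pair} of the algorithms suffices: junction tree with star-sampling leaves a residual $\beta$-dependence (this is exactly the $\widetilde{O}(\beta^{1/3}n^{2/3+\epsilon'})$ bound in the first row of Table~\ref{tab:results}, which exceeds $n^{4/5}$ once $\beta\gtrsim n^{2/5}$), junction tree with folklore leaves a $\beta^{-1/2}$ dependence, and star-sampling with folklore leaves a $\beta$-dependence as well. All three must be combined, and $(\tfrac{2}{5},\tfrac{2}{5},\tfrac{1}{5})$ are (up to scaling) the unique weights that kill both free parameters at once, which is what pins the exponent at $4/5$. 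Finally, if a high-probability rather than in-expectation guarantee is desired, one would additionally repeat the star-sampling/rounding subroutine $O(\log n)$ times and keep the best outcome, applying Markov's inequality to each run.
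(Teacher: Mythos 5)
Your proof is correct and follows essentially the same approach as the paper: run all three algorithms (junction tree, star-sampling with LP rounding, folklore construction), return the cheapest feasible hopset, and then bound the $\min$ of the three guarantees uniformly over the unknown parameters $\beta$ and $\opt$. The only difference is purely presentational: the paper asserts the bound $\widetilde{O}(n^{4/5+\epsilon})$ by stating that the $\max_{\beta,\opt} \min$ is achieved roughly where all three curves intersect, whereas you give a cleaner one-line justification via the weighted geometric mean with weights $(\tfrac{2}{5},\tfrac{2}{5},\tfrac{1}{5})$ that kill the exponents of both $\beta$ and $\opt$ simultaneously. Your exponent $4/5+2\epsilon/5$ matches the exact intersection-point calculation and is folded into $4/5+\epsilon'$ by renaming, exactly as the paper implicitly does. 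Your remarks about guessing $\opt\in\{0,\dots,n^2\}$, disposing of $\opt=0$, and boosting in-expectation to high probability by $O(\log n)$ repetitions plus Markov are all consistent with the paper's setup.
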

\begin{proof}
    We trade off the junction tree based algorithm from Section~\ref{sec:junction_tree}, the star-sampling/randomized-rounding algorithms from Section~\ref{sec:star_sampling_rounding}, and the algorithm of Corollary~\ref{cor:existential_folklore} to give an algorithm for directed {\hopset}; that is, we run all three algorithms and return the solution with the minimum cost. The overall approximation ratio of this algorithm, $\al$, is the maximum over all possible $\beta, \opt$ of the minimum cost:
    \begin{align*}
        \al &= \widetilde{O}\left( \max_{\be, {\opt}} \left( \min_{} \left\{ \be n^\epsilon \cdot {\opt}, \mbox{ } \frac{n}{\sqrt{{\opt}}}, \mbox{ }\frac{n^2}{\be^2 \cdot {\opt}}  \right\} \right) \right).                    
    \end{align*}
    
    The value of $\al$ is $\widetilde{O}(n^{4/5+\epsilon})$, achieved at $\be = \widetilde{O}(n^{2/5})$, ${\opt} = \widetilde{O}(n^{2/5 - \epsilon})$ (this is also the point at which all three curves intersect). Additionally, each of the three algorithms runs in polynomial-time, so the overall algorithm is polynomial-time.
\end{proof}

When $\beta = \widetilde{O}(n^{2/5})$, the folklore construction gives worse approximations than the trade off between our other algorithms. By just trading off the junction tree and star-sampling/randomized-rounding algorithms, we achieve a better approximation in this regime.

\begin{theorem} \label{thm:small_be_dir_gen}
    When $\be = \widetilde{O}(n^{2/5})$, there is a randomized polynomial-time $\widetilde{O}(\be^{1/3} \cdot n^{2/3 + \epsilon})$-approximation for directed {\hopset}.
\end{theorem}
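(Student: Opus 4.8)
The plan is to trade off the two approximation algorithms developed for general hopbounds: the junction tree algorithm of Section~\ref{sec:junction_tree}, which is a polynomial-time $\widetilde{O}(\be n^\epsilon \cdot {\opt})$-approximation (Theorem~\ref{thm:junction_tree}), and the star-sampling/randomized-rounding algorithm of Section~\ref{sec:star_sampling_rounding}, which is a randomized polynomial-time $O(n\ln n/\sqrt{\opt})$-approximation (Theorem~\ref{thm:bbmry_alg}). The combined algorithm simply runs both and returns the cheaper of the two hopsets. Since each algorithm runs in polynomial time, so does the combined one; and since the cost of the output is at most the minimum of the two costs, and both guarantees hold simultaneously for the (unknown) true value of $\opt$, dividing the smaller output cost by $\opt$ shows the approximation ratio of the combined algorithm is at most
\begin{align*}
\al = \widetilde{O}\left( \max_{\opt} \min\left\{ \be n^\epsilon \cdot {\opt} , \; \frac{n}{\sqrt{\opt}} \right\} \right) .
\end{align*}
Note neither algorithm needs to know $\opt$ in advance, so the maximum over $\opt$ is the right way to bound the worst case.

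It remains to evaluate this max–min expression. The first term inside the minimum is increasing in $\opt$ and the second is decreasing, so the maximum of their minimum is attained (up to constants) at the value of $\opt$ where the two are equal: $\be n^\epsilon \cdot {\opt} = n/\sqrt{\opt}$, i.e. $\opt^{3/2} = n^{1-\epsilon}/\be$, so $\opt = (n^{1-\epsilon}/\be)^{2/3} = n^{2/3 - 2\epsilon/3} \cdot \be^{-2/3}$. Substituting into $n/\sqrt{\opt}$ gives
\begin{align*}
\frac{n}{\sqrt{\opt}} = \frac{n}{\sqrt{\, n^{2/3 - 2\epsilon/3} \cdot \be^{-2/3} \,}} = \be^{1/3} \cdot n^{2/3 + \epsilon/3} .
\end{align*}
Hence $\al = \widetilde{O}(\be^{1/3} \cdot n^{2/3 + \epsilon/3})$; renaming $\epsilon/3$ as $\epsilon'$ (recall $\epsilon$ is an arbitrarily small constant in Theorem~\ref{thm:junction_tree}) yields the claimed $\widetilde{O}(\be^{1/3} \cdot n^{2/3 + \epsilon'})$ bound. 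The restriction $\be = \widetilde{O}(n^{2/5})$ is exactly the regime in which this quantity does not exceed the $\widetilde{O}(n^{4/5 + \epsilon})$ bound of Theorem~\ref{thm:main_result} (the two coincide at $\be = \widetilde{O}(n^{2/5})$), and one can check that in this regime the folklore term $n^2/(\be^2 \opt)$ from the general trade-off of Theorem~\ref{thm:main_result} is never the minimum near the crossover point, so adding the construction of Corollary~\ref{cor:existential_folklore} here would not improve the bound — we omit it for a cleaner statement.

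No substantive obstacle arises: the argument is an elementary trade-off between two already-established approximation guarantees. The only points requiring a little care are that the $O(n\ln n/\sqrt{\opt})$ guarantee of Theorem~\ref{thm:bbmry_alg} is an expected ratio (so the final bound is in expectation, and can be turned into a high-probability statement by repetition and Markov's inequality), and that the bound must be expressed in terms of the unknown optimum $\opt$ rather than a fixed function of $n$ — which is handled automatically by taking the worst case over $\opt$ as above.
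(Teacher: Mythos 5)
Your proof is correct and follows exactly the route the paper intends: trading off the junction-tree bound $\widetilde{O}(\be n^\epsilon\cdot\opt)$ (Theorem~\ref{thm:junction_tree}) against the star-sampling/rounding bound $\widetilde{O}(n/\sqrt{\opt})$ (Theorem~\ref{thm:bbmry_alg}) and solving for the crossover value of $\opt$; the paper states this theorem without an explicit proof, but the preceding discussion makes clear this is the intended argument, and your calculation (including the observation that the $\be=\widetilde{O}(n^{2/5})$ restriction is precisely the regime where dropping the folklore term is harmless and the result beats the $\widetilde{O}(n^{4/5+\epsilon})$ general bound) matches it.
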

\begin{corollary}
    When $\be = \widetilde{O}(1)$, Theorem~\ref{thm:small_be_dir_gen} gives an $\widetilde{O}(n^{2/3 + \epsilon})$-approximation for {\hopset}.
\end{corollary}

We can also get improved approximations in the large $\beta$ setting by trading off the junction tree algorithm with the Corollary~\ref{cor:existential_folklore} folklore approximation. Note that because the folklore construction has a better inverse dependence on $\beta$ than the star-sampling/randomized-rounding algorithm (in fact, the latter has \textit{no} dependence on $\be$), the folklore construction performs better than star-sampling/randomized rounding when $\be$ is sufficiently large.

\begin{theorem} \label{thm:big_be_dir_gen}
    When $\be = \widetilde{\Omega}(n^{2/5})$, there is a randomized polynomial-time $\widetilde{O}( n^{1+\epsilon} / \sqrt{\vphantom{\be^k} \be})$-approximation for directed {\hopset}.
\end{theorem}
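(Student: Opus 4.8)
The plan is to trade off exactly two of the algorithms we have already established: the junction tree algorithm from Section~\ref{sec:junction_tree}, which gives an $\widetilde{O}(\be n^\epsilon \cdot \opt)$-approximation, and the folklore-based algorithm of Corollary~\ref{cor:existential_folklore}, which gives an $\widetilde{O}(n^2/(\be^2 \cdot \opt))$-approximation. (We deliberately omit the star-sampling/randomized-rounding algorithm here: since that algorithm has no dependence on $\be$ while the folklore construction improves like $1/\be^2$, once $\be$ is large enough the folklore bound dominates it, so including it would not help in this regime.) Both algorithms run in randomized polynomial time, so running both and returning the cheaper of the two solutions is a randomized polynomial-time algorithm. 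Its approximation ratio is bounded by the worst case over the unknown parameter $\opt$ of the minimum of the two guarantees:
\[
\al \;=\; \widetilde{O}\!\left( \max_{\opt} \; \min\left\{ \be n^\epsilon \cdot \opt, \;\; \frac{n^2}{\be^2 \cdot \opt} \right\} \right).
\]

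The second step is the elementary optimization: for fixed $\be$, the first term is increasing in $\opt$ and the second is decreasing in $\opt$, so the maximum of the minimum is attained where they are equal, i.e. $\be n^\epsilon \cdot \opt = n^2/(\be^2 \cdot \opt)$, which gives $\opt^2 = n^{2-\epsilon}/\be^3$, so $\opt = n^{1-\epsilon/2}/\be^{3/2}$. Plugging back into the first term yields
\[
\al \;=\; \widetilde{O}\!\left( \be n^\epsilon \cdot \frac{n^{1-\epsilon/2}}{\be^{3/2}} \right) \;=\; \widetilde{O}\!\left( \frac{n^{1+\epsilon/2}}{\sqrt{\be}} \right) \;=\; \widetilde{O}\!\left( \frac{n^{1+\epsilon'}}{\sqrt{\be}} \right),
\]
after renaming the arbitrarily small constant. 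The hypothesis $\be = \widetilde{\Omega}(n^{2/5})$ is what makes this the right trade-off to state: at the threshold $\be = \widetilde{O}(n^{2/5})$ this bound reads $\widetilde{O}(n^{1+\epsilon'}/n^{1/5}) = \widetilde{O}(n^{4/5+\epsilon'})$, matching Theorem~\ref{thm:main_result}, and for larger $\be$ it is strictly better, while the star-sampling bound $\widetilde{O}(n/\sqrt{\opt})$ evaluated at this same $\opt$ gives $\widetilde{O}(n^{\epsilon/2}\be^{3/4}/n^{0})$-type behavior that is no better once $\be \geq \widetilde\Omega(n^{2/5})$, confirming it can be dropped.

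Since $\opt \in \{0,1,\dots,n^2\}$ is an integer, we do not actually need to know $\opt$ in advance: we can enumerate all polynomially many candidate values, run both algorithms for each, and keep the smallest valid hopset; the analysis above applies to the run that uses the correct value of $\opt$. The only genuinely non-routine point — and the reason this is stated as a separate theorem rather than a one-line corollary — is verifying that the star-sampling/randomized-rounding algorithm really is dominated throughout the regime $\be = \widetilde\Omega(n^{2/5})$, so that the two-way trade-off is tight; this is a direct comparison of the three curves at the optimizing $\opt$ and I expect it to be the main (though still modest) obstacle. Everything else is assembling results already proven in the excerpt.
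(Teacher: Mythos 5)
Your proposal is correct and follows exactly the paper's intended approach: trade off the junction-tree bound $\widetilde{O}(\be n^\epsilon \cdot \opt)$ against the folklore bound $\widetilde{O}(n^2/(\be^2 \cdot \opt))$, balance at $\opt = n^{1-\epsilon/2}/\be^{3/2}$, and observe that for $\be = \widetilde{\Omega}(n^{2/5})$ the star-sampling/rounding curve $\widetilde{O}(n/\sqrt{\opt})$ is dominated by the folklore curve at the optimizing $\opt$ and so may be dropped. One small slip in your sanity-check sentence: at that $\opt$ the star-sampling bound evaluates to $\widetilde{O}(n^{1/2+\epsilon/4}\be^{3/4})$, not $\widetilde{O}(n^{\epsilon/2}\be^{3/4})$, but the comparison $\be^{5/4} \gtrsim n^{1/2+\epsilon/4}$, i.e.\ $\be \gtrsim n^{2/5}$, still gives exactly the conclusion you draw.
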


\subsubsection{Directed Hopsets with Small Stretch}

For $(1+\epsilon)$-approximate directed hopsets, the best known existential bound is the following:

\begin{lemma}[Theorem 1.1 of~\cite{BW23}]
    For any directed graph $G$ with integer weights in $[1,W]$, given $\epsilon \in (0,1)$ and $\beta \geq  20\log n$, there is a $(1+\epsilon)$-hopset with hopbound $\beta$ and size:
    \begin{itemize}
        \item $\widetilde{O}\left(\frac{n^2\log^2(nW)}{\beta^3 \epsilon^2  }\right)$ for $\beta \leq n^{1/3}$,
        \item $\widetilde{O}\left(\frac{n^{\frac{3}{2}} \log^2(nW)}{\beta^{3/2} \epsilon^2}\right)$ for $\beta >n^{1/3}$.
    \end{itemize} 
    This hopset can be constructed in polynomial time.
\end{lemma}
\begin{corollary} \label{cor:existential_W}
    When $\beta \geq 20\log n$, there is a polynomial-time approximation for directed $(1+\epsilon)$ {\hopset} with approximation ratio:
        \begin{itemize}
        \item $\widetilde{O}\left(\frac{n^2\log^2(nW)}{\beta^3 \epsilon^2 \cdot \opt  }\right)$ for $\beta \leq n^{1/3}$,
        \item $\widetilde{O}\left(\frac{n^{\frac{3}{2}} \log^2(nW)}{\beta^{3/2} \epsilon^2 \cdot \opt}\right)$ for $\beta >n^{1/3}$
    \end{itemize}
    where $\epsilon \in (0,1)$.
\end{corollary}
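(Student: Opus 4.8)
The plan is to mirror the reasoning behind Corollary~\ref{cor:existential_folklore}: a constructive existential upper bound immediately yields an approximation algorithm whose ratio is (the size bound)$/\opt$. First I would dispose of the degenerate case $\opt = 0$, in which the input graph already settles every demand in $\mathcal{D}$ and the empty hopset is optimal; otherwise we may assume $\opt \geq 1$. Then, given an instance of directed $(1+\epsilon)$~{\hopset} with $\be \geq 20\log n$, I would simply run the polynomial-time construction of the preceding Lemma (Theorem~1.1 of~\cite{BW23}) on the input graph $G$ to obtain a $(1+\epsilon)$-hopset $H$ with hopbound $\be$ of size $\widetilde{O}\!\bigl(n^2\log^2(nW)/(\be^3\epsilon^2)\bigr)$ when $\be \leq n^{1/3}$ and $\widetilde{O}\!\bigl(n^{3/2}\log^2(nW)/(\be^{3/2}\epsilon^2)\bigr)$ when $\be > n^{1/3}$, where $W$ is the maximum edge weight.

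The second step is to verify that $H$ is a feasible solution to the given {\hopset} instance. Since the stretch-$(1+\epsilon)$ version sets $Dist(s,t) = (1+\epsilon)\,d_G(s,t)$, an all-pairs $(1+\epsilon)$-hopset with hopbound $\be$ satisfies $d_G(s,t) \leq d^{(\beta)}_{H\cup G}(s,t) \leq (1+\epsilon)\,d_G(s,t) = Dist(s,t)$ for every $(s,t)$, and in particular for every $(s,t)\in\mathcal{D}$; hence $c(H)$ is at most the size of $H$, which upper-bounds a feasible cost. Dividing this size by $\opt \geq 1$ gives exactly the two claimed approximation ratios, and the running time is precisely that of the construction in~\cite{BW23}, which is polynomial. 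No LP, rounding, or junction-tree machinery is needed here — the entire content lies in the cited existential bound, and (as with all our existential-to-approximation conversions) the ``algorithm'' is oblivious to the structure of the optimum.

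I do not expect a genuine obstacle. The one point worth checking is compatibility of weight regimes: the paper's inputs have $\ell : E \to \{1,\dots,\poly(n)\}$, so $W = \poly(n)$ and $\log^2(nW) = O(\log^2 n) = \widetilde{O}(1)$, which means the $\log^2(nW)$ factor can be absorbed into the $\widetilde{O}(\cdot)$ (we keep it explicit only to match the statement of~\cite{BW23}). One should also confirm that the construction of~\cite{BW23}, stated for all pairs, suffices for an arbitrary demand set $\mathcal{D}\subseteq V\times V$ — it does, since settling all pairs settles any subset. With these observations the corollary follows immediately.
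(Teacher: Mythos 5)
Your proof is correct and takes exactly the approach the paper intends: the corollary is an immediate consequence of the preceding lemma (Theorem~1.1 of~\cite{BW23}), obtained by running the polynomial-time construction and dividing its size bound by $\opt \geq 1$, just as for Corollary~\ref{cor:existential_folklore}. Your additional checks (handling $\opt = 0$, noting $W = \poly(n)$ so the $\log^2(nW)$ factor is $\widetilde{O}(1)$, and observing that an all-pairs hopset covers any demand subset $\mathcal{D}$) are all sound and consistent with the paper's framework.
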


Using the Corollary~\ref{cor:existential_W} algorithm, we get an improved approximation for directed {\hopset} when we restrict to $(1+\epsilon)$ stretch. 

\begin{theorem} \label{thm:dir_eps}
    When $\beta \geq 20\log n$ and $\epsilon \in (0,1)$, there is a randomized polynomial-time $\widetilde{O}(n^{3/4 + \epsilon'} \cdot \epsilon^{-\frac{1}{4}})$-approximation for directed stretch-$(1+\epsilon)$ {\hopset}, where $\epsilon' > 0$ is an arbitrarily small constant.
\end{theorem}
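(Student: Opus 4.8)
The plan is to run three algorithms and return the cheapest hopset among them: the junction tree algorithm of Section~\ref{sec:junction_tree}, which is an $\widetilde{O}(\be\, n^{\epsilon'}\cdot\opt)$-approximation by Theorem~\ref{thm:junction_tree} (we rename its arbitrarily small constant to $\epsilon'$ to avoid a clash with the stretch parameter $\epsilon$); the star-sampling/randomized-rounding algorithm of Section~\ref{sec:star_sampling_rounding}, which is an $\widetilde{O}(n/\sqrt{\opt})$-approximation by Theorem~\ref{thm:bbmry_alg}; and the constructive existential bound of Corollary~\ref{cor:existential_W} for $(1+\epsilon)$-hopsets. Since the input weights lie in $\{1,\dots,\poly(n)\}$ we have $W=\poly(n)$, so $\log^2(nW)=\widetilde{O}(1)$ and the Corollary~\ref{cor:existential_W} ratio simplifies to $\widetilde{O}\!\left(n^2/(\be^3\epsilon^2\opt)\right)$ for $\be\le n^{1/3}$ and $\widetilde{O}\!\left(n^{3/2}/(\be^{3/2}\epsilon^2\opt)\right)$ for $\be>n^{1/3}$. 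All three algorithms run in polynomial time, so the approximation ratio of the combined algorithm is $\widetilde{O}$ of the maximum, over all $\be\ge 20\log n$ and all $\opt$, of the minimum of the three ratios, and it remains to upper bound this max-min.

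First take the regime $\be\le n^{1/3}$, where the first piece of Corollary~\ref{cor:existential_W} is the relevant existential ratio. Suppose for contradiction that there are $\be,\opt$ for which all three ratios exceed some value $M$. From $\be\, n^{\epsilon'}\opt>M$ together with $n/\sqrt{\opt}>M$ one gets $M^3<\be\, n^{2+\epsilon'}$; from $\be\, n^{\epsilon'}\opt>M$ together with $n^2/(\be^3\epsilon^2\opt)>M$ one gets $M^2\be^2\epsilon^2<n^{2+\epsilon'}$, hence $\be<n^{1+\epsilon'/2}/(M\epsilon)$. Substituting this bound on $\be$ into the first inequality gives $M^4<n^{3+3\epsilon'/2}/\epsilon$, i.e.\ $M=\widetilde{O}(n^{3/4+\epsilon'}\epsilon^{-1/4})$ after renaming the small constant. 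A short calculation shows the extremal choice is $\be=\Theta(n^{1/4}\epsilon^{-3/4})$ and $\opt=\Theta(n^{1/2}\epsilon^{1/2})$, which does satisfy $\be\le n^{1/3}$ in the relevant range of $\epsilon$, so this piece of Corollary~\ref{cor:existential_W} is legitimately applicable where it is used.

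It then remains to handle $\be>n^{1/3}$, using the second piece of Corollary~\ref{cor:existential_W}, namely $\widetilde{O}(n^{3/2}/(\be^{3/2}\epsilon^2\opt))$, and also the folklore ratio $\widetilde{O}(n^2/(\be^2\opt))$ of Corollary~\ref{cor:existential_folklore} (which dominates the existential side when $\epsilon$ is very small). The same ``suppose all ratios exceed $M$'' argument --- now combining $\be\, n^{\epsilon'}\opt>M$ with $n^{3/2}/(\be^{3/2}\epsilon^2\opt)>M$ to bound $M^2$ in terms of $\be^{-1/2}$, and using $\be>n^{1/3}$ --- shows the max-min over this regime is also $\widetilde{O}(n^{3/4+\epsilon'}\epsilon^{-1/4})$ (in fact no larger). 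Taking the maximum over the two regimes gives the claimed $\widetilde{O}(n^{3/4+\epsilon'}\epsilon^{-1/4})$-approximation, where $\epsilon'>0$ is an arbitrarily small constant.

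The routine-but-delicate part is the bookkeeping in these trade-offs: because the existential bound of Corollary~\ref{cor:existential_W} is piecewise in $\be$ (with breakpoint $\be=n^{1/3}$), and because the adversary chooses $\opt$ after $\be$, one must track which of the several ``decreasing in $\opt$'' curves --- star-sampling, either piece of the existential bound, or the folklore bound --- is binding at the value of $\opt$ where it meets the ``increasing in $\opt$'' junction-tree curve, and verify in every subcase that the resulting max-min is $\widetilde{O}(n^{3/4+\epsilon'}\epsilon^{-1/4})$. I expect this case analysis, rather than any single inequality, to be the main obstacle, exactly as in the proofs of Theorems~\ref{thm:main_result}, \ref{thm:small_be_dir_gen}, and~\ref{thm:big_be_dir_gen}.
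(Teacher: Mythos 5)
Your approach is identical to the paper's: the paper's proof of Theorem~\ref{thm:dir_eps} is the single sentence ``trade off the junction tree algorithm, the star-sampling/randomized-rounding algorithm, the folklore algorithm (Corollary~\ref{cor:existential_folklore}), and the Corollary~\ref{cor:existential_W} algorithm,'' together with the observation that $W = \poly(n)$ absorbs the $\log(nW)$ factor into the $\widetilde{O}$. You use the same four algorithms, the same observation about $W$, and you actually carry out the balancing calculation, which the paper omits. Your arithmetic in the $\be \le n^{1/3}$ piece is correct: from $\be\opt > M$ and $n/\sqrt{\opt} > M$ one gets $M^3 < \be n^2$, from $\be\opt > M$ and $n^2/(\be^3\epsilon^2\opt) > M$ one gets $\be < n/(M\epsilon)$, and substituting gives $M^4 < n^3/\epsilon$, i.e.\ $M = \widetilde{O}(n^{3/4}\epsilon^{-1/4})$, with the extremal point $\be = \Theta(n^{1/4}\epsilon^{-3/4})$, $\opt = \Theta(n^{1/2}\epsilon^{1/2})$.

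One thing worth flagging: your assertion that the $\be > n^{1/3}$ subcase ``also'' yields $\widetilde{O}(n^{3/4+\epsilon'}\epsilon^{-1/4})$ does not follow from the step you sketch. Combining $\be\opt > M$ with $n^{3/2}/(\be^{3/2}\epsilon^2\opt) > M$ gives $M^2 < n^{3/2}/(\be^{1/2}\epsilon^2)$, and then $\be > n^{1/3}$ gives only $M < n^{2/3}/\epsilon$, which is $\le n^{3/4}\epsilon^{-1/4}$ exactly when $\epsilon \ge n^{-1/9}$ --- but that is precisely the complementary regime to the one where the extremal $\be$ spills over into $\be > n^{1/3}$. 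If one balances JT, star-sampling, and the $\be>n^{1/3}$ piece of Corollary~\ref{cor:existential_W}, one lands at $\be = n^{1/7}\epsilon^{-12/7}$ and $M = n^{5/7}\epsilon^{-4/7}$, which exceeds $n^{3/4}\epsilon^{-1/4}$ by up to a small polynomial factor (about $n^{1/80}$, realized near $\epsilon \approx n^{-3/20}$), and the folklore bound only clips this for $\epsilon$ smaller than about $n^{-3/20}$. So your ``(in fact no larger)'' parenthetical is not justified by what you wrote, and the stated bound appears to need $\epsilon = \Omega(n^{-1/9})$ (or else a mildly weaker exponent). This gap is arguably present in the paper's own terse proof as well, since the paper does not carry out the piecewise balancing either; you have reproduced it faithfully rather than introduced it, but it is worth spelling out the restriction on $\epsilon$ (or settling for the unconditional $\widetilde{O}(n^{4/5+\epsilon'})$ fallback of Theorem~\ref{thm:main_result}) when $\epsilon$ is subpolynomially small.
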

\begin{proof}
    The approximation is achieved by trading off the junction tree algorithm of Section~\ref{sec:junction_tree}, the star-sampling/randomized-rounding algorithms of Section~\ref{sec:star_sampling_rounding}, the Corollary~\ref{cor:existential_folklore} folklore algorithm, and the Corollary~\ref{cor:existential_W} algorithm. Note that $W = poly(n)$, so the $\log(nW)$ factor is hidden by the $\widetilde{O}(\cdot)$ notation.
\end{proof}

\subsubsection{Undirected Hopsets with Small Stretch}
For \textit{undirected} hopsets with $(1+\epsilon)$ stretch, there is the following constructive existential result.

\begin{lemma}[\hspace{1sp}\cite{elkin2019RNC}] \label{lem:existential_undir_eps}
    For any weighted undirected graph $G$, any integer $\eta \geq 1$, and any $0 < \rho < 1$, there is a randomized algorithm that runs in polynomial time in expectation and computes a hopset with size $O(n^{1 + 1/\eta})$, which is a hopset for any $\epsilon \in (0,1)$ with hopbound
    \begin{align*}
        \beta = \left( \frac{\log(\eta) + 1/\rho }{\epsilon}   \right)^{\log(\eta) + \frac{1}{\rho}+1} .
    \end{align*}
\end{lemma}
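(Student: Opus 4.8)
The plan is to reconstruct (a variant of) the superclustering-and-interconnection hopset of Elkin and Neiman. Fix $k = \Theta(\log\eta)$ levels, together with a secondary parameter built from $\rho$ that governs how many distance scales are processed inside each level; the $\log\eta$ levels and the $\approx 1/\rho$ scales-per-level are exactly what produce the exponent $\log\eta + 1/\rho + 1$ and the base $(\log\eta+1/\rho)/\epsilon$ in the final hopbound. First I would build a sampled hierarchy $V = A_0 \supseteq A_1 \supseteq \cdots \supseteq A_k$, where each element of $A_i$ is retained in $A_{i+1}$ independently with probability $n^{-1/\eta}$ (so $\E[|A_i|] = n^{1-i/\eta}$ and $A_k$ is $O(1)$-sized w.h.p.). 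Then I process geometrically increasing distance scales $R_0 < R_1 < \cdots$: for each scale $R_j$ and each surviving center $c \in A_i$, grow a bounded-hop Dijkstra ball (a \emph{cluster}) from $c$ out to radius $\Theta(R_j)$. Call $c$ \emph{popular} at this scale if its cluster contains a center of $A_{i+1}$, and \emph{unpopular} otherwise. For a popular $c$ we add a single weighted hopset edge from $c$ to such an enclosed $A_{i+1}$-center and promote $c$ to the next level; for an unpopular $c$ we add weighted hopset edges from $c$ to every cluster center it reached at this scale. Every hopset edge is given weight equal to the true graph distance between its endpoints.

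For the size bound I would argue that an unpopular cluster at level $i$ contains $O(n^{1/\eta}\log n)$ vertices w.h.p.\ (otherwise, since each vertex is in $A_{i+1}$ with probability $n^{-1/\eta}$, the cluster would w.h.p.\ contain an $A_{i+1}$-center and hence be popular), so it contributes $\widetilde{O}(n^{1/\eta})$ edges; there are at most $n$ clusters per scale, $\widetilde{O}(1)$ scales in total, and popular clusters add only one edge each, so the overall number of hopset edges is $\widetilde{O}(n^{1+1/\eta})$. Absorbing the polylog into a small adjustment of the sampling probability yields $O(n^{1+1/\eta})$ as stated, and polynomial expected running time is immediate since every phase is a truncated Dijkstra computation.

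For correctness, fix $u,v$ with $d_G(u,v)=D$ and a shortest path $P$ from $u$ to $v$. Chop $P$ into $\Theta(\beta/L)$ consecutive subpaths, where $L$ is a per-level ``hop budget,'' and prove by induction on the level $i$ that any two path-vertices at distance at most $R_j$ that are not yet joined by a low-hop near-shortest path at level $i$ become joined at level $i+1$, either through an interconnection edge (if one of their clusters is unpopular, giving a $2$-hop detour of length at most $(1+O(\epsilon/(\log\eta+1/\rho)))$ times the true distance) or by routing through a common $A_{i+1}$-center (if both are popular). Each such step multiplies the accumulated hop count by $\Theta\!\big(\tfrac{\log\eta+1/\rho}{\epsilon}\big)$ and inflates the stretch by a factor $1+O(\epsilon/(\log\eta+1/\rho))$; telescoping the stretch over the $\approx \log\eta+1/\rho$ rounds gives overall stretch $1+O(\epsilon)$ (rescale $\epsilon$ at the outset), and telescoping the hop recursion over $\log\eta+1/\rho+1$ rounds gives the claimed $\big(\tfrac{\log\eta+1/\rho}{\epsilon}\big)^{\log\eta+1/\rho+1}$ hopbound.

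The main obstacle is precisely this correctness/hopbound analysis: one must maintain a single global error budget $\epsilon$ and divide it evenly across the $\approx \log\eta+1/\rho$ nested scales so that, simultaneously, (a) the total multiplicative error telescopes to $1+\epsilon$, (b) each subpath of $P$ remains long enough (a constant fraction of $\epsilon D/\beta$) that a bounded-radius Dijkstra ball actually captures it, and (c) the per-round blow-up in the number of hops is exactly $O((\log\eta+1/\rho)/\epsilon)$, so the recursion closes with the stated tower exponent. Getting these three constraints to be simultaneously satisfiable is what forces the particular form of $\beta$, and it is the delicate part carried out in \cite{elkin2019RNC}.
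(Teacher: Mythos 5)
The paper does not prove this lemma; it is stated verbatim as a cited existential result from Elkin--Neiman, so there is nothing in the paper to compare your argument against step by step. What you have written is an attempt to reconstruct the cited construction, and the high-level vocabulary (sampled hierarchy, bounded-radius Dijkstra clusters, popular versus unpopular centers, interconnection edges) is indeed the superclustering-and-interconnection framework that Elkin--Neiman use.

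That said, there are two genuine gaps. First, your sampling hierarchy does not close. You keep each vertex in $A_{i+1}$ with the same probability $n^{-1/\eta}$ at every level and only take $k = \Theta(\log\eta)$ levels; that gives $\E[|A_k|] = n^{1 - k/\eta} = n^{1 - \Theta(\log\eta)/\eta}$, which is still polynomially large (you would need $k \approx \eta$ levels with this uniform rate). The Elkin--Neiman construction uses sampling probabilities that escalate geometrically across levels (the exponent roughly doubles), which is exactly what lets $\log\eta$ levels suffice and what produces the $\log\eta$ term in the exponent of $\beta$; with your uniform rate the recursion never terminates in the stated number of rounds. Second, and more fundamentally as a matter of what counts as a proof, the core of the argument --- that one can partition the error budget across the $\approx \log\eta + 1/\rho$ rounds so that (a) the stretch telescopes to $1+\epsilon$, (b) each subpath stays capturable by a bounded-radius ball, and (c) the per-round hop blow-up is $O((\log\eta + 1/\rho)/\epsilon)$ --- is exactly the content of the lemma, and you explicitly defer it to the cited reference. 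As written, the "proof" restates the shape of the bound and then appeals to \cite{elkin2019RNC} for the part that actually forces that shape, so it is not a self-contained argument; it is a (helpful but incomplete) gloss on the citation the paper already makes.
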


Let $W_0(x)$ be the principle branch of the Lambert $W$ function. When $x \geq 3$, the function is upper bounded by $\ln{x} - (1/2) \ln{\ln{x}}$. The Lemma~\ref{lem:existential_undir_eps} existential result implies the following:

\begin{corollary} \label{cor:existential_undir_eps}
    Let $\eta = \lfloor \beta^{1/W_0(\ln{\beta})} \rfloor > \be^{1/(\ln{\ln{\be}}-\frac{1}{2}\ln{\ln{\ln{\be}}})}$ (inequality holds when $\be \geq 3$). There is a randomized polynomial-time $O(n^{1 + 1/\eta} / {\opt})$-approximation for undirected stretch-$(1+\epsilon)$ {\hopset}, where $\epsilon \in (0,1)$.
\end{corollary}

For some insight into the behavior of $\eta$, note first that for all $\beta \geq 3$, $\eta \geq 6$. Additionally, the $\eta$ function grows faster than $\ln{\be}$, but much slower than $\be$. The Corollary~\ref{cor:existential_undir_eps} construction performs better than the star-sampling/randomized-rounding algorithms as $\opt$ grows, resulting in improved approximations compared to the directed graph, arbitrary stretch regime when $\be$ is relatively small.

\begin{theorem} \label{thm:undir_eps}    
    Let $\eta = \lfloor \beta^{1/W_0(\ln{\beta})} \rfloor > \be^{1/(\ln{\ln{\be}}-\frac{1}{2}\ln{\ln{\ln{\be}}})}$ (inequality holds when $\be \geq 3$). When $\beta = \widetilde{O}(n^{\frac{1}{2} - \frac{1}{2\eta}})$, there is a randomized polynomial-time $\widetilde{O}(\sqrt{\be} \cdot n^{\frac{1}{2} + \frac{1}{2\eta} + \epsilon'})$-approximation for undirected $(1+\epsilon)$-stretch {\hopset}, where $\epsilon \in (0,1)$, and $\epsilon' > 0$ is an arbitrarily small constant.
\end{theorem}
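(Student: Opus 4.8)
The plan is to mimic the proofs of Theorems~\ref{thm:small_be_dir_gen} and~\ref{thm:dir_eps}: run a handful of our approximation algorithms in parallel and return the cheapest hopset found. Concretely, I would trade off the junction tree algorithm of Section~\ref{sec:junction_tree} (an $\widetilde{O}(\be n^{\epsilon'} \cdot {\opt})$-approximation by Theorem~\ref{thm:junction_tree}), the star-sampling/randomized-rounding algorithm of Section~\ref{sec:star_sampling_rounding} (an $\widetilde{O}(n/\sqrt{{\opt}})$-approximation by Theorem~\ref{thm:bbmry_alg}), and the approximation of Corollary~\ref{cor:existential_undir_eps} coming from the undirected $(1+\epsilon)$-hopset construction of Lemma~\ref{lem:existential_undir_eps} (an $\widetilde{O}(n^{1+1/\eta}/{\opt})$-approximation). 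Each runs in (expected) polynomial time, so the combined algorithm does as well, and its approximation ratio is
\begin{align*}
    \al \;=\; \widetilde{O}\!\left( \max_{\be,\,{\opt}} \; \min\!\left\{ \be n^{\epsilon'} \cdot {\opt},\;\; \frac{n}{\sqrt{{\opt}}},\;\; \frac{n^{1+1/\eta}}{{\opt}} \right\} \right).
\end{align*}

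The rest of the proof is then just an optimization of this expression. Fixing $\be$, the junction-tree term increases in ${\opt}$ while the other two decrease, and $n^{1+1/\eta}/{\opt} \le n/\sqrt{{\opt}}$ exactly when ${\opt} \ge n^{2/\eta}$, so the existential term is the operative decreasing curve precisely in the large-${\opt}$ regime. There the binding pair is the junction-tree curve and the existential curve, which cross at ${\opt}^{\star} = \widetilde{\Theta}\!\big(n^{(1+1/\eta)/2}/\sqrt{\be}\big)$, at which point both equal $\widetilde{O}\!\big(\sqrt{\be}\cdot n^{1/2 + 1/(2\eta) + \epsilon'}\big)$, which is the claimed bound. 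To close the argument I would verify that ${\opt}^{\star}$ is the global maximizer: on the complementary regime ${\opt} < n^{2/\eta}$, the minimum equals the increasing junction-tree term (because the junction-tree/star-sampling crossing of Theorem~\ref{thm:small_be_dir_gen} lies to the right of $n^{2/\eta}$ in this range of $\be$), so its maximum over that regime is $\be n^{\epsilon'+2/\eta}$, and one checks $\be n^{\epsilon'+2/\eta} \le \sqrt{\be}\cdot n^{1/2 + 1/(2\eta)+\epsilon'}$ whenever $\be = \widetilde{O}(n^{1/2 - 1/(2\eta)})$.

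The one genuinely non-mechanical step --- and the place I expect to be most careful --- is tracking exactly where the hypothesis $\be = \widetilde{O}(n^{1/2 - 1/(2\eta)})$ is used. Its role is to force ${\opt}^{\star} \ge n^{2/\eta}$, i.e., to place the junction-tree/existential crossing in the regime where the undirected $(1+\epsilon)$ bound actually beats both the star-sampling bound and the folklore bound of Corollary~\ref{cor:existential_folklore}; this inequality reduces, up to lower-order terms, to $\be \le n^{1-3/\eta}$, which is implied by the hypothesis because $\eta \ge 6$ for every $\be \ge 3$, so that $1/2 - 1/(2\eta) \le 1 - 3/\eta$. For $\be$ above this threshold a different trade-off wins and one falls back to Theorems~\ref{thm:main_result} and~\ref{thm:small_be_dir_gen}, which is why the statement restricts $\be$. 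Confirming these threshold comparisons, and that no other pair of the three curves becomes binding at intermediate ${\opt}$, is the crux; everything else is bookkeeping, including absorbing the $n^{\epsilon'}$ slack from the junction tree subroutine into the final exponent.
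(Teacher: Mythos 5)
Your proposal is correct and follows the paper's approach of trading off approximation algorithms, but the paper's proof is simpler: it trades off \emph{only} the junction tree algorithm against the Corollary~\ref{cor:existential_undir_eps} bound. The two-way max-min $\max_{\opt}\min\{\be n^{\epsilon'}\opt,\; n^{1+1/\eta}/\opt\}$ is attained at the single crossing $\opt^{\star}\approx n^{(1+1/\eta)/2}/\sqrt{\be}$ and evaluates to $\widetilde{O}(\sqrt{\be}\, n^{1/2+1/(2\eta)+\epsilon'})$ for \emph{any} $\be$, so no case analysis over $\opt$ regimes is required. Folding in the star-sampling curve, as you do, can only lower the max-min and then obliges you to verify which pair of curves is binding; your verification is correct but is extra work the paper's route avoids. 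One small imprecision worth flagging: you attribute the hypothesis $\be=\widetilde{O}(n^{1/2-1/(2\eta)})$ to the need for $\opt^{\star}\ge n^{2/\eta}$, which only requires $\be\lesssim n^{1-3/\eta}$, and treat the remaining gap as slack. In fact the threshold $n^{1/2-1/(2\eta)}$ is precisely where the undirected existential curve $n^{1+1/\eta}/\opt$ ceases to pointwise dominate the folklore curve $n^2/(\be^2\opt)$ (since $n^{1+1/\eta}\le n^2/\be^2$ iff $\be\le n^{1/2-1/(2\eta)}$); above that threshold the theorem is superseded by Theorems~\ref{thm:small_be_dir_gen} and~\ref{thm:big_be_dir_gen}, and this comparison, not the $\opt^{\star}\ge n^{2/\eta}$ condition, is what dictates the exact form of the restriction on $\be$ in the statement.
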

\begin{proof}
    The approximation is achieved by trading off the junction tree algorithm of Section~\ref{sec:junction_tree} with the Corollary~\ref{cor:existential_undir_eps} algorithm.
\end{proof}

\subsubsection{Undirected Hopsets with Odd Stretch}

The following result is directly implied by Thorup-Zwick approximate distance oracles.

\begin{lemma}[\hspace{1sp}\cite{TZ05}]
    Let $k \geq 1$ be an integer. For any weighted undirected graph $G$, a stretch-$(2k-1)$ hopset with hopbound $2$ and size $O(kn^{1+1/k})$ can be found in polynomial-time.
\end{lemma}
\begin{corollary} \label{cor:existential_undir_k}
    Let $k \geq 1$ be an integer. There is a polynomial-time $O(kn^{1+1/k} / {\opt})$-approximation for undirected stretch-$(2k-1)$ {\hopset}.
\end{corollary}

We can use Corollary~\ref{cor:existential_undir_k} to get an improved approximation (over the directed graph, arbitrary stretch setting) for undirected stretch-$(2k-1)$ {\hopset} when $\be$ is relatively small.

\begin{theorem} \label{thm:undir_gen_stretch}
    Let $k \geq 1$ be an integer. When $\be = \widetilde{O}(k^{-1/2} \cdot \, n^{\frac{1}{2} - \frac{1}{2k}}) $, there is a polynomial-time $\widetilde{O}(\sqrt{k \be \vphantom{\be^k}} \, \cdot \, n^{\frac{1}{2} + \frac{1}{2k} + \epsilon})$-approximation for undirected stretch-$(2k-1)$ {\hopset}, where $\epsilon > 0$ is an arbitrarily small constant.
\end{theorem}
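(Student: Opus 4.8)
The plan is to follow exactly the template of the other trade-off theorems in this section (e.g.\ Theorems~\ref{thm:main_result}, \ref{thm:small_be_dir_gen}, and~\ref{thm:undir_eps}): run two polynomial-time algorithms on the instance and return the cheaper of the two hopsets produced. The first algorithm is the junction tree algorithm of Section~\ref{sec:junction_tree}; its guarantee (Theorem~\ref{thm:junction_tree}) is insensitive to the stretch and to whether the graph is directed, so it gives a $\widetilde{O}(\beta n^{\epsilon}\cdot\opt)$-approximation for undirected stretch-$(2k-1)$ {\hopset}. The second is the algorithm of Corollary~\ref{cor:existential_undir_k}, which constructs the Thorup--Zwick hopbound-$2$, stretch-$(2k-1)$ hopset of size $O(k n^{1+1/k})$ (valid here since a hopbound-$2$ hopset is also a hopbound-$\beta$ hopset for every $\beta\geq 2$), yielding an $O(k n^{1+1/k}/\opt)$-approximation. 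As in Section~\ref{sec:star_sampling_rounding} we may guess $\opt\in\{0,1,\dots,n^2\}$, so the worst-case ratio of the combined algorithm is
\begin{align*}
  \al \;=\; \widetilde{O}\!\left(\max_{\opt}\ \min\!\left\{\beta n^{\epsilon}\cdot\opt,\ \frac{k n^{1+1/k}}{\opt}\right\}\right).
\end{align*}

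Next I would carry out the (routine) optimization of this expression. The inner minimum, viewed as a function of $\opt$, increases and then decreases, so it is maximized where the two terms are equal, namely at $\opt^{\star}=\sqrt{k n^{1+1/k}/(\beta n^{\epsilon})}$, at which point the common value is $\sqrt{k\beta\cdot n^{1+1/k+\epsilon}}=\sqrt{k\beta}\cdot n^{1/2+1/(2k)+\epsilon/2}$; rounding $\opt^{\star}$ to the nearest integer in the guessed range changes this by at most a constant factor, and after renaming the arbitrarily small constant we obtain the claimed $\widetilde{O}(\sqrt{k\beta}\cdot n^{1/2+1/(2k)+\epsilon})$. One small point to verify is that $\opt^{\star}$ indeed lies in the guessed range $[1,n^2]$: the upper bound is immediate, and $\opt^{\star}\geq 1$ holds because the hypothesis $\beta=\widetilde{O}(k^{-1/2}n^{1/2-1/(2k)})$ places $\beta$ far below $k n^{1+1/k-\epsilon}$. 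Since both subroutines run in polynomial time, so does the combined algorithm.

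I do not expect a genuine obstacle in this proof: every ingredient is either already established (the junction tree guarantee, which itself rests on the FPTAS and LP machinery of Sections~\ref{sec:lp_relaxation} and~\ref{sec:junction_tree}) or a one-line consequence of a known existential bound. The only point that deserves a sentence of justification is \emph{why} the hypothesis on $\beta$ is stated as it is: the cutoff $\beta\approx k^{-1/2}n^{1/2-1/(2k)}$ is precisely where the bound $\sqrt{k\beta}\,n^{1/2+1/(2k)}$ produced here equals the bound $n/\sqrt{\beta}$ of Theorem~\ref{thm:big_be_dir_gen} (obtained by trading the junction tree algorithm against the folklore $\widetilde{O}(n^2/(\beta^2\opt))$-approximation); for $\beta$ beyond the cutoff the folklore-based trade-off dominates, so restricting to $\beta=\widetilde{O}(k^{-1/2}n^{1/2-1/(2k)})$ simply isolates the range in which this stretch-$(2k-1)$ existential bound is the right object to trade the junction tree algorithm against.
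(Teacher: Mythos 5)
Your proposal is correct and is exactly the paper's argument: the paper's proof is a single sentence saying to trade off the junction tree algorithm against the Corollary~\ref{cor:existential_undir_k} algorithm, and your explicit computation of the crossover point $\opt^{\star}$ and the resulting bound $\sqrt{k\beta}\,n^{1/2+1/(2k)+\epsilon}$ just unpacks that sentence. Your extra remark explaining the hypothesis on $\beta$ (it is the cutoff at which this bound equals the $\widetilde{O}(n^{1+\epsilon'}/\sqrt{\beta})$ bound of Theorem~\ref{thm:big_be_dir_gen}) is not in the paper but is the right interpretation.
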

\begin{proof}
    The approximation is achieved by trading off the junction tree algorithm of Section~\ref{sec:junction_tree} and the Corollary~\ref{cor:existential_undir_k} algorithm.
\end{proof}

\else 
We trade off a folklore existential result for directed, exact hopsets, along with the existential results of \cite{BW23} (directed, $(1+\epsilon)$-stretch), \cite{elkin2019RNC} (undirected, $(1+\epsilon)$-stretch), and \cite{TZ05} (undirected, odd stretch) to get improved approximations over the general problem in these regimes. We also note that a trade off can be done with the existential results of~\cite{BP2020} for improved approximations in the constant stretch, $\Omega(\log n)$-hopbound regime for undirected graphs. Further discussions of each trade-off can be found in Appendix~\ref{app:tradeoffs}.

\subsubsection{Directed Graphs with Arbitrary Distance Bounds}
We trade off our junction tree and star-sampling/randomized-rounding algorithms with the folklore construction to achieve the following for directed {\hopset}:
\begin{restatable}{thm}{mainResult}
\label{thm:main_result}
    There is a randomized  polynomial-time $\widetilde{O}(n^{4/5 + \epsilon})$-approximation for directed {\hopset}.
\end{restatable}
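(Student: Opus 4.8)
The plan is to combine the three approximation algorithms already developed: the junction-tree algorithm of Theorem~\ref{thm:junction_tree}, which is an $\widetilde{O}(\be\, n^{\epsilon}\cdot\opt)$-approximation; the star-sampling/randomized-rounding algorithm of Theorem~\ref{thm:bbmry_alg}, which is an $\widetilde{O}(n/\sqrt{\opt})$-approximation; and the folklore exact-hopset construction of Corollary~\ref{cor:existential_folklore}, which is an $\widetilde{O}(n^{2}/(\be^{2}\cdot\opt))$-approximation. The combined algorithm simply runs all three and returns the cheapest hopset produced. The junction-tree and folklore algorithms require no knowledge of $\opt$; the star-sampling/rounding algorithm uses the threshold $b=\sqrt{\opt}$, so I would run it once for each guess $\opt\in\{1,\dots,n^{2}\}$ and keep the best valid output over all guesses. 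Each run is polynomial time and there are polynomially many guesses, so the whole procedure is polynomial time; and since every individual output is a feasible hopset, returning the minimum-cost one is correct. On an instance whose hopbound is $\be$ and whose optimum is $\opt$, the cost of our output is at most the minimum of the three bounds, so the worst-case approximation ratio is
\begin{equation*}
\alpha \;=\; \widetilde{O}\!\left(\ \max_{\be,\,\opt}\ \min\left\{\ \be\, n^{\epsilon}\cdot\opt,\ \ \frac{n}{\sqrt{\opt}},\ \ \frac{n^{2}}{\be^{2}\cdot\opt}\ \right\}\right).
\end{equation*}

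What remains is the optimization: showing this max is $\widetilde{O}(n^{4/5+\epsilon})$. I would set $\be=n^{a}$ and $\opt=n^{c}$, so the three quantities have exponents $a+c+\epsilon$, $1-\tfrac{c}{2}$, and $2-2a-c$, and first maximize over $a$ with $c$ fixed. The first exponent is increasing in $a$ and the third is decreasing in $a$, while the second does not depend on $a$, so $\min\{a+c+\epsilon,\,2-2a-c\}$ is maximized where they are equal, i.e.\ at $3a=2-2c-\epsilon$ (equivalently $\be^{3}\approx n^{2}/\opt^{2}$), where the common value is $\tfrac23+\tfrac c3+O(\epsilon)$. It then remains to maximize $\min\{\tfrac23+\tfrac c3+O(\epsilon),\ 1-\tfrac c2\}$ over $c$; the first term increases and the second decreases in $c$, so the optimum is at their crossing, $\tfrac23+\tfrac c3=1-\tfrac c2$, giving $c=\tfrac25+O(\epsilon)$ and common value $\tfrac45+O(\epsilon)$. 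Hence $\alpha=\widetilde{O}(n^{4/5+\epsilon})$ after rescaling the $\epsilon$ in the junction-tree subroutine, attained at $\be=\widetilde{O}(n^{2/5})$ and $\opt=\widetilde{O}(n^{2/5-\epsilon})$---the common intersection point of the three curves.

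The step that needs care, and the main obstacle, is confirming that this interior saddle point is actually the global maximum over the full feasible region $1\le\be\le n$, $1\le\opt\le n^{2}$, including the boundary regimes, rather than merely a stationary point. Here is how I would close that gap. If $\opt\ge n^{2/5}$, then the single bound $n/\sqrt{\opt}$ is already $\le n^{4/5}$, so the minimum of the three is $\widetilde{O}(n^{4/5})$. If instead $\opt\le n^{2/5}$, then the equalized value of the first and third bounds is $\widetilde{O}(n^{2/3}\opt^{1/3})\le\widetilde{O}(n^{2/3}\cdot n^{2/15})=\widetilde{O}(n^{4/5})$; and if the value of $\be$ that equalizes them lies outside $[1,n]$, one simply evaluates at the clamped endpoint, where one of the first/third bounds is only smaller, so the minimum is at most this value anyway (note also that the adversary's $\be$ is integral, which only shrinks the feasible set). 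These two cases cover the whole region, giving $\alpha=\widetilde{O}(n^{4/5+\epsilon})$ and establishing Theorem~\ref{thm:main_result}.
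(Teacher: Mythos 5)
Your proposal is correct and takes essentially the same approach as the paper: both combine the junction-tree algorithm, the star-sampling/randomized-rounding algorithm, and the folklore construction of Corollary~\ref{cor:existential_folklore} by running all three and keeping the cheapest output, then bounding the approximation ratio by the max over $(\be,\opt)$ of the min of the three guarantees. The only difference is that the paper simply asserts the optimum of the max-min is $\widetilde{O}(n^{4/5+\epsilon})$ at $\be=\widetilde{O}(n^{2/5})$, $\opt=\widetilde{O}(n^{2/5-\epsilon})$, whereas you carry out the optimization explicitly and also verify the boundary regimes, which is a somewhat more careful version of the same argument.
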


Where $\epsilon > 0$ is an arbitrarily small constant. When $\beta$ is smaller, the folklore construction gives worse approximations than the trade-off between our other algorithms. In this regime, we achieve a better approximation:
\begin{restatable}{thm}{smallBeDirGen} \label{thm:small_be_dir_gen}
    When $\be = \widetilde{O}(n^{2/5})$, there is a randomized polynomial-time $\widetilde{O}(\be^{1/3} \cdot n^{2/3 + \epsilon})$-approximation for directed {\hopset}.
\end{restatable}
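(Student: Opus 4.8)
The plan is to run, on the given instance, the two approximation algorithms built in Sections~\ref{sec:junction_tree} and~\ref{sec:star_sampling_rounding}, output whichever of the two hopsets is cheaper, and then verify by a short trade-off calculation that the resulting ratio is $\widetilde{O}(\be^{1/3}n^{2/3+\epsilon})$. Recall that Theorem~\ref{thm:junction_tree} produces a hopset of cost $\widetilde{O}(\be n^{\epsilon}\cdot{\opt}^{2})$ (an $\widetilde{O}(\be n^{\epsilon}\cdot{\opt})$-approximation), and Theorem~\ref{thm:bbmry_alg} produces a hopset of expected cost $O(n\ln n\cdot\sqrt{\opt})$ (an $O(n\ln n/\sqrt{\opt})$-approximation). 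First I would observe that, since $\E[\min(X,Y)]\le\min(\E[X],\E[Y])$ for any random variables, the expected cost of the solution we return is at most $\min\{\widetilde{O}(\be n^{\epsilon}{\opt}^{2}),\,O(n\ln n\sqrt{\opt})\}$. Dividing by ${\opt}$, and using that this bound must hold no matter the (unknown) value of ${\opt}$, the expected approximation ratio is
\[
  \al \;\le\; \widetilde{O}\left(\min\Bigl\{\be n^{\epsilon}\cdot{\opt},\ \tfrac{n}{\sqrt{\opt}}\Bigr\}\right)\;\le\; \widetilde{O}\left(\max_{t\ge 1}\ \min\Bigl\{\be n^{\epsilon}\cdot t,\ \tfrac{n}{\sqrt t}\Bigr\}\right).
\]

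Next I would carry out the optimization. Inside the minimum the first term is increasing in $t$ and the second decreasing, so the maximum is attained at the crossing point $\be n^{\epsilon}t=n/\sqrt t$, i.e.\ $t^{3/2}=n^{1-\epsilon}/\be$, equivalently $t=n^{(2-2\epsilon)/3}\be^{-2/3}$ (which lies in the admissible range $[1,n^2]$ whenever $\be\le n^{1-\epsilon}$, in particular in our regime). Substituting this value of $t$ back in gives $\al=\widetilde{O}(\be^{1/3}n^{2/3+\epsilon/3})$; renaming $\epsilon/3$ as a fresh constant $\epsilon'>0$ (equivalently, running the junction-tree subroutine with a suitably scaled constant) yields the claimed $\widetilde{O}(\be^{1/3}n^{2/3+\epsilon})$ bound. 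Both subroutines run in polynomial time, so the combined algorithm does too, and it is randomized because of the star-sampling/rounding component.

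The one substantive point — and the reason for the hypothesis $\be=\widetilde{O}(n^{2/5})$ — is to check that in this regime nothing is gained by also mixing in the folklore construction of Corollary~\ref{cor:existential_folklore} (the $\widetilde{O}(n^{2}/(\be^{2}{\opt}))$-approximation), so that the two-algorithm trade-off above is already what one wants. I would verify this by evaluating the folklore ratio at the crossing value ${\opt}=\widetilde{\Theta}(n^{2/3}\be^{-2/3})$, where it becomes $\widetilde{\Theta}(n^{4/3}\be^{-4/3})$; this is $\ge\be^{1/3}n^{2/3}$ exactly when $\be=\widetilde{O}(n^{2/5})$, so in that range the folklore algorithm cannot improve on $\be^{1/3}n^{2/3+\epsilon/3}$ and may be dropped. (For larger $\be$ it must be folded back in, which is precisely what Theorems~\ref{thm:main_result} and~\ref{thm:big_be_dir_gen} do.) I do not expect a genuine obstacle: all the combinatorial content is already packed into Theorems~\ref{thm:junction_tree} and~\ref{thm:bbmry_alg}, which are used here as black boxes; the only care needed is the bookkeeping of the expectation from the randomized subroutine and tracking $\epsilon$ versus $\epsilon'$ in the exponent.
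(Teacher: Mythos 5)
Your proposal is correct and follows exactly the approach the paper intends: run the junction-tree algorithm (Theorem~\ref{thm:junction_tree}) and the star-sampling/rounding algorithm (Theorem~\ref{thm:bbmry_alg}), take the cheaper, and balance $\be n^{\epsilon}\opt$ against $n/\sqrt{\opt}$ to get $\widetilde{O}(\be^{1/3}n^{2/3+\epsilon})$. The paper gives no detailed proof of this theorem (it just says to trade off those two algorithms), and your calculation — including the check that the folklore $n^2/(\be^2\opt)$ bound is dominated at the crossing point precisely when $\be=\widetilde{O}(n^{2/5})$ — fills in the intended bookkeeping correctly.
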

 
We also get improved approximations in the large $\beta$ setting by trading off just the junction tree algorithm with the folklore construction:
\begin{restatable}{thm}{bigBeDirGen} \label{thm:big_be_dir_gen}
    When $\be = \widetilde{\Omega}(n^{2/5})$, there is a randomized polynomial-time $\widetilde{O}( n^{1+\epsilon} / \sqrt{\vphantom{\be^k} \be})$-approximation for directed {\hopset}.
\end{restatable}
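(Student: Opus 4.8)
The plan is to combine exactly two of the algorithms already in hand: the junction tree algorithm of Theorem~\ref{thm:junction_tree}, which is an $\widetilde{O}(\be n^\epsilon \cdot {\opt})$-approximation, and the folklore exact-hopset construction of Corollary~\ref{cor:existential_folklore}, which is an $\widetilde{O}(n^2/(\be^2 \cdot {\opt}))$-approximation. The overall algorithm is simply to run both subroutines and output whichever hopset is smaller. This composition is legitimate even though ${\opt}$ is unknown: the junction tree algorithm only relies on the \emph{existence} of a low-density junction tree (Lemma~\ref{lem:existence}), and the folklore construction just adds $\widetilde{O}(n^2/\be^2)$ random shortcut edges, so neither needs to be told ${\opt}$. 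Consequently the analysis only has to bound the approximation ratio as a function of the (unknown) optimum and then maximize over it.

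Since the junction tree ratio is increasing in ${\opt}$ while the folklore ratio is decreasing in ${\opt}$, the worst case for the minimum of the two occurs at their crossover. Equating $\be n^\epsilon \cdot {\opt}$ with $n^2/(\be^2 \cdot {\opt})$ gives ${\opt} = \widetilde{\Theta}(n^{1-\epsilon/2}/\be^{3/2})$, and substituting back yields a ratio of $\widetilde{O}(n^{1+\epsilon/2}/\sqrt{\be})$; renaming the constant $\epsilon$ gives the claimed bound. Concretely,
\begin{align*}
    \alpha \;=\; \widetilde{O}\!\left( \max_{{\opt}}\, \min\left\{ \be n^\epsilon \cdot {\opt},\; \frac{n^2}{\be^2 \cdot {\opt}}\right\}\right) \;=\; \widetilde{O}\!\left(\frac{n^{1+\epsilon}}{\sqrt{\be}}\right).
\end{align*}
Both subroutines run in polynomial time (the junction tree algorithm by Theorem~\ref{thm:junction_tree}, the folklore construction by Corollary~\ref{cor:existential_folklore}), so the composed algorithm does too.

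A few small points should be checked along the way. First, Corollary~\ref{cor:existential_folklore} is phrased for exact hopsets, so one should observe that an exact hopset with hopbound $\be$ is a feasible solution to {\hopset} for \emph{any} distance bound function, since $Dist(s,t) \geq d_G(s,t)$ by definition; hence its size $\widetilde{O}(n^2/\be^2)$ is a valid upper bound on the cost of that branch's output. Second, one should confirm that the crossover value ${\opt} = \widetilde{\Theta}(n^{1-\epsilon/2}/\be^{3/2})$ lies in a meaningful regime: it satisfies ${\opt}\geq 1$ precisely when $\be = \widetilde{O}(n^{2/3})$, which comfortably contains the interesting portion of the stated range $\be = \widetilde{\Omega}(n^{2/5})$; for still larger $\be$ the folklore branch alone already gives a trivial $\widetilde{O}(1)$-approximation. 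Finally, the restriction $\be = \widetilde{\Omega}(n^{2/5})$ is what makes this the \emph{right} trade-off to state: below that threshold the $\be$-independent star-sampling/randomized-rounding algorithm beats the folklore construction and one instead recovers the bound of Theorem~\ref{thm:small_be_dir_gen}; running that algorithm as well does no harm but is not needed here.

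The ``hard part'' is really just bookkeeping rather than a new structural idea — the only mildly delicate step is carrying out the $\max$–$\min$ optimization over the unknown ${\opt}$ correctly and pinning down the range of $\be$ for which the folklore/junction-tree pairing is the dominant one; everything else is immediate from the results already established.
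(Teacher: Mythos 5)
Your proof is correct and matches the paper's approach: the paper also obtains Theorem~\ref{thm:big_be_dir_gen} by running both the junction tree algorithm and the folklore exact-hopset construction and maximizing the min of the two ratios over the unknown $\opt$. Your sanity checks about feasibility of the folklore hopset under arbitrary distance bounds and about the crossover regime are accurate, though the aside that folklore alone gives an $\widetilde{O}(1)$-approximation once $\be$ exceeds $n^{2/3}$ is a slight overstatement — it gives $\widetilde{O}(n^2/\be^2)$, which is $\widetilde{O}(1)$ only at $\be = \widetilde{\Omega}(n)$, but this is still below the theorem's claimed bound in that range, so the argument is unaffected.
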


\subsubsection{Directed Hopsets with Small Stretch}
Using the existential bound of~\cite{BW23}, we get an improved approximation for directed {\hopset} when we restrict to $(1+\epsilon)$ stretch. 
\begin{restatable}{thm}{dirEps} \label{thm:dir_eps}
    When $\beta \geq 20\log n$ and $\epsilon \in (0,1)$, there is a randomized polynomial-time $\widetilde{O}(n^{3/4 + \epsilon'} \cdot \epsilon^{-\frac{1}{4}})$-approximation for directed stretch-$(1+\epsilon)$ {\hopset}, where $\epsilon' > 0$ is an arbitrarily small constant.
\end{restatable}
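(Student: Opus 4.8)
The plan is the same trade-off strategy used for Theorem~\ref{thm:main_result}, now incorporating the~\cite{BW23} existential bound of Corollary~\ref{cor:existential_W}, which is the one specialized to $(1+\epsilon)$-stretch. Concretely, I would run four algorithms on the given instance and return the cheapest hopset produced: the junction tree algorithm of Section~\ref{sec:junction_tree}, an $\widetilde{O}(\beta n^{\epsilon'}\cdot\opt)$-approximation; the combined star-sampling/randomized-rounding algorithm of Section~\ref{sec:star_sampling_rounding}, an $\widetilde{O}(n/\sqrt{\opt})$-approximation; the folklore exact-hopset construction of Corollary~\ref{cor:existential_folklore}, an $\widetilde{O}(n^2/(\beta^2\opt))$-approximation; and the Corollary~\ref{cor:existential_W} construction, which (since $W=\poly(n)$, so the $\log^2(nW)$ factor is $\widetilde{O}(1)$) gives an $\widetilde{O}(n^2/(\beta^3\epsilon^2\opt))$-approximation when $\beta\le n^{1/3}$ and an $\widetilde{O}(n^{3/2}/(\beta^{3/2}\epsilon^2\opt))$-approximation when $\beta>n^{1/3}$. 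All four run in polynomial time, so the composite algorithm does too. As usual we may assume $\opt$ is not polynomially close to $n^2$ (else returning all of $\widetilde{E}$ is an $\widetilde{O}(1)$-approximation) and $\beta\ge 20\log n$ as hypothesized, which is exactly what Corollary~\ref{cor:existential_W} requires.

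It then remains to bound, for the regime $\beta\le n^{1/3}$, the quantity $\widetilde{O}\big(\max_{\beta,\opt}\min\{\beta n^{\epsilon'}\opt,\ n/\sqrt{\opt},\ n^2/(\beta^2\opt),\ n^2/(\beta^3\epsilon^2\opt)\}\big)$, and the analogous quantity with the last term replaced by $n^{3/2}/(\beta^{3/2}\epsilon^2\opt)$ for $\beta>n^{1/3}$. In the first regime the cleanest route is a weighted-AM--GM argument. Writing $A=\beta n^{\epsilon'}\opt$, $B=n/\sqrt{\opt}$, and $D=n^2/(\beta^3\epsilon^2\opt)$, the exponents $(3,4,1)$ cancel both $\beta$ and $\opt$, so $\min\{A,B,D\}^{8}\le A^{3}B^{4}D=n^{6+3\epsilon'}\epsilon^{-2}$ and hence $\min\{A,B,D\}\le n^{3/4+3\epsilon'/8}\epsilon^{-1/4}$, which after renaming $\epsilon'$ is the claimed bound. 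Equivalently, the worst case is the common intersection of the three curves, at $\beta=\widetilde{\Theta}(n^{1/4}\epsilon^{-3/4})$ --- which indeed satisfies $\beta\le n^{1/3}$ once $\epsilon\ge n^{-1/9}$ --- and $\opt=\widetilde{\Theta}(n^{1/2}\epsilon^{1/2})$, where all three evaluate to $\widetilde{\Theta}(n^{3/4}\epsilon^{-1/4})$. For the regime $\beta>n^{1/3}$ one trades the $n^{3/2}/(\beta^{3/2}\epsilon^2\opt)$ term against the folklore $n^2/(\beta^2\opt)$ term and the two Section~\ref{sec:star_sampling_rounding} algorithms; the folklore term alone already recovers the $\widetilde{O}(n^{4/5+\epsilon'})$ guarantee of Theorem~\ref{thm:main_result}, and the $(1+\epsilon)$ term improves on this whenever $\epsilon$ is bounded away from inverse-polynomial, so the overall bound $\widetilde{O}(n^{3/4+\epsilon'}\epsilon^{-1/4})$ holds across all $\beta$ and $\opt$.

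The step I expect to be the main obstacle is exactly this $\beta>n^{1/3}$ bookkeeping: one must track both pieces of the~\cite{BW23} bound, verify that the $\beta\le n^{1/3}$ balance point stays inside its regime (and handle the complementary corner, falling back on the folklore constructions), and confirm that over the full two-dimensional range of $(\beta,\opt)$ no boundary configuration beats the interior balance point --- all of which relies on each bound being monotone in $\beta$ and in $\opt$ in the convenient direction. None of the individual steps is deep, but making the regime splits and the $\epsilon$-dependence line up precisely with $n^{3/4+\epsilon'}\epsilon^{-1/4}$ is the part that takes care.
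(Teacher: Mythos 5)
Your approach — run the junction-tree, star-sampling/rounding, folklore, and~\cite{BW23} algorithms and return the cheapest hopset, then bound the max-min over $(\beta,\opt)$ — is exactly what the paper does, and the weighted-AM--GM computation $\min\{A,B,D\}^8\le A^3B^4D=n^{6+3\epsilon'}\epsilon^{-2}$ for the branch $\beta\le n^{1/3}$ is correct: the exponents $(3,4,1)$ cancel both $\beta$ and $\opt$, yielding $n^{3/4+3\epsilon'/8}\epsilon^{-1/4}$ uniformly. The problem is the $\beta>n^{1/3}$ branch, which you flag as the ``main obstacle'' but then dispatch with the remark that the folklore bound already gives $\widetilde{O}(n^{4/5+\epsilon'})$ and that the~\cite{BW23} bound ``improves on this''; that statement is true but not quantitatively strong enough. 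The unique AM--GM weights that cancel $\beta$ and $\opt$ from $A=\beta n^{\epsilon'}\opt$, $B=n/\sqrt\opt$, and $D'=n^{3/2}/(\beta^{3/2}\epsilon^2\opt)$ are $(3,2,2)$, giving $\min\{A,B,D'\}\le n^{5/7+O(\epsilon')}\epsilon^{-4/7}$, balanced at $\beta=\widetilde\Theta(n^{1/7}\epsilon^{-12/7})$. That balance point lies in $\beta>n^{1/3}$ precisely when $\epsilon<n^{-1/9}$, and in that range $n^{5/7}\epsilon^{-4/7}$ strictly exceeds $n^{3/4}\epsilon^{-1/4}$. The folklore term $C=n^2/(\beta^2\opt)$ does not rescue you: at $\beta=\opt=n^{2/5}$ with $\epsilon=n^{-3/20}$, all four expressions $A,B,C,D'$ evaluate to $\widetilde{\Theta}(n^{4/5})$, whereas the claimed bound $n^{3/4+\epsilon'}\epsilon^{-1/4}=n^{63/80+\epsilon'}$ is strictly smaller once $\epsilon'<1/80$.

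So the step you isolated really is a gap, not bookkeeping. The four-way trade-off delivers $\widetilde{O}(n^{3/4+\epsilon'}\epsilon^{-1/4})$ when $\epsilon\ge n^{-1/9}$ (the balance point stays in the $\beta\le n^{1/3}$ regime, and you can check the $\beta>n^{1/3}$ side separately tops out at $n^{2/3+O(\epsilon')}/\epsilon\le n^{3/4+\epsilon'}\epsilon^{-1/4}$ there), and trivially when $\epsilon\le n^{-1/5}$ (since then $n^{3/4}\epsilon^{-1/4}\ge n^{4/5}$, subsumed by Theorem~\ref{thm:main_result}); but for $\epsilon\in(n^{-1/5},n^{-1/9})$ you only get $n^{5/7+O(\epsilon')}\epsilon^{-4/7}$, which can be as large as $n^{4/5}$. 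The paper's own proof is just the single sentence listing the four algorithms, so it does not resolve this either; to make the argument go through as stated you would need to either restrict to $\epsilon$ constant (or at least $\epsilon\ge n^{-1/9}$) or weaken the claimed exponent in the intermediate range. You should make that hypothesis explicit rather than asserting that the improvement ``holds across all $\beta$ and $\opt$.''
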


\subsubsection{Undirected Hopsets with Small Stretch}
Let $W_0(x)$ be the principle branch of the Lambert $W$ function, which is upper bounded by $\ln{x} - (1/2) \ln{\ln{x}}$ when $x \geq 3$. With the existential bound of~\cite{elkin2019RNC}, we get the following:
\begin{restatable}{thm}{undirEps} \label{thm:undir_eps}    
    Let $\eta = \lfloor \beta^{1/W_0(\ln{\beta})} \rfloor > \be^{1/(\ln{\ln{\be}}-\frac{1}{2}\ln{\ln{\ln{\be}}})}$ (inequality holds for $\be \geq 3$). When $\beta = \widetilde{O}(n^{\frac{1}{2} - \frac{1}{2\eta}})$, there is a randomized polynomial-time $\widetilde{O}(\sqrt{\be} \cdot n^{\frac{1}{2} + \frac{1}{2\eta} + \epsilon'})$-approximation for undirected $(1+\epsilon)$-stretch {\hopset}, where $\epsilon \in (0,1)$, and $\epsilon' > 0$ is an arbitrarily small constant.
\end{restatable}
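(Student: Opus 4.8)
The plan is a trade-off argument combining two algorithms whose guarantees are already established, neither of which needs to know $\opt$ in advance. We run the junction tree algorithm of Section~\ref{sec:junction_tree}, which by Theorem~\ref{thm:junction_tree} is an $\widetilde{O}(\beta n^{\epsilon_0}\cdot\opt)$-approximation for directed, and hence also for undirected, {\hopset}, for any arbitrarily small constant $\epsilon_0>0$; and we also run the existential construction of Corollary~\ref{cor:existential_undir_eps}, which on an undirected $(1+\epsilon)$-stretch instance produces in randomized polynomial time a hopset of size $O(n^{1+1/\eta})$ and is therefore an $O(n^{1+1/\eta}/\opt)$-approximation, where $\eta=\lfloor\beta^{1/W_0(\ln\beta)}\rfloor$ is the integer that corollary extracts from the given hopbound $\beta$. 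The combined algorithm returns whichever of the two hopsets is smaller.

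The first step is to record that on an instance of optimum value $\opt$ the combined algorithm has approximation ratio $\widetilde{O}\bigl(\min\{\beta n^{\epsilon_0}\opt,\; n^{1+1/\eta}/\opt\}\bigr)$. The first term increases in $\opt$ and the second decreases, so the worst case over $\opt$ occurs at the crossover point $\opt^\star$ where they coincide, namely $(\opt^\star)^2=n^{1+1/\eta}/(\beta n^{\epsilon_0})$, that is, $\opt^\star=n^{1/2+1/(2\eta)}/(\sqrt{\beta}\,n^{\epsilon_0/2})$. Substituting this value back in gives approximation ratio $\widetilde{O}\bigl(\sqrt{\beta}\cdot n^{1/2+1/(2\eta)+\epsilon_0/2}\bigr)$, and renaming the free constant $\epsilon':=\epsilon_0/2$ yields the claimed $\widetilde{O}(\sqrt{\beta}\cdot n^{1/2+1/(2\eta)+\epsilon'})$. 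The stretch parameter $\epsilon\in(0,1)$ is entirely separate from $\epsilon'$: it enters only because Corollary~\ref{cor:existential_undir_eps} is stated for $(1+\epsilon)$ stretch, and all polylogarithmic factors (notably the Harmonic-sum loss inside the junction tree analysis) are swept into $\widetilde{O}(\cdot)$.

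Two routine checks complete the proof. First, for the crossover to be the true maximizer of the min, $\opt^\star$ must lie in the admissible range $[1,n^2]$: the upper bound is automatic, while the hypothesis $\beta=\widetilde{O}(n^{1/2-1/(2\eta)})$ ensures $\opt^\star=\widetilde{\Omega}(1)$ (and in this $\beta$-regime the resulting bound is moreover sublinear and stronger than the general directed guarantee of Theorem~\ref{thm:main_result}, since $\eta\geq6$ for $\beta\geq3$). Second, $\eta$ must be a legal parameter for the underlying existential bound: it is an integer by construction, satisfies $\eta\geq1$, and the stated inequality $\eta>\beta^{1/(\ln\ln\beta-\frac12\ln\ln\ln\beta)}$ follows from the estimate $W_0(x)\leq\ln x-\frac12\ln\ln x$ (valid for $x\geq3$) applied with $x=\ln\beta$.

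I do not anticipate a genuine obstacle: the substantive content — existence and polynomial-time computation of low-density hopbounded junction trees, and the $(1+\epsilon)$ undirected existential hopset — is fully black-boxed by Section~\ref{sec:junction_tree} and Corollary~\ref{cor:existential_undir_eps}, so only the trade-off arithmetic remains. The one place to be careful is bookkeeping the three quantities written with an ``$\epsilon$'' (the stretch $\epsilon$, the junction tree slack $\epsilon_0$, and the final slack $\epsilon'$) and confirming that the $\beta$-range hypothesis places the crossover exactly where the analysis needs it.
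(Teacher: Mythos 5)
Your proof is correct and takes the same approach as the paper, which gives exactly this one-line argument: trade off the junction tree algorithm of Theorem~\ref{thm:junction_tree} (an $\widetilde{O}(\beta n^{\epsilon_0}\cdot\opt)$-approximation) against the $O(n^{1+1/\eta}/\opt)$-approximation of Corollary~\ref{cor:existential_undir_eps}. Your crossover arithmetic matches, and the care you take separating the stretch $\epsilon$, the junction tree slack $\epsilon_0$, and the final slack $\epsilon'=\epsilon_0/2$ is exactly the bookkeeping the paper leaves implicit.
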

For some insight into the behavior of $\eta$, note first that for all $\beta \geq 3$, $\eta \geq 6$. Additionally, the $\eta$ function grows faster than $\ln{\be}$, but much slower than $\be$.

\subsubsection{Undirected Hopsets with Odd Stretch}
Trading off with the existential result of~\cite{TZ05}, we get the following:
\begin{restatable}{thm}{undirGenStretch} \label{thm:undir_gen_stretch}
    Let $k \geq 1$ be an integer. When $\be = \widetilde{O}(k^{-1/2} \cdot \, n^{\frac{1}{2} - \frac{1}{2k}}) $, there is a polynomial-time $\widetilde{O}(\sqrt{k \be \vphantom{\be^k}} \, \cdot \, n^{\frac{1}{2} + \frac{1}{2k} + \epsilon})$-approximation for undirected stretch-$(2k-1)$ {\hopset}, where $\epsilon > 0$ is an arbitrarily small constant.
\end{restatable}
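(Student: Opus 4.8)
The plan is to derive this bound, exactly as for the other theorems of Section~\ref{sec:existential}, by running two polynomial-time approximation algorithms whose guarantees are parameterized by $\opt$, keeping the cheaper output, and taking the worst case over $\opt$. The first ingredient is the junction tree algorithm of Section~\ref{sec:junction_tree} (Theorem~\ref{thm:junction_tree}), an $\widetilde{O}(\be n^{\epsilon}\cdot\opt)$-approximation for directed {\hopset}. First I would observe that this applies unchanged here: an undirected graph is a special case of a directed one, and Theorem~\ref{thm:junction_tree} makes no assumption on the stretch, so it in particular handles the distance bound function $Dist(s,t)=(2k-1)\,d_G(s,t)$. The second ingredient is the algorithm of Corollary~\ref{cor:existential_undir_k}: the Thorup--Zwick distance oracle~\cite{TZ05} yields, in polynomial time, a stretch-$(2k-1)$ hopset of size $O(kn^{1+1/k})$ with hopbound $2$, which is a fortiori a valid hopbound-$\be$ hopset for every $\be\ge 2$, hence an $O(kn^{1+1/k}/\opt)$-approximation.

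Next I would combine them: run both algorithms and return the smaller solution, using the reduction already noted in Section~\ref{sec:star_sampling_rounding} to assume $\opt$ is known (try all $O(n^{2})$ values). The first ratio is nondecreasing in $\opt$ and the second is nonincreasing in $\opt$, so the combined approximation ratio is $\widetilde{O}\bigl(\max_{\opt}\min\{\be n^{\epsilon}\opt,\ kn^{1+1/k}/\opt\}\bigr)$, and the inner minimum is maximized where the two terms agree, at $\opt^{*}=\sqrt{kn^{1+1/k}/(\be n^{\epsilon})}$. Plugging this back in, both terms equal $\sqrt{\be n^{\epsilon}\cdot kn^{1+1/k}}=\sqrt{k\be}\cdot n^{1/2+1/(2k)+\epsilon/2}$; renaming $\epsilon/2$ as $\epsilon$ (permissible since $\epsilon$ is an arbitrarily small constant) gives the claimed $\widetilde{O}(\sqrt{k\be}\cdot n^{1/2+1/(2k)+\epsilon})$. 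I would then verify that $\opt^{*}$ lies in the feasible range $[1,n^{2}]$ so that the crossover is actually attained: $\opt^{*}\le n^{2}$ is immediate (using $\be\ge 2$), and $\opt^{*}\ge 1$ holds as long as $\be\le kn^{1+1/k-\epsilon}$, which is comfortably implied by the hypothesis $\be=\widetilde{O}(k^{-1/2}n^{1/2-1/(2k)})$. If $\opt^{*}$ were to fall outside $[1,n^{2}]$, the min would be monotone on that interval and the ratio only smaller, so nothing breaks.

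The only place that really needs care is the statement's restriction on $\be$. The two-algorithm trade-off above goes through for essentially all $\be\ge 2$; the condition $\be=\widetilde{O}(k^{-1/2}n^{1/2-1/(2k)})$ is there because this is precisely the range in which the trade-off between the junction tree algorithm and the Thorup--Zwick construction is the binding one, i.e.\ the one that dominates the other trade-offs available for undirected stretch-$(2k-1)$ hopsets---in particular the folklore construction of Corollary~\ref{cor:existential_folklore} and the star-sampling/randomized-rounding algorithm of Section~\ref{sec:star_sampling_rounding}, whose guarantees do not improve with the stretch---so for larger $\be$ one of those gives a better bound. Confirming this crossover between trade-offs, and tracking the hidden polylogarithmic and $k$-dependent factors through the $\widetilde{O}(\cdot)$, is the main bit of bookkeeping; everything else is an immediate consequence of Theorem~\ref{thm:junction_tree} and~\cite{TZ05}.
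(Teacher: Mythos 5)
Your proposal matches the paper's proof exactly: the paper, in its terse proof, says the bound follows by trading off the junction tree algorithm (Theorem~\ref{thm:junction_tree}, an $\widetilde{O}(\be n^{\epsilon}\cdot\opt)$-approximation) against the Thorup--Zwick based $O(kn^{1+1/k}/\opt)$-approximation of Corollary~\ref{cor:existential_undir_k}, which is precisely the trade-off you carry out. Your computation of the crossover point and the resulting $\widetilde{O}(\sqrt{k\be}\cdot n^{1/2+1/(2k)+\epsilon})$ bound is correct, and your interpretation of the range restriction on $\be$ (the regime where this particular trade-off dominates the others, e.g.\ the folklore trade-off of Theorem~\ref{thm:big_be_dir_gen}) is the right reading.
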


\fi


\iflong
\section{Label Cover Hardness for Directed Hopsets and Shortcut Sets} \label{app:hardness}
In this section we prove Theorem~\ref{thm:LB-main}. 
 In particular, we prove hardness of approximation for directed shortcut sets on directed graphs, which immediately implies hardness for directed hopsets for any stretch bound as long as the hopbound is at least $3$.  We use a reduction from Min-Rep (the natural minimization version of \LabelCover) in a very similar way to the hardness proofs for graph spanners from~\cite{Kor01,EP07,DKR16,CD16}.  The main technical difficulty / difference is that these previous reductions for spanners heavily use the fact that only edges from the original graph can be included in the spanner.  This is not true of shortcut sets (or hopsets), and makes it simpler to reason about the space of ``all feasible spanners'' than it is to reason about the space of ``all possible shortcut sets''.  

The Min-Rep problem, first introduced by~\cite{Kor01} and since used to prove hardness of approximation for many network design problems, can be thought of as a minimization version of \LabelCover with the additional property that the alphabets are represented explicitly as vertices in a graph.  A Min-Rep instance consists of an undirected bipartite graph $G = (A, B, E)$, together with partitions of $A_1, A_2, \dots, A_{m}$ of $A$ and $B_1, B_2, \dots, B_m$ of $B$ with the property that $|A_i| = |A_j| = |B_i| = |B_j$ for every $i,j \in [m]$\footnote{In some versions of the problem we do not assume that each partition has the same number of parts or that each $|A_i|=|B_i|$, but those assumptions can both be made without loss of generality so we do so for convenience}.  Each part of one of the partitions is called a \emph{group}.  If there is at least one edge between $A_i$ and $B_j$, then we say that $(i,j)$ is a \emph{superedge}.  Our goal is to choose a set $S \subseteq A \cup B$ of vertices of $V$ if minimum size so that, for every superedge $(i,j)$, there is some vertex $x \in A_i \cap S$ and $y \in B_j \cap S$ with $\{x,y\} \in E$.  In other words, we must choose vertices to \emph{cover} each superedge.  Any such cover is called a REP-cover, and our goal is to find the minimum size REP-cover.  Note that we must choose at least one vertex from every group (assuming that each group is involved in at least one superedge), and hence $OPT \geq 2m$.  

Since Min-Rep is essentially \LabelCover, which is a two-query PCP, the PCP theorem implies hardness of approximation for Min-Rep.  More formally, the following hardness is known~\cite{Kor01}.

\begin{theorem}[\cite{Kor01}] \label{thm:MinRep-hardness}
    Assuming that $NP \not\subseteq DTIME(2^{polylog(n)})$, then for any constant $\epsilon > 0$ there is no polynomial-time algorithm that approximates Min-Rep better than $2^{\log^{1-\epsilon} n}$.  In particular, no polynomial time algorithm can distinguish between when $OPT = 2m$ and when $OPT \geq 2m \cdot 2^{\log^{1-\epsilon} n}$
\end{theorem}

\subsection{Reduction from Min-Rep to shortcut sets}
We can now give our formal reduction from Min-Rep to shortcut sets with hopbound $\beta \geq 3$.  Consider some instance of Min-Rep $G = (A, B, E)$ with partition $A_1, A_2, \dots, A_m$ and $B_1, B_2, \dots, B_m$.  First we create two nodes for every group: let 
\begin{align*}
&A' = \{a_1^1, a_1^2, a_2^1, a_2^2, \dots, a_m^1, a_m^2\} & \text{and} &  &B' = \{b_1^1, b_1^2, b_2^1,b_2^2, \dots, b_m^1, b_m^2\}.
\end{align*}
Then for every edge $e = \{a, b\}$, we create a set of $\beta - 3$ vertices $V_e = \{v_e^1, v_e^2, \dots, v_e^{\beta-3}\}$ (note that these sets are empty if $\beta=3$).  Our final vertex set is 
\begin{align*}
    V' = V \cup A' \cup B' \cup (\cup_{e \in E} V_e).
\end{align*}

We now create the (directed) edges of the graph.  For each $e = \{a,b\} \in E$ we create a path:
\begin{align*}
    P_e &= \{(a, v_e^1)\} \cup \{(v_e^i, v_e^{i+1}) : i \in [\beta-4]\} \cup \{(b, v_e^{\beta-3})\}.
\end{align*}
If $\beta = 3$, we simply set $P_e = \{(a,b)\}$.  We can now specify our final edge set:
\begin{align*}
    E' = &\left(\cup_{e \in E} P_e\right) \\
    &\cup \{(a_i^1, a_i^2) : i \in [m]\} \cup \{(b_i^2, b_i^1) : i \in [m]\} \\
    &\cup \{(a_i^2, x) : i \in [m], x \in A_i\} \cup \{(x, b_i^2) : i \in [m], x \in B_i\}.
\end{align*}

Our final shortcut set instance is $G' = (V', E')$ with hopbound $\beta$.

\subsection{Analysis}
The analysis of this reduction has two parts, completeness and soundness (so-called due to their connections back to probabilistically checkable proofs).  For completeness, we show that if the original Min-Rep instance has $OPT = \gamma$ then there is a shortcut set of size at most $\gamma$.  For soundness, we show that if there is a shortcut set of size $\gamma$, then there is a REP-cover of size at most $O(\gamma)$.  Together with Theorem~\ref{thm:MinRep-hardness}, these will imply our desired hardness bound for shortcut sets.  To get Theorem~\ref{thm:LB-main}, one just needs to observe that in $G'$, if there is a path from some node $u$ to some node $v$ then in fact \emph{every} path from $u$ to $v$ has exactly the same length.  Hence any shortcut set is a hopset with the same hopbound and stretch $1$, and any hopset with stretch $\alpha$ is also shortcut set.

\subsubsection{Completeness}
We begin with completeness, which is the easier direction.  We prove the following lemma.

\begin{lemma} \label{lem:completeness}
    If there is a REP-cover of the Min-Rep instance with size $\gamma$, then there is a shortcut set for $G'$ with hopbound $\beta$ and size $\gamma$.
\end{lemma}
\begin{proof}
Suppose that there is a REP-cover $S \subseteq V$ with $|S|= \gamma$.  Consider the edge set 
\begin{align*}
    S' &= \{(a_i^1, x) : i \in [m], x \in S \cap A_i\} \cup \{(x, b_i^1) : i \in [m], x \in S \cap B_i\}.
\end{align*}
In other words, we add an edge between each ``outer'' node representing a group and the nodes in the REP-cover that are in that group (directed oppositely for the left and right sides of the bipartite graph).  Clearly $|S'| = |S| = \gamma$, so it remains to show that $S'$ is a shortcut set with hopbound $\beta$.  

An easy observation is that in $G'$, the only nodes $x, y$ where $y$ is reachable from $x$ but there are more than $\beta$ hops on every $x \leadsto y$ path are when $x = a_i^1$ and $y = b_j^1$ for some $i,j \in [m]$ where $(i,j)$ is a superedge.  So consider such an $a_i^1, b_j^1$.  Then since $(i,j)$ is a superedge and $S$ is a REP-cover, we know that there is some $a,b \in S$ with $a \in A_i$ and $b \in B_j$ and $\{a,b\} \in E$.  Then by definition $S'$ includes the edges $(a_i^1, a)$ and $(b, b_j^1)$.  Thus there is a path from $a_i^1$ to $b_j^1$ which first uses the edge to $a$ added by $S$, then uses the edges of $P_{\{a,b\}}$ to get to $b$, and then uses the edge to $b_j^1$ added by $S$.  This path has exactly $\beta$ hops by construction.  Hence $S'$ is a shortcut set with hopbound $\beta$.
\end{proof}

\subsubsection{Soundness}
We now argue about soundness.  To do this, we first need the following definition.

\begin{definition} \label{def:canonical}
    A shortcut edge is called \emph{canonical} if it is of the form $(a_i^1, a)$ for some $i \in [m]$ and $a \in A_i$ or of the form $(b, b_i^1)$ for some $i \in [m]$ and $b \in B_i$.  A shortcut set $S'$ is called \emph{canonical} if every edge in $S'$ is canonical. 
\end{definition}

We now show that without loss of generality, we may assume that any shortcut set is canonical.

\begin{lemma} \label{lem:canonical}
    Suppose that $S$ is a shortcut set of $G'$ with hopbound $\beta$ and size $\gamma$.  Then there is a canonical shortcut set $S'$ of $G'$ with hopbound $\beta$ and size at most $2 \gamma$.
\end{lemma}
\begin{proof}
    Consider some $(x,y) \in S$ which is not canonical.  Since it is not canonical but also must exist in the transitive closure of $G'$, there must exist a super edge $(i,j)$ so that $x$ and $y$ are as follows.
    \begin{itemize}
        \item $x$ is either $a_i^1, a_i^2$, is in $A_i$, or is in $P_e$ for some $e = \{a,b\}$ with $a \in A_i$ and $b \in B_j$.
        \item $y$ is either $b_j^1, b_j^2$, is in $B_j$, or is in $P_e$ for some $e = \{a,b\}$ with $a \in A_i$ and $b \in B_j$.
    \end{itemize}
    In particular, this implies that $(x,y)$ is only useful in decreasing the number of hops to at most $\beta$ for the pair $(a_i^1, b_j^1)$.  In other words, if we \emph{removed} $(x,y)$ from $S$, then the only pair of nodes which be reachable but not within $\beta$ hops would be $(a_i^1, b_j^1)$.  So choose some arbitrary edge $\{a,b\} \in E$ with $a \in A_i$ and $b \in B_j$, and let $(a_i^1, a)$ and $(b, b_j^1)$ be the \emph{replacement edges} for $(x,y)$.  Note that these replacement edges are both canonical, and suffice to decrease the number of hops from $a_i^1$ to $b_j^1$ to $\beta$.

    So we let $S'$ consist of all of the canonical edges of $S$ together with the two replacement edges for every noncanonical edge in $S$.  Clearly $|S'| \leq 2|S| =2\gamma$, and $S'$ is a shortcut set with hopbound $\beta$.  
\end{proof}

With this lemma in hand, it is now relatively straighforward to prove soundness.

\begin{lemma} \label{lem:soundness}
    If there is a shortcut set for $G'$ with hopbound $\beta$ and size $\gamma$, then there is a REP-cover of the Min-Rep instance with size at most $2\gamma$.
\end{lemma}
\begin{proof}
    Let $\hat S$ be a shortcut set for $G'$ with hopbound $\beta$ and size $\gamma$.  By Lemma~\ref{lem:canonical}, there is a shortcut set $S'$ for $G'$ with hopbound $\beta$ and size at most $2\gamma$.  Let $S$ be the set of nodes such that they are an endpoint of some edge in $S'$ and are in $A \cup B$.  In other words, for every canonical edge $(a_i^1, a) \in S'$ we add $a$ to $S$, and similarly for every canonical edge $(b, b_i^1) \in S'$ we add $b$ to $S$.  Clearly $|S| \leq |S'| \leq 2\gamma$, so we just need to prove that $S$ is a REP-cover.

    Consider some superedge $(i,j)$.  Then $G'$, we know that $a_i^1$ can reach $b_j^1$ but every such path has length $\beta+2$.  Since $S'$ is a canonical shortcut set, this means that it must add an edge $(a_i^1, a)$ and an edge $(b, b_j^1)$ for some $a \in A_i$ and $b \in B_i$ with $(a,b) \in E$.  Hence $a,b \in S$ by definition.  Thus $S$ is a REP-cover as claimed.  
\end{proof}
\else 
\fi

\bibliographystyle{alpha}
\bibliography{refs}

\iflong \else

\appendix

\setcounter{equation}{0}
\renewcommand{\theequation}{\thesection.\arabic{equation}}
\setcounter{theorem}{0}
\renewcommand{\thetheorem}{\thesection.\arabic{theorem}}
\setcounter{algocf}{0}
\renewcommand{\thealgocf}{\thesection.\arabic{algocf}}
\input{appendix}

\fi

\end{document}